\def\DRAFT{}
\author{Submission 278}
\title[Feasible Joint Posterior Beliefs]{Feasible Joint Posterior Beliefs}
\title{Feasible Joint Posterior Beliefs\footnote{This paper greatly benefited from multiple suggestions and comments of our colleagues. We are  grateful  (in alphabetic order) to Kim Border, Ben Brooks, Laura Doval, Piotr Dworczak, Nikita Gladkov, Sergiu Hart, Kevin He, Aviad Heifetz, Yuval Heller,   Matthew Jackson, Eliott Lipnowski, Jeffrey Mensch, Benny Moldovanu,   In\'{e}s Moreno de Barreda, Stephen Morris,  Alexander Nesterov, Abraham Neyman, Michael Ostrovsky, Thomas Palfrey, Jim Pitman, Luciano Pomatto, Doron Ravid, Marco Scarsini, Eilon Solan, Theodore Zhu, Gabriel Ziegler, and seminar participants at Bar-Ilan Univeristy, Caltech, Hebrew University, HSE St.\,Petersburg, Technion, Tel Aviv University, Stanford, and UC San Diego.}}
\author{ \large Itai Arieli\thanks{Technion, Haifa (Israel). Itai Arieli is supported by the Ministry of Science and Technology (\#2028255).}\ \ \ %
Yakov Babichenko\thanks{Technion, Haifa (Israel). Yakov Babichenko is supported by a BSF award (\#2018397).} \ \ \ %
Fedor Sandomirskiy\thanks{Technion, Haifa (Israel) and Higher School of Economics, St.Petersburg (Russia).  Fedor Sandomirskiy is partially supported by the Lady Davis Foundation, by
	Grant 19-01-00762 of the Russian Foundation for Basic Research, by the European Research Council (ERC) under the European Union's Horizon 2020 research and innovation program (\#740435), and by the Basic Research Program of the National Research University Higher School of Economics. 	} \ \ \ %
Omer Tamuz\thanks{Caltech. Omer Tamuz is supported by a grant from the Simons Foundation (\#419427), by a BSF award (\#2018397), and by a
  Sloan Fellowship.}}
\definecolor{ForestGreen}{rgb}{.13,.54,.13}
\definecolor{violet}{cmyk}{0.79,0.88,0,0}
\newcommand{\fed}[1]{{\color{ForestGreen}{(\textbf{Fedor:} #1)}}}
\newcommand{\yakov}[1]{{\color{red}{(\textbf{Yakov:} #1)}}}
\newcommand{\fed}[1]{}
\newcommand{\yakov}[1]{}
\newtheorem{theorem}{Theorem}
\newtheorem{lemma}{Lemma}
\newtheorem{corollary}{Corollary}
\newtheorem{proposition}{Proposition}
\newtheorem{definition}{Definition}
\newtheorem{thmx}{Theorem}
\newtheorem{corx}{Corollary}
\theoremstyle{definition}
\theoremstyle{remark}
\newcommand{\E}{\mathbb{E}}
\newcommand{\bP}{\mathbb{P}}
\newcommand{\cP}{\mathcal{P}}
\newcommand{\p}{P}
\newcommand{\q}{Q}
\newcommand{\R}{\mathbb{R}}
\newcommand{\supp}{\mathrm{supp}\,}
\def\dd{\mathrm{d}}
\def\cL{\mathcal{L}}
\def\cF{\mathcal{F}}
\def\cG{\mathcal{G}}
\def\cA{\mathcal{A}}
\def\cB{\mathcal{B}}
\def\cC{\mathcal{C}}
\newcommand{\argmax}{\operatornamewithlimits{argmax}}
\newcommand{\half}{\nicefrac{1}{2}}
\begin{document}

\ifdefined\EC
\else
\maketitle
\fi

\begin{abstract}
We study the set of possible joint posterior belief distributions of a group of agents who share a common prior regarding a binary state, and who observe some information structure. For two agents we introduce a quantitative version of Aumann's Agreement Theorem, and show that it is equivalent to a characterization of feasible distributions due to \cite{dawid1995coherent}. For any number of agents, we characterize feasible distributions in terms of a ``no-trade'' condition. We use these characterizations to study information structures with independent posteriors. We also study persuasion problems with multiple receivers, exploring the extreme feasible distributions.

%  We study the set of possible joint posterior belief distributions of a group of agents who share a common prior regarding a binary state, and who observe some information structure. For two agents we introduce a quantitative version of Aumann's Agreement Theorem, and show that it is equivalent to a characterization of feasible distributions due to \cite{dawid1995coherent}. For any number of agents, we characterize feasible distributions in terms of a ``no-trade'' condition. We use these characterizations to study information structures in which agents' posteriors are independent. We also study a class of persuasion problems with multiple receivers, and explore the extreme points of the set of feasible distributions.

  %We study the set of possible joint posterior belief distributions of a group of agents who share a common prior regarding a binary state and who observe some information structure. Our main result is that, for the two agent case, a quantitative version of Aumann's Agreement Theorem provides a necessary and sufficient condition for feasibility. For any number of agents, a related ``no trade'' condition likewise provides a characterization of feasibility. We use our characterization to study joint belief distributions in which agents are informed regarding the state, and yet receive no information regarding the other's posterior. We study a related class of Bayesian persuasion problems with a single sender and multiple receivers, and explore the extreme points of the set of feasible distributions.
\end{abstract}

\ifdefined\EC
\maketitle
\fi

%\fed{EC style is turned on by uncommenting ``defEC'' at the first line of the doc.}

%\fed{Comments and colorful text are turned off by commenting ``defDRAFT'' (the second line of the doc)}

\section{Introduction}
The question of whether agents' observed behavior is compatible with rationality is fundamental and pervasive in microeconomics. A particularly interesting case is that of beliefs: when are agents' beliefs compatible with Bayes' Law?  

Consider a single agent who has a prior belief regarding an uncertain event, and who updates to a posterior belief after receiving some information. An analyst observes the prior and the distribution of posteriors, but not the information structure. When is the distribution of posterior beliefs feasible? That is, when is it compatible with Bayesian updating according to some information structure? Feasibility in this case corresponds to what is known as the martingale condition: the expectation of the posterior must equal the prior. This fundamental result is known as the Splitting Lemma \citep[see][]{blackwell1951comparison,aumann1995repeated}, and is a key tool in the theories of Bayesian persuasion \citep{kamenica2011bayesian} and of games with incomplete information \citep{aumann1995repeated}.

Thus, in the case of a single agent, the analyst can readily determine feasibility.\footnote{The question becomes more subtle when the agent is exposed to a collection of information sources. \cite*{brooks2019information} demonstrated that the \emph{feasible family} of  distributions that arises as we vary the subset of information sources does not admit a simple characterization even in the single agent case.} We ask the same question, but regarding groups of more than one agent. In particular, we consider a group of agents for which one can observe a common prior regarding a binary state, as well as a joint distribution of posteriors. This distribution is said to be feasible if it is the result of Bayesian updating induced by some joint information structure observed by the agents. Is there also in this case a simple characterization of feasibility, that does not require the analyst to test infinitely many possible information structures?

Clearly, feasibility implies that each agent's posterior distribution must satisfy the martingale condition. An additional important obstruction to feasibility is given by Aumann's seminal Agreement Theorem \citep{aumann1976agreeing}. Aumann showed that rational agents cannot agree to disagree on posteriors: when posteriors are common knowledge then they cannot be unequal. This implies that feasibility is precluded if, for example, agent 1 has posterior $1/5$ whenever agent 2 has posterior $2/5$, and likewise agent 2 has posterior $2/5$ whenever agent 1 has posterior $1/5$. There are, however, examples of distributions that are not feasible, even though they do not involve agents who agree to disagree as above. Thus the Agreement Theorem does not provide a necessary and sufficient condition for feasibility. Our first result is a quantitative version of the Agreement Theorem: a relaxation that constrains beliefs to be approximately equal when they are approximately common knowledge.

For the case of two agents, a characterization of feasible distributions was obtained by \cite*{dawid1995coherent}. Their criterion of feasibility takes a form of a family of inequalities, which arise from results due to  \cite{kellerer1961Funktionen} and \cite{strassen1965existence}. As we explain, these inequalities correspond exactly to those in our quantitative Agreement Theorem, which thus provides a necessary and sufficient condition for feasibility: a joint posterior belief distribution is feasible if and only if agents do not approximately agree to disagree.

%it follows from the work of \cite{dawid1995coherent} We show that these inequalities correspond exactly Our first methodological contribution is a connection between this criterion and  a quantitative version of the Agreement Theorem; a relaxation that constrains beliefs to be approximately equal when they are approximately common knowledge. 
%In our second main result we show that for two agents these constraints provide a necessary and sufficient condition for feasibility. This condition is not as simple as the martingale condition, \new{but it} is an intrinsic property of the distribution that can be checked directly. For example, this condition can be used to easily verify or refute feasibility for finitely supported distributions.

%To the best of our knowledge, this problem has not been studied in the literature beyond the two agent case. %\footnote{For more than two agents, \citet[p.\ 288]{dawid1995coherent} say that their result has no simple counterpart.} 
For any number of agents, we find a characterization of feasibility via a ``no-trade'' condition. To derive this condition we introduce---as an analytic tool---a mediator who trades a good of uncertain value with the agents. When an agent trades, it is at a price that reflects her expected value, and thus the agents do not gain or lose in expectation. This implies that the mediator likewise does not turn a profit, and so by bounding the mediator's profit by zero we attain  an obviously necessary condition for feasibility. We show that this condition is also sufficient. Thus, for any number of agents, a joint posterior belief distribution is feasible if and only if no mediator can extract an expected profit from a trading scheme in which each agent expects to lose nothing.

We apply these characterizations to study {\em independent} joint posterior belief distributions: these are induced by information structures in which each agent receives non-trivial information regarding the state, and yet gains no information about the others' posterior. We give a simple condition for feasibility of independent distributions in the case of two agents with identically distributed, symmetric posteriors: such distributions are feasible if and only if the uniform distribution on $[0,1]$ is a mean preserving spread of the single agent belief distribution.

The set of feasible distributions plays an important role in Bayesian persuasion problems, and in particular in what we call {\em first-order} Bayesian persuasion. A first-order Bayesian persuasion problem includes a single sender and multiple receivers. First, the sender chooses an information structure to be revealed to the agents. Then, each receiver chooses an action. Finally, both the receivers and the sender receive payoffs. The sender's utility depends generally on the receivers' actions. The key assumption is that each receiver's utility depends only on the unknown state and her own action, and so only her first-order beliefs matter for her choice; we therefore refer to these problems as first-order persuasion problems. As each receiver's action depends only on her own posterior, and as the sender's utility depends on the actions of the receivers, the sender's problem is to choose an optimal feasible joint posterior belief distribution for the receivers.

As an example of a first-order Bayesian persuasion problem, we study the problem of a ``polarizing'' sender who wishes to maximize the difference between two receivers' posteriors. This example highlights the limitations imposed by feasibility, as the sender cannot hope to always achieve complete polarization in which one receiver has posterior 0 and the other posterior 1; such joint posterior distributions are precluded by the Agreement Theorem. We solve this problem in some particular regimes, and show that non-trivial solutions exist in others, depending on the details of how the difference between posteriors is quantified.

A related question is how anti-correlated feasible posteriors can be. While they can clearly be perfectly correlated, by the Agreement Theorem they cannot be perfectly anti-correlated. This question can be phrased as a first-order Bayesian persuasion problem, which we solve. When the two states are a priori equally likely, we show that the covariance between posteriors has to be at least $-1/32$, and construct a feasible distribution, supported on four points, that achieves this bound.\footnote{The questions of first-order persuasion for polarization and anti-correlation were mentioned by \citet[Problem 5.2]{burdzy2020bounds} as open problems.}

%The question of first-order Bayesian persuasion is closely tied to another open problem mentioned by \cite{burdzy2020bounds}: understanding the structure of the extreme points of the set of feasible distributions.

The question of first-order Bayesian persuasion is closely tied to the study of the extreme points of the set of feasible distributions, as the sender's optimum is always achieved at an extreme point. In the single agent case, the well-known concavification argument of \cite{aumann1995repeated} and \cite{kamenica2011bayesian} shows that every extreme point of the set of feasible posterior distributions has support of size at most two. In contrast, we show that for two or more agents there exist extreme points with countably infinite support. In the other direction, we show that every extreme point is supported on a set that is small, in the sense of having zero Lebesgue measure. To this end, we do not use our characterization of feasibility, but rather a classical result of \cite{lindenstrauss1965remark} regarding extreme points of the set of measures with given marginals. Likewise, our analysis of first-order persuasion is not based on the aforementioned characterizations of feasibility; to study these problems we exploit the fact that conditional expectations are orthogonal projections in the Hilbert space of square-integrable random variables.

This work leaves open a number of interesting questions. In particular, we leave for future studies the extension of our work beyond the setting of a binary state and common priors. Likewise, the study of the extreme feasible distributions is far from complete. For example, we do not have a simple characterization of extreme points. We also do not know if there are non-atomic extreme points.

\subsection*{Related literature}
%studied the same problem of joint posterior beliefs feasibility. He derives a necessary but not sufficient condition for feasibility.
%\footnote{\cite{ziegler2020adversarial} erroneously states his condition as necessary and sufficient. He relies on a necessary condition for existence of copulas with given $2$-dimensional marginals~\citep[Theorem 3.11]{joe1997multivariate}, which, however, is not sufficient.} 

 \paragraph{Coherent  experts' opinions.} The study of feasible joint distributions of posteriors was pioneered by \cite{dawid1995coherent}. They were motivated by the question of aggregating the opinions of experts who rely on different information sources. As a byproduct of their analysis, \cite{dawid1995coherent} characterized the set of feasible distributions for two agents and a binary state; the detailed discussion of this result is in \S\ref{sec:model}. Their foundational paper and the mathematical literature inspired by it seem to be  known little by the economic theory community.\footnote{We are indebted to Jim Pitman for introducing us to this literature.} We refer to \cite{burdzy2020bounds} and \cite{burdzy2019contradictory} for the references to the literature on experts, summary of the results known in the two-agent case (the main focus of this literature), and tight bounds on the probability that the pair of posteriors differ by more than $\delta$. Maximizing the latter probability can be seen as an example of a first-order Bayesian persuasion problem. 
Another example is offered by \cite{dubins1980maximal} who found the distribution maximizing the expected maximal posterior for any number of agents. A  particular case of this result for two agents follows from our analysis; see the discussion after Proposition~\ref{prop:quadratic-persuasion}.
Independently and concurrently \cite{cichomski2020maximal}  proved an analog of our Proposition~\ref{prop:quadratic-persuasion} using a Hilbert-space technique similar to ours. \cite{gutmann1991existence} discovered an analog of the characterization by \cite{dawid1995coherent} in the particular case of independent posteriors and demonstrated that the uniform distribution on the square is feasible; they did not discover the special role played by this distribution  (see our Proposition~\ref{thm:product-feasible}). %\fed{Do we want to mention  Theodore Zhu's unpublished paper about extreme points? It is announced in~\cite{burdzy2020bounds} so perhaps we should}

\paragraph{Information design and necessary conditions for feasibility.} Recently, several necessary conditions for feasibility appeared in the economic literature studying information design with bounded-rational receivers.  Independently from us, \cite{ziegler2020adversarial} considered an information design setting with two receivers, and derived a necessary condition for feasibility. His condition is sufficient for the case of binary signals. We further discuss Ziegler's condition in \S\ref{sec:intervals}. Another necessary condition was found by \citet*[Proposition 4]{levy2018persuasion}; it 
is equivalent  to the corollary of Aumann's Agreement Theorem (Corollary~\ref{cor:agreement}) and, hence, is not sufficient. \cite{levy2018persuasion} also offered several recipes of how to construct feasible distributions starting from infeasible ones.

\paragraph{Alternative approaches to multi-agent information design.}
\cite*{mathevet2017information} studied Bayesian persuasion with multiple receivers and a finite number of signals. They found an implicit characterization of feasibility: considering the entire belief hierarchy, they showed that feasibility is equivalent to the consistency of the hierarchy.\footnote{Certain compatibility questions for belief hierarchies (without application to the feasibility of joint belief distribution or Bayesian persuasion) were recently addressed by \cite*{brooks2019information}.}.  \cite{bergemann2019information},  \cite{taneva2019information}, and \cite{arieli2019private} related the optimal information disclosure to the best Bayes Correlated Equilibrium from the sender's perspective. Even if receivers' actions are not free of externalities, finding the best such equilibrium leads to a linear program. This linear program happens to be tractable for a finite number of actions (\cite{taneva2019information} and \cite{arieli2019private} focused on the binary case). Our approach is conceptually closer to the geometric point of view on persuasion of \cite{kamenica2011bayesian}, where the distribution of posteriors plays the key role. Our approach helps ``visualize'' the solution and does not require the set of actions to be finite,\footnote{In the example of a polarizing sender that we discuss in \S\ref{sec:applications}, receivers have a continuum of actions (the set of actions coincides with the set of possible posterior beliefs), and the above linear-programming approach leads to an infinite-dimensional program.} however, it is limited to the first-order persuasion problems, i.e., we rule out strategic externalities.\footnote{In Section \ref{sec:applications} we provide several applications of our results and study feasible correlation of posterior beliefs. Related questions arise in numerous papers on information design. See, e.g., \cite{ely2017beeps} and \cite*{bergemann2020information}.}

\paragraph{ Common prior and no-trade.}
A related question to ours is the common prior problem studied in the literature on interactive epistemology. Concretely, an Aumann knowledge partition model with finite state space and agents is considered. Each agent has a partition over the state space and to each partition element corresponds a posterior probability that is supported on that partition element. The question is whether there exists a {\em common prior}: a single probability distribution over the state space that gives rise to all the posterior distributions by means of conditional probability.
\citet[Theorem 1a]{morris1992role} offered a characterization for the existence of a common prior in no-trade terms thus providing a variant of the converse statement to the no-trade theorem of \citet{milgrom1982information}. 
This result was rediscovered by \citet{feinberg2000characterizing} and a simple geometric proof was given  by \citet{samet1998common}.

There is a fundamental distinction between the common prior problem and ours. While in the common prior problem the conditional probability is given and therefore the full belief hierarchy at every state can be inferred, in our case only the unconditional posterior is considered, and the belief hierarchy is not specified.\footnote{In the notation introduced in our model section below, the common prior problem can be phrased as follows. Fix a signal space $S_i$ for each agent $i$, and denote $\Omega = \prod_i S_i$. Say we are given, for each agent $i$ and signal realization $s_i$, a conditional distribution $\mathbb{Q}_{s_i}(\cdot)$ supported on the subset of $\Omega$ in which agent $i$'s signal is $s_i$. When does there exist a single probability measure $\mathbb{P}$ on $\Omega$ such that for $\mathbb{P}$-almost every $s_i$ it holds that $\mathbb{Q}_{s_i}(\cdot) = \mathbb{P}(\cdot|s_i)$?} Despite this distinction, there is a connection between the no-trade characterizations of a common prior and of feasible  distributions. In a follow-up paper, \cite{morris2020notrade} 
demonstrated how to deduce a no-trade characterization of feasibility similar to ours from his earlier characterization of a common prior. This approach leads to a characterization of feasibility for finitely-supported distributions and  arbitrary finite sets of states.\footnote{The restriction to finitely-supported distributions can possibly be eliminated by approximation arguments.} \cite{morris2020notrade} also offered a comprehensive discussion of the history of the no-trade approach to the common prior problem. 

%\citet{samet1998common} provides a simple characterization for the existence of a common prior using a hyperplane separation argument. \citet{feinberg2000characterizing} shows that a common prior exists if and only if agents cannot agree to disagree.

%\fed{NOW WHEN WE REFER TO THE FOLLOWUP MORRIS'S PAPER, WE SHOULD PROBABLY DELETE THIS PIECE AND THE FOOTNOTE: There is a fundamental distinction between the common prior problem and ours. While in the common prior problem the conditional probability is given and therefore the full belief hierarchy at every state can be inferred, in our case only the unconditional posterior is considered, and the belief hierarchy is not specified. Indeed, a given posterior belief distribution does not pin down uniquely the belief hierarchy. For this reason, our notion of a trading scheme uses a mediator rather than incorporating higher-order beliefs.}

\paragraph{Measures with given marginals.}
From a technical perspective, our characterization of feasibility relies on the existence of measures with given marginals. Instead of the classic results of \cite{kellerer1961Funktionen} and \cite{strassen1965existence} used by \cite{dawid1995coherent}, we apply a more recent result due to \cite{hansel1986probleme}. In the economic literature, such tools were applied by \cite*{gershkov2013equivalence} and \cite*{gershkov2018theory}. Our feasibility condition for product distributions (Proposition~\ref{thm:product-feasible}) shares some similarity with Border's condition of feasibility for reduced-form auctions; see \cite*{hart2015implementation}.

%See also the discussion in \cite{hart2015implementation}. 
%Our feasibility condition shares some similarity with the condition of feasibility for reduced-form auctions \citep{border1991implementation}.
%The follow-up papers \cite{hart2015implementation} and  also use a similar mathematical toolbox.

%\fed{About Footnote 1: I am not sure that restriction to $\Omega$ with product structure is w.l.o.g. It is safer to say ``Fix a state space $\Omega$, a signal space $S_i$ for each agent $i$  and (deterministic) signals $s_i:\ \Omega\to S_i$ for each $i$. Say we are given, for each agent $i$ and $s\in S_i$, a conditional distribution $\mathbb{Q}^i_{s}(\cdot)\in\Delta(\Omega)$ supported on $\{\omega\in\Omega\,:\,s_i(\omega)=s\}$. When does there exist a single probability measure $\mathbb{P}$ on $\Omega$ such that $\mathbb{Q}^i_{s_i}(\cdot) = \mathbb{P}(\cdot|s_i)$?''}

\bigskip

The remainder of the paper is organized as follows. In \S\ref{sec:model} we introduce the formal problem. In \S\ref{sec:feasible} we present our main results. Applications are presented in \S\ref{sec:applications}.  In \S\ref{sec:persuasion} we study first-order Bayesian persuasion and the extreme feasible beliefs. In \S\ref{sec:implementation} we study implementations of a given, feasible posterior distribution.  Further proofs are provided in the Appendix.

\section{Model}
\label{sec:model}
\paragraph{Information structures and posterior beliefs.} 

We consider a binary state space $\Omega = \{\ell,h\}$, and a set of $n$ agents $N = \{1,2,\ldots,n\}$. An {\em information structure} $I=((S_i)_{i\in N},\bP)$ consists of signal sets $S_i$ (each equipped with a sigma-algebra, which we suppress) for each agent $i$, and a distribution $\bP \in \Delta(\Omega\times S_1\times \cdots\times S_n)$. Let $\omega,\,s_1,\ldots,s_n$ be the random variables corresponding to the $n+1$ coordinates of the underlying space $\Omega\times S_1\times \cdots \times S_n$. When it is unambiguous, we also use $s_i$ to denote a generic element of $S_i$. 

The \emph{prior probability} of the high state is denoted by $p= \bP(\omega=h).$ Throughout the paper we assume that $p\in(0,1)$. All $n$ agents initially have prior $p$ regarding the state $\omega$. Then, each agent $i$ observes the signal $s_i$. The \emph{posterior belief} $x_i$ attributed to the high state by agent $i$ after receiving the signal $s_i$ is
$$
  x_i = \bP(\omega=h\mid s_i).
$$
We denote by $\p_I$ the \emph{joint distribution of posterior beliefs} induced by the information structure $I$. This probability measure on $[0,1]^N$ is the joint distribution of $(x_1,x_2,\ldots,x_n)$. I.e., for each measurable $B \subset [0,1]^N$, 
$$
  \p_I(\mathrm{B})=\bP\Big((x_1,\ldots,x_n)\in \mathrm{B}\Big).
$$

We similarly denote the \emph{conditional joint distributions of posterior beliefs} by $\p_I^\ell$ and $\p_I^h$; these are the joint distributions of $(x_1,\ldots,x_n)$, conditioned on the state $\omega$.

For a probability measure $\p\in \Delta([0,1]^N)$ and for $i \in N$, we denote by $\p_i$ the marginal distribution, i.e., the distribution of the projection on the $i$\textsuperscript{th} coordinate, or the distribution of the posterior of agent $i$.

\paragraph{Feasible joint posterior beliefs.}  

When is a given distribution on $[0,1]^N$ equal to the distribution of posterior beliefs induced by some information structure? This is the main question we study in this paper. The following definition captures this notion formally.
\begin{definition}
  Given $p \in (0,1)$, we say that a distribution $\p \in \Delta([0,1]^N)$ is \emph{$p$-feasible} if there exists some information structure $I$ with prior $p$ such that $\p = \p_I$.
  %\footnote{Equivalently, $\p\in\Delta([0,1]^n)$ is $p$-feasible if there is a standard Borel probability space $(\Omega,\mathcal{F},\mathbb{P})$, a set $A \in \mathcal{F}$ with $\mathbb{P}(A)=p$, and $n$ sub-sigma-algebras $\mathcal{F}_1,\ldots,\mathcal{F}_n \subseteq \mathcal{F}$ such that $\p$ is the joint distribution of $(\mathbb{P}(A|\mathcal{F}_1),\ldots,\mathbb{P}(A|\mathcal{F}_n))$.}
\end{definition}
When $p$ is understood from the context we will simply use the term ``feasible'' rather than $p$-feasible.

\paragraph{The single agent case.} 

Before tackling the question of feasibility for $n$ agents, it is instructive to review the well-understood case of a single agent. 

The so-called {\em martingale condition} states that the average posterior belief of a feasible posterior distribution is equal to the prior. Formally, given a $p$-feasible distribution $\p \in \Delta([0,1])$, it must be the case that
$$
  \int_0^1 x\, \dd\p(x)=p.
$$
The necessity of this condition for $p$-feasibility follows from the law of iterated expectation. For the single agent case, the martingale condition is necessary and sufficient for $\p$ to be $p$-feasible. This result is known as the Splitting Lemma.

\section{Feasible joint beliefs}
\label{sec:feasible}
\paragraph{The two agent case and the Agreement Theorem.} 

In the two agent case the martingale condition is not sufficient for feasibility. An additional obstruction to feasibility is given by Aumann's celebrated Agreement Theorem \citep{aumann1976agreeing}. We here provide a rephrasing of this theorem in a form that will be useful for us later.
\begin{thmx}[Aumann]
\label{thm:agreement}
Let $((S_i)_i,\bP)$ be an information structure, and let $B_1 \subseteq S_1$ and $B_2 \subseteq S_2$ be subsets of possible signal realizations for agents 1 and 2, respectively. If 
\begin{align}
    \label{eq:common-knowledge}
  \bP(s_1 \in B_1,\, s_2 \not \in B_2) = \bP(s_2 \in B_2,\, s_1 \not \in B_1) = 0
\end{align}
then
$$
  \E(x_1 \cdot \mathds{1}_{s_1 \in B_1}) = \E(x_2 \cdot \mathds{1}_{s_2 \in B_2}).
$$
\end{thmx}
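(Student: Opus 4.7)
The plan is to unpack hypothesis~(\ref{eq:common-knowledge}) as saying that the two indicator random variables $\mathds{1}_{s_1 \in B_1}$ and $\mathds{1}_{s_2 \in B_2}$ coincide $\bP$-almost surely: the first equality rules out mass on $\{s_1\in B_1\}\setminus\{s_2\in B_2\}$, and the second rules out mass on the reverse set difference. Let $E$ denote the common event (well-defined up to null sets). The aim is then to show that both $\E(x_1 \mathds{1}_{s_1\in B_1})$ and $\E(x_2 \mathds{1}_{s_2\in B_2})$ reduce to $\bP(\omega=h,\ E)$, which immediately gives the desired equality.

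To execute this, I would treat each side separately and use only the defining property of conditional expectation. For the left-hand side, note that $\mathds{1}_{s_1\in B_1}$ is measurable with respect to the $\sigma$-algebra generated by $s_1$, and $x_1=\bP(\omega=h\mid s_1)$. Hence
$$
\E\bigl(x_1 \mathds{1}_{s_1\in B_1}\bigr) = \E\bigl(\bP(\omega=h\mid s_1)\cdot \mathds{1}_{s_1\in B_1}\bigr) = \E\bigl(\mathds{1}_{\omega=h}\cdot \mathds{1}_{s_1\in B_1}\bigr) = \bP(\omega=h,\ s_1\in B_1).
$$
The same computation performed with index $2$ yields $\E(x_2\mathds{1}_{s_2\in B_2}) = \bP(\omega=h,\ s_2\in B_2)$.

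It then remains to verify that $\bP(\omega=h,\ s_1\in B_1)=\bP(\omega=h,\ s_2\in B_2)$. This follows from~(\ref{eq:common-knowledge}): the symmetric differences $\{s_1\in B_1\}\triangle\{s_2\in B_2\}$ are $\bP$-null, so intersecting each of them with $\{\omega=h\}$ keeps them null, and the two joint probabilities agree.

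There is no real obstacle here; the only subtlety worth flagging is the measurability move in the first display, which uses that $\mathds{1}_{s_1\in B_1}$ is $\sigma(s_1)$-measurable so that multiplying by $x_1=\bP(\omega=h\mid s_1)$ and taking expectation collapses to $\bP(\omega=h,\ s_1\in B_1)$ by the definition of conditional probability. Beyond that, the proof is just the observation that (\ref{eq:common-knowledge}) is the almost-sure identification of the two indicator events, so the whole argument is really a one-line application of the tower property together with a null-set bookkeeping step.
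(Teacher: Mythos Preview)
Your proof is correct and follows essentially the same approach as the paper. The paper does not prove Theorem~\ref{thm:agreement} directly but deduces it as the special case of the quantitative version (Theorem~\ref{thm:agreement-quant}) in which the bounding terms vanish; that proof uses precisely your two ingredients---the identity $\E(x_i\mathds{1}_{s_i\in B_i})=\bP(\omega=h,\,s_i\in B_i)$ via the tower property, and then comparison of $\bP(\omega=h,\,s_1\in B_1)$ with $\bP(\omega=h,\,s_2\in B_2)$ using the hypothesis on the symmetric difference.
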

To understand why this is a reformulation of the Agreement Theorem, note that condition \eqref{eq:common-knowledge} implies that 
$$
  \bP(s_1 \in B_1,\, s_2 \in B_2)  = \bP(s_1 \in B_1) = \bP(s_2 \in B_2),
$$  
and thus the event $\{s_1 \in B_1,\, s_2 \in B_2\}$ is self evident, i.e., is common knowledge whenever it occurs.\footnote{An event is self-evident if, whenever it occurs, all agents almost surely know that it has occurred. An event $A$ is common knowledge at an outcome $a\in A$ if $A$ contains a self-evident event that contains $a$.} % An event $A$ is common knowledge whenever it occurs if it is common knowledge at almost all $a \in A$.} 
 Hence this form of the Agreement Theorem states that if agents have common knowledge of the event $\{s_1 \in B_1,\, s_2 \in B_2\}$ then their average beliefs on this event must coincide. The original theorem follows by choosing $B_1$ and $B_2$ such that $x_1$ is constant on $B_1$ and $x_2$ is constant on $B_2$.  The statement of Theorem~\ref{thm:agreement} is close in form to that of the No-Trade Theorem of \cite{milgrom1982information}, which provides the same obstruction to feasibility; we discuss this further below. We do not provide a proof of Theorem~\ref{thm:agreement}, as it is a special case of our quantitative Agreement Theorem (Theorem~\ref{thm:agreement-quant}) which we prove below.

Using Theorem~\ref{thm:agreement} it is easy to construct examples of distributions that are not feasible, even though they satisfy the martingale condition. For example, consider the prior $p=\half$ and the distribution $\p=\frac{1}{2}\delta_{0,1}+\frac{1}{2}\delta_{1,0}$, where $\delta_{x_1,x_2}$ denotes the point mass at $(x_1,x_2)$. Clearly $\p$ satisfies the martingale condition. Nevertheless, it is not feasible, since under this distribution agents ``agree to disagree,'' when agent~1 attributes probability~$1$ to the high state and agent~2 considers the probability of this state to be $0$. Indeed, assume towards a contradiction that $\p$ is equal to $\p_I$, for some information structure $I = ((S_i)_i,\,\bP)$. Then the events $B_1 = \{s \in S_1\,:\, x_1(s) = 1\}$ and $B_2 = \{s \in S_2\,:\, x_2(s)=0\}$ satisfy the common knowledge condition \eqref{eq:common-knowledge}. But
$$
  \bP(\omega=h \mid s_1 \in B_1) = 1 \neq 0 = \bP(\omega = h \mid s_2 \in B_2),
$$
in contradiction to Theorem~\ref{thm:agreement}.

This example can be extended to a more general necessary condition for feasibility of a joint posterior distribution.
\begin{corx}
\label{cor:agreement}
Let $\p \in \Delta([0,1]^2)$ be feasible for some $p$, and let $A_1$ and $A_2$ be measurable subsets of $[0,1]$. Denote complements by $\overline{A_i} = [0,1] \setminus A_i$. If 
\begin{align}
  \label{eq:common-knowledge2}
  \p(A_1 \times \overline{A_2}) = \p(\overline{A_1} \times A_2) = 0
\end{align}
then
\begin{align}
    \label{eq:aumann-feasible}
  \int_{A_1} x\,\dd \p_1(x) = \int_{A_2} x\,\dd \p_2(x).
\end{align}
\end{corx}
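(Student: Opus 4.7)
The plan is to deduce the corollary by a direct application of Theorem~\ref{thm:agreement} after translating the condition on the distribution $\p$ of posteriors into a condition on an underlying information structure. Since $\p$ is feasible, by definition there is some information structure $I = ((S_i)_i, \bP)$ with prior $p$ such that $\p = \p_I$. The natural candidates for the sets $B_1, B_2$ in Theorem~\ref{thm:agreement} are
\[
B_1 = \{s_1 \in S_1 : x_1(s_1) \in A_1\}, \qquad B_2 = \{s_2 \in S_2 : x_2(s_2) \in A_2\},
\]
i.e., the signal events on which agent $i$'s posterior belongs to $A_i$.

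With this choice, the first step is to verify the common-knowledge hypothesis \eqref{eq:common-knowledge}. By definition of $\p_I$, for any measurable $B \subseteq [0,1]^2$ we have $\bP((x_1, x_2) \in B) = \p(B)$. In particular,
\[
\bP(s_1 \in B_1,\, s_2 \notin B_2) = \bP(x_1 \in A_1,\, x_2 \notin A_2) = \p(A_1 \times \overline{A_2}) = 0,
\]
and symmetrically for the other event. Thus the hypothesis of Theorem~\ref{thm:agreement} is satisfied, and we conclude that $\E(x_1 \cdot \mathds{1}_{s_1 \in B_1}) = \E(x_2 \cdot \mathds{1}_{s_2 \in B_2})$.

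The second step is simply to rewrite each side as the integral appearing in \eqref{eq:aumann-feasible}. Since $\mathds{1}_{s_i \in B_i} = \mathds{1}_{x_i \in A_i}$ by construction, the change-of-variables formula for the pushforward measure gives
\[
\E(x_i \cdot \mathds{1}_{s_i \in B_i}) = \E(x_i \cdot \mathds{1}_{x_i \in A_i}) = \int_{A_i} x\, \dd \p_i(x),
\]
where the last equality uses that $\p_i$ is the law of $x_i$ under $\bP$. Applying this for $i = 1, 2$ and combining with the equality from Theorem~\ref{thm:agreement} yields \eqref{eq:aumann-feasible}.

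There is no real obstacle here: the argument is essentially a translation, and the only thing to be slightly careful about is the measurability of $B_1, B_2$, which follows from the measurability of the posteriors $x_i$ as functions of $s_i$ together with the measurability of $A_1, A_2$.
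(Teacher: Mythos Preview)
Your proof is correct and follows essentially the same approach as the paper, which simply remarks that the corollary follows immediately from Theorem~\ref{thm:agreement} by setting $B_i = \{s_i : x_i(s_i) \in A_i\}$. You have merely spelled out the translation steps that the paper leaves implicit.
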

Corollary~\ref{cor:agreement} is a recasting of Theorem~\ref{thm:agreement} into a direct condition for feasibility:  condition \eqref{eq:common-knowledge2} states that the event $A_1 \times A_2$ is self evident: it has the same probability as $A_1$ and the same probability as $A_2$. And each of the two integrals in \eqref{eq:aumann-feasible} is equal to the average belief of agent $i$ conditioned on $A_i$, times the probability of $A_i$. Hence Corollary~\ref{cor:agreement} follows immediately from Theorem~\ref{thm:agreement}, by setting $B_i = \{s_i \,:\, x_i(s_i) \in A_i\}$ for $i=1,2$. The advantage of this formulation is that it takes the form of a direct condition on $\p$.\footnote{A particular case of Corollary~\ref{cor:agreement} appears in \citet[Theorem 5.2]{dawid1995coherent}; and the inaccuracy in that statement was later corrected by \cite[Proposition 2.1]{burdzy2020bounds}. \cite{levy2018persuasion} found a necessary condition for feasibility equivalent to Corollary~\ref{cor:agreement} for distributions with finite support. None of these papers mention the connection to the Agreement Theorem.}

%As an example of an application, let $A_1 = \{q_1\}$ and $A_2 = \{q_2\}$  for some $q_1 \neq q_2$, and assume that $A_1 \times A_2$ has positive probability and is self evident. Then it is impossible that $\p$ is feasible.

\paragraph{A quantitative Agreement Theorem.}
While Aumann's Agreement Theorem provides an obstruction to feasibility, a joint posterior distribution can be infeasible even when it does not imply that agents agree to disagree. A larger set of necessary conditions follows from our first  result, which is a quantitative version of the Agreement Theorem: 
%\fed{Do we want to downplay ``first main result''? Just to say ``our Quantitative Agreement Theorem''?}
\begin{theorem}
\label{thm:agreement-quant}
Let $((S_i)_i,\bP)$ be an information structure, and let $B_1 \subseteq S_1$ and $B_2 \subseteq S_2$ be sets of possible signal realizations for agents 1 and 2, respectively. Then
$$
  \bP(s_1 \in B_1,\, s_2 \not \in B_2) \geq \E(x_1 \cdot \mathds{1}_{s_1 \in B_1}) - \E(x_2\cdot\mathds{1}_{s_2 \in B_2})  \geq -\bP(s_2 \in B_2,\, s_1 \not \in B_1).
$$
\end{theorem}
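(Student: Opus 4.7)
The plan is to start by applying the tower property to eliminate the posterior random variables in favor of joint probabilities involving $\omega$. Since $x_i = \bP(\omega = h \mid s_i)$, conditioning on $s_i$ first gives
\begin{equation*}
  \E\bigl(x_i \cdot \mathds{1}_{s_i \in B_i}\bigr) \;=\; \E\bigl(\mathds{1}_{\omega = h} \cdot \mathds{1}_{s_i \in B_i}\bigr) \;=\; \bP(\omega = h,\, s_i \in B_i)
\end{equation*}
for $i=1,2$. This reformulation collapses the theorem into a purely set-theoretic statement about joint probabilities of $\omega, s_1, s_2$, avoiding any further use of conditional expectations.

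Next I would decompose each of these two joint probabilities along the partition determined by the \emph{other} agent's event. Specifically, splitting according to whether $s_2 \in B_2$ and, symmetrically, whether $s_1 \in B_1$, the ``agreement'' term $\bP(\omega = h,\, s_1 \in B_1,\, s_2 \in B_2)$ appears in both decompositions and cancels in the difference, leaving
\begin{equation*}
  \E(x_1 \mathds{1}_{s_1 \in B_1}) - \E(x_2 \mathds{1}_{s_2 \in B_2}) \;=\; \bP(\omega = h,\, s_1 \in B_1,\, s_2 \not\in B_2) \;-\; \bP(\omega = h,\, s_1 \not\in B_1,\, s_2 \in B_2).
\end{equation*}
Both terms on the right are non-negative, and each is bounded above by the corresponding unconditional probability (dropping the event $\{\omega = h\}$). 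The upper bound in the theorem follows by discarding the second summand and relaxing the first; the lower bound follows symmetrically.

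There is essentially no obstacle in this argument — the whole proof is a short chain of elementary identities. The only genuine ``insight'' is the first step, recognizing that $\E(x_i \mathds{1}_{s_i \in B_i})$ equals the joint probability $\bP(\omega = h,\, s_i \in B_i)$; after that, the disagreement decomposition and the trivial bounds are forced. As a sanity check, the argument immediately recovers Theorem~\ref{thm:agreement}, because condition~\eqref{eq:common-knowledge} makes both bracketing probabilities vanish and forces the two expectations to agree.
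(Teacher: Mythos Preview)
Your proof is correct and follows essentially the same approach as the paper: both begin with the tower-property identity $\E(x_i\mathds{1}_{s_i\in B_i})=\bP(\omega=h,\,s_i\in B_i)$ and then reduce to an elementary set-theoretic inequality for the measure conditioned on $\omega=h$. Your presentation is slightly more symmetric in that the exact identity you derive handles both inequalities at once, whereas the paper proves one side and appeals to symmetry for the other; but the content is identical.
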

\begin{proof}
By the law of total expectations, we have that
$$
\E(x_i\cdot \mathds{1}_{s_i\in B_i}) = \E(\bP(\omega=h|s_i)\cdot \mathds{1}_{s_i\in B_i}) = \bP(\omega=h,\,s_i \in B_i).
$$
We thus need to show that
$$
  \bP(s_1 \in B_1,\, s_2 \not \in B_2) \geq \bP(\omega=h,\, s_1 \in B_1) - \bP(\omega=h,\, s_2 \in B_2)  \geq -\bP(s_2 \in B_2,\, s_1 \not \in B_1).
$$
We show the first inequality; the second follows by an identical argument. We in fact demonstrate a stronger inequality:
\begin{equation}\label{eq:agreement_quant_stronger}
  \bP(\omega=h,s_1 \in B_1,\, s_2 \not \in B_2) \geq \bP(\omega=h,\, s_1 \in B_1) - \bP(\omega=h,\, s_2 \in B_2).
\end{equation}
Denote the conditional probability $\bP(C\mid \omega=h)$ by $\bP^h(C)$ for any event $C$. Then inequality~\eqref{eq:agreement_quant_stronger} is equivalent to the elementary inequality 
$\bP^h(A \cap \overline{B}) \geq \bP^h(A) - \bP^h(B)$, which holds for any pair of events $A$ an $B$ and any probability measure~$\bP^h$.
%Denote 
%Indeed, this inequality holds since, more generally, for any three events $A,B,C$ it holds that
%$$
%\bP(A \cap \overline{B}) \geq \bP(C \cap A) - \bP(C \cap B),
%$$
%which, in turn, follows from the fact that 
%$
%(A \cap \overline{B}) \cup (C \cap B)  \supseteq C \cap A
%$.
\end{proof}
A comparison to the Agreement Theorem (Theorem~\ref{thm:agreement}) is illustrative. In Theorem~\ref{thm:agreement} the common knowledge assumption \eqref{eq:common-knowledge} implies equality of average posteriors. Here \eqref{eq:common-knowledge} has been removed, and we instead bound the difference in the average posteriors by the extent to which \eqref{eq:common-knowledge} is violated. Thus, one can think of Theorem~\ref{thm:agreement-quant} as quantifying the extent to which approximate common knowledge implies approximate agreement.  The Agreement Theorem becomes the special case in which \eqref{eq:common-knowledge} holds.\footnote{An alternative approach to quantitative extensions of the Agreement Theorem is  by the concept of common $p$-beliefs \citep{monderer1989approximating}. \cite{neeman1996approximating} showed that when posteriors are common $p$-belief then they cannot differ by more that $1-p$. However, we are not aware of any formal connection between this extension and our Theorem~\ref{thm:two-agent-feasible}.}

In analogy to Corollary~\ref{cor:agreement}, we use Theorem~\ref{thm:agreement-quant} to derive further necessary conditions for feasibility.
\begin{corollary}
\label{cor:agreement-quant}
Let $\p \in \Delta([0,1]^2)$ be $p$-feasible for some $p$, and let $A_1$ and $A_2$ be measurable subsets of $[0,1]$. Then
\begin{align}
    \label{eq:aumann-feasible-quant}
  \p(A_1 \times \overline{A_2}) \geq \int_{A_1} x\,\dd \p_1(x) - \int_{A_2} x\,\dd \p_2(x) 
  \geq -\p(\overline{A_1} \times A_2).
\end{align}
\end{corollary}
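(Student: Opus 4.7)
The plan is to deduce Corollary~\ref{cor:agreement-quant} directly from Theorem~\ref{thm:agreement-quant}, following exactly the pattern by which Corollary~\ref{cor:agreement} was derived from Theorem~\ref{thm:agreement}. Since $\p$ is assumed to be $p$-feasible, I can fix an information structure $I=((S_i)_i,\bP)$ realizing $\p$, i.e., with $\p_I=\p$, and then select signal events whose images under the posterior maps recover the sets $A_1$ and $A_2$.

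Concretely, I would set $B_i=\{s_i\in S_i \,:\, x_i(s_i)\in A_i\}$ for $i=1,2$, where $x_i(s_i)=\bP(\omega=h\mid s_i)$ as in the model. With this choice, the event $\{s_1\in B_1,\,s_2\notin B_2\}$ is exactly $\{(x_1,x_2)\in A_1\times\overline{A_2}\}$, whose $\bP$-probability is $\p(A_1\times\overline{A_2})$ by the definition of the induced joint posterior distribution. Similarly, $\bP(s_2\in B_2,\,s_1\notin B_1)=\p(\overline{A_1}\times A_2)$. For the middle expression, since $x_i\cdot\mathds{1}_{s_i\in B_i}=x_i\cdot\mathds{1}_{x_i\in A_i}$ depends only on the marginal distribution of $x_i$, I can compute
\[
\E\bigl(x_i\cdot\mathds{1}_{s_i\in B_i}\bigr)=\int_{A_i} x\,\dd\p_i(x).
\]

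Substituting these four identities into the two inequalities of Theorem~\ref{thm:agreement-quant} yields the chain of inequalities in \eqref{eq:aumann-feasible-quant}, finishing the proof. There is no real obstacle here: the work is purely a change of variables from signal events to events in $[0,1]^2$, with the only care being to verify the measurability of $B_i$ (which follows from measurability of $A_i$ and of the posterior maps $x_i$) and to notice that the expectations reduce to marginals since $\mathds{1}_{s_i\in B_i}$ is $\sigma(x_i)$-measurable. Thus the corollary is essentially a reformulation of Theorem~\ref{thm:agreement-quant} as a direct condition on the joint distribution $\p$, mirroring how Corollary~\ref{cor:agreement} restates Theorem~\ref{thm:agreement}.
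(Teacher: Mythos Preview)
Your proposal is correct and follows exactly the approach the paper intends: the text introduces Corollary~\ref{cor:agreement-quant} with ``In analogy to Corollary~\ref{cor:agreement}, we use Theorem~\ref{thm:agreement-quant} to derive further necessary conditions for feasibility,'' and your derivation via $B_i=\{s_i:x_i(s_i)\in A_i\}$ is precisely that analogy spelled out. The paper does not write out the details, but your identification of each term in Theorem~\ref{thm:agreement-quant} with the corresponding quantity in~\eqref{eq:aumann-feasible-quant} is exactly what is implicit there.
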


Corollary~\ref{cor:agreement-quant} admits a simple interpretation in terms of the No-Trade Theorem \citep{milgrom1982information}. Consider three risk-neutral agents: two traders and a mediator. Trader $2$ owns a good with an unknown quality $\omega \in \{0,1\}$. The mediator also owns a copy of the same good. The two traders receive private information regarding the quality of the good, with a joint belief distribution $\p\in\Delta([0,1]^2)$. The mediator knows $\p$ and the realized pair $(x_1,x_2)$.

Let $A_1,A_2\subseteq [0,1]$ be any measurable sets and consider the following trading scheme: The mediator  buys the good from trader $2$ whenever $x_2\in A_2$ at a price of $x_2$. The mediator sells one copy of good to trader $1$ whenever $x_1\in A_1$ at a price of $x_1$. Thus the mediator may need to use her copy of the good, in case she sells but does not buy.

We argue that the mediator's expected profit is at least
$$
  \int_{A_1}x\, \dd\p_1(x)-\int_{A_2}x\,\dd\p_2(x)-\p(A_1\times \overline{A}_2).
$$     	
The first two addends correspond to the expected transfer between each trader and the mediator. The last addend corresponds to the event that the mediator has to sell his own good to trader $1$ since trader $2$'s belief $x_2$ is not in $A_2$ and trader $1$'s belief is in $A_1$. In this case the mediator loses at most 1.

Clearly, the mediator does not provide any additional information to the two players and so their expected profit is zero. Thus the mediator's expected profit is also zero, and so we have arrived at the left inequality of \eqref{eq:aumann-feasible-quant}. The right inequality follows by symmetry.

\paragraph{A characterization for two agents.}
\cite{dawid1995coherent} characterized the feasible distributions for the case of two agents, by applying a result of \cite{kellerer1961Funktionen} and \cite{strassen1965existence}. Although they do not relate their result to the Agreement Theorem or phrase it in these terms, what they show is that the condition of feasibility from Corollary~\ref{cor:agreement-quant} is both necessary and sufficient.
\begin{theorem}[\cite{dawid1995coherent}]
\label{thm:two-agent-feasible}
A probability measure $\p \in \Delta([0,1]^2)$ is $p$-feasible for some $p$ if and only if 
\begin{align}
    \label{eq:aumann-feasible-quant-2}
  \p(A_1 \times \overline{A_2}) \geq \int_{A_1} x\,\dd \p_1(x) - \int_{A_2} x\,\dd \p_2(x) 
  \geq -\p(\overline{A_1} \times A_2)\nonumber\\
  \text{for all measurable } A_1,A_2 \subseteq [0,1].
\end{align}
\end{theorem}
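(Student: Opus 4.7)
The necessity direction is immediate from Corollary~\ref{cor:agreement-quant}, so the task is to establish sufficiency. My plan is to reduce the existence of an information structure to a classical result on the existence of measures with prescribed marginals that are dominated by a reference measure.

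First I would extract the prior. Setting $A_1 = A_2 = [0,1]$ in \eqref{eq:aumann-feasible-quant-2} forces
$$\int_0^1 x\, \dd\p_1(x) = \int_0^1 x\, \dd\p_2(x),$$
and I define $p$ to be this common value, assuming $p \in (0,1)$ (boundary cases are easy). I then adopt the canonical construction in which each agent's signal space is $[0,1]$ and the signal is the posterior itself. Specifying a compatible information structure then reduces to specifying a measurable function $\phi \colon [0,1]^2 \to [0,1]$ representing $\bP(\omega = h \mid x_1, x_2)$ under the joint law $\p$. For the posteriors to indeed be $(x_1, x_2)$, the function $\phi$ must satisfy
$$\E_{\p}[\phi \mid x_1] = x_1, \qquad \E_{\p}[\phi \mid x_2] = x_2 \qquad \p\text{-a.s.}$$

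Equivalently, writing $\dd\mu := \phi\, \dd\p$, I seek a nonnegative measure $\mu \le \p$ on $[0,1]^2$ whose marginals are $\mu_i(A) = \int_A x\, \dd\p_i(x)$ for $i = 1, 2$. Existence of such a $\mu$ is governed by the theorem of \cite{kellerer1961Funktionen} and \cite{strassen1965existence}: writing $\alpha_i(A) := \int_A x\, \dd\p_i(x)$ (which have common total mass $p$), a measure $0 \le \mu \le \p$ with marginals $\alpha_1, \alpha_2$ exists if and only if
$$\alpha_1(A_1) - \alpha_2(A_2) \;\le\; \p(A_1 \times \overline{A_2}) \quad \text{for all measurable } A_1, A_2 \subseteq [0,1].$$
This is precisely the left inequality of \eqref{eq:aumann-feasible-quant-2}; the right inequality is the symmetric statement. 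Hence the hypothesis delivers exactly the Kellerer--Strassen premise, yielding $\mu$ and hence $\phi$. The information structure $\bP$ on $\Omega \times [0,1]^2$ is then given by $\bP(\omega = h,\, (x_1,x_2) \in B) = \int_B \phi\, \dd\p$ and the complementary formula for $\omega = \ell$, and one checks by direct computation that it induces the joint posterior distribution $\p$.

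The main obstacle I anticipate is matching the form of the Kellerer--Strassen condition to \eqref{eq:aumann-feasible-quant-2}: one must identify the relevant marginals with the $x$-weighted measures $\alpha_i$ and recognize that the two inequalities in \eqref{eq:aumann-feasible-quant-2} express symmetric Kellerer--Strassen conditions on the triple $(\p, \alpha_1, \alpha_2)$. A minor verification is that $\phi \in [0,1]$: $\phi \le 1$ is encoded in $\mu \le \p$, while $\phi \ge 0$ follows from $\alpha_i \ge 0$. Because Theorem~\ref{thm:two-agent-feasible} is attributed to \cite{dawid1995coherent}, I would expect the cited proof to execute essentially this reduction.
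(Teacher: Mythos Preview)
Your proposal is correct and follows essentially the same route as the paper: both reduce sufficiency to the Kellerer--Strassen existence theorem for a measure dominated by $\p$ with the $x$-weighted marginals $\alpha_i(A)=\int_A x\,\dd\p_i(x)$, which is exactly the content of the paper's Lemma~\ref{lem:domination} combined with the cited Kellerer theorem. The only cosmetic difference is that the paper normalizes to the probability measure $\q=\p^h=\mu/p$ and states Kellerer's condition in the form $\frac{1}{p}\p(B_1\times B_2)\ge M_1(B_1)+M_2(B_2)-1$, then substitutes $B_2=\overline{A_2}$ to recover \eqref{eq:aumann-feasible-quant-2}, whereas you work directly with the unnormalized $\mu$ and a form of the condition that matches \eqref{eq:aumann-feasible-quant-2} without substitution.
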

The necessity of \eqref{eq:aumann-feasible-quant-2} is Corollary~\ref{cor:agreement-quant}. The sufficiency requires another argument, which uses a theorem of \cite{kellerer1961Funktionen}. 
For the reader's convenience, we present the complete proof of Theorem~\ref{thm:two-agent-feasible} in Appendix~\ref{sec:proof}. %\fed{Perhaps we should move this proof to appendix (TODO when we decide on the multi-states story)}

It follows from Theorem~\ref{thm:two-agent-feasible} and the single agent martingale condition that when \eqref{eq:aumann-feasible-quant-2} holds then $\p$ is $p$-feasible for
\begin{align*}
    p = \int_0^1 x\,\dd \p_1(x) = \int_0^1 x\,\dd \p_2(x).
\end{align*}
We note that in \cite{dawid1995coherent},  condition~\eqref{eq:aumann-feasible-quant-2} was written in the equivalent form
$$\p(A_1 \times B_2)+p \geq \int_{A_1} x\,\dd \p_1(x) + \int_{B_2} x\,\dd \p_2(x),$$
from which the relation to the Agreement Theorem is harder to see.

It is natural to wonder if \eqref{eq:aumann-feasible-quant-2} can be relaxed to a simpler sufficient condition, and in particular if it suffices to check it on $A_1,A_2$ that are intervals. As we show in Appendix~\ref{sec:intervals}, restricting \eqref{eq:aumann-feasible-quant-2} to intervals results in a condition that is not sufficient: we construct a measure that is not feasible, but satisfies \eqref{eq:aumann-feasible-quant} for all intervals $A_1,A_2$. The constructed measure demonstrates that the condition derived by \cite{ziegler2020adversarial} for feasibility is necessary but insufficient.  

\paragraph{A characterization for any number of agents.}
%\new{\cite{dawid1995coherent} mentioned that their characterization has no simple generalization beyond the two-agent case.
For three or more agents, Theorem~\ref{thm:two-agent-feasible} provides a necessary condition for feasibility, as the joint belief distribution of each pair of agents must clearly satisfy \eqref{eq:aumann-feasible-quant-2}. However,  this condition is not sufficient: we construct below an example of three agents whose belief distribution satisfies \eqref{eq:aumann-feasible-quant-2} for each pair of agents, and yet is not feasible. The violation of feasibility stems from a violation of the No-Trade Theorem, in a manner similar to the one illustrated above for two agents. We use this approach to provide a necessary and sufficient condition for feasibility for an arbitrary number of agents.

A {\em trading scheme} consists of $n$ measurable functions $a_i \colon [0,1]\to[-1,1]$, $i=1,\ldots,n$. Given agent $i$'s posterior $x_i$, a mediator sells $a_i(x_i)$ units of the good to agent $i$ for the price of $x_i$ per unit, so that the total transfer is $a_i(x_i)x_i$. 

Clearly, each agent's expected profit is zero, since she is buying or selling the good at her expected value. Hence the mediator's expected profit is also zero. We argue that 
\begin{align}
    \label{eq:trading-1}
    \int_{[0,1]^n}\left(\sum_{i=1}^n a_i(x_i)x_i-\max\left\{0,\sum_{i=1}^n a_i(x_i)\right\}\right)\,\dd\p(x_1,\ldots,x_n) 
\end{align}
is a lower bound on the mediator's profit. Indeed, the first addend in the integral is the total transfer to the mediator. The second is equal to the total number of units of the good that the mediator needs to contribute to the transaction, in case the total amount that she sells exceeds the total amount that she buys. Since each unit is worth at most 1, she loses at most $\sum a_i(x_i)$, whenever this sum is positive. Thus, since the mediator's profit is zero, it follows that \eqref{eq:trading-1} cannot be positive if $\p$ is feasible.

Our characterization shows that this condition is also sufficient for feasibility. The proof is given in Appendix~\ref{app:no-trade}. %\fed{Perhaps we should partially move it to the body instead of the proof of Th~2 (TODO when we decide on the multi-states story)}
\begin{theorem}
\label{thm:no-trade}
A probability measure $\p \in \Delta([0,1]^n)$ is $p$-feasible for some $p$ if and only if  for every trading scheme $(a_1,\ldots,a_n)$
\begin{align}
    \label{eq:trading}
    \int_{[0,1]^n}\left(\sum_{i=1}^n a_i(x_i)x_i-\max\left\{0,\sum_{i=1}^n a_i(x_i)\right\}\right)\,\dd\p(x_1,\ldots,x_n) \leq 0.
\end{align}
\end{theorem}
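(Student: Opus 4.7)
Necessity is already established by the mediator / trading-scheme interpretation given in the paragraph preceding the theorem, so I focus on sufficiency via Hahn--Banach separation. Plugging $a_1\equiv c,\ a_2\equiv -c,\ a_j\equiv 0$ for $j\geq 3$, for both signs of $c$, into the no-trade condition yields $\int x_1\,\dd\p_1=\int x_2\,\dd\p_2$, and by symmetry all marginals $\p_i$ share a common mean; call this common value $p$. The first step is to reformulate feasibility: $\p$ is $p$-feasible if and only if there exists a non-negative measure $\mu^h$ on $[0,1]^n$ with $\mu^h\leq\p$ and $i$-th marginal $\nu_i$ given by $\dd\nu_i(y)=y\,\dd\p_i(y)$. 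Indeed, given such $\mu^h$, setting $\mu^\ell:=\p-\mu^h\geq 0$, $\bP(\{h\}\times B):=\mu^h(B)$, $\bP(\{\ell\}\times B):=\mu^\ell(B)$, and $s_i:=x_i$ produces an information structure inducing $\p$, since the Radon--Nikodym identity $\bP(\omega=h\mid s_i=y)=y$ follows from the marginal constraint.

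\textbf{Hahn--Banach and the dual computation.} The set $K=\{\mu\in\mathcal{M}([0,1]^n):0\leq\mu\leq\p\}$ is convex, weak-$*$ closed, and bounded in total variation by $1$, hence weak-$*$ compact by Banach--Alaoglu. The marginal map $\Phi\colon K\to\prod_i\mathcal{M}([0,1])$ is weak-$*$ continuous and affine, so $\Phi(K)$ is convex and compact. Assume toward contradiction that $(\nu_1,\ldots,\nu_n)\notin\Phi(K)$; strict Hahn--Banach separation in this locally convex space yields $a_i\in C([0,1])$ with
$$\sum_{i=1}^n\int a_i\,\dd\nu_i\;>\;\sup_{\mu\in K}\sum_{i=1}^n\int a_i\,\dd\mu_i.$$
Writing $\mu=f\p$ with $f\colon[0,1]^n\to[0,1]$ measurable, the supremum on the right is attained at the bang-bang choice $f(x)=\mathds{1}\{\sum_i a_i(x_i)>0\}$ and equals $\int\max\{0,\sum_i a_i(x_i)\}\,\dd\p$; by Fubini the left-hand side equals $\int\sum_i a_i(x_i)\,x_i\,\dd\p$. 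Combining,
$$\int_{[0,1]^n}\Bigl(\textstyle\sum_i a_i(x_i)\,x_i-\max\bigl\{0,\sum_i a_i(x_i)\bigr\}\Bigr)\,\dd\p\;>\;0.$$
The integrand is positively homogeneous in $a=(a_1,\ldots,a_n)$, so rescaling by a sufficiently small positive constant produces measurable $a_i\colon[0,1]\to[-1,1]$ with the same strict inequality, contradicting the no-trade hypothesis.

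\textbf{Main obstacle.} The technical heart is the separation step: setting up the right function-analytic framework so that $K$ is weak-$*$ compact, $\Phi$ is weak-$*$ continuous, and strict Hahn--Banach separation of $(\nu_1,\ldots,\nu_n)$ from $\Phi(K)$ is implemented by a \emph{continuous} linear functional, which then decomposes into continuous test functions $a_i\in C([0,1])$. Once this is secured, the bang-bang maximizer in the inner supremum is precisely what manufactures the $\max\{0,\sum_i a_i\}$ term and matches the separating functional to the integrand of the no-trade inequality; the positive-homogeneity rescaling at the end handles the $[-1,1]$ restriction in the hypothesis.
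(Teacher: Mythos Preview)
Your argument is correct and takes a genuinely different route from the paper's proof. Both proofs start from the same reformulation of feasibility (the paper's Lemma~\ref{lem:domination}, which you rederive): the existence of a nonnegative measure dominated by $\p$ with the prescribed marginals $\dd\nu_i(y)=y\,\dd\p_i(y)$. From there, the paper invokes a specialized existence result of Hansel--Troallic (their Theorem~\ref{thm:hansel} and Corollary~\ref{cor:hansel}), which produces a \emph{finitely} additive measure satisfying the upper bound and the marginal constraints, and then appeals to a separate lemma to upgrade finite additivity to $\sigma$-additivity. You instead argue directly via weak-$*$ compactness of $K=\{\mu:0\le\mu\le\p\}$, continuity of the marginal map, and Hahn--Banach separation in $\prod_i(\mathcal{M}([0,1]),w^*)$; the bang-bang computation of $\sup_{\mu\in K}\int\sum_i a_i\,\dd\mu_i$ via the Radon--Nikodym representation $\mu=f\p$, $0\le f\le1$, is exactly what produces the $\max\{0,\sum_i a_i\}$ term. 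Your route is more self-contained (only Banach--Alaoglu, Hahn--Banach, Radon--Nikodym) and yields a $\sigma$-additive solution immediately; the paper's route, by contrast, shows explicitly how the no-trade inequalities match the combinatorial test conditions in Hansel--Troallic, which clarifies the link to Kellerer--Strassen and to the finite-support Farkas approach the authors mention at the end of their proof section. In effect, your proof is the infinite-dimensional duality argument that the paper alludes to as an alternative but does not carry out.
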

In the case of two agents, by taking $a_i=\pm\mathds{1}_{A_i}$ in~\eqref{eq:trading}, we recover the condition \eqref{eq:aumann-feasible-quant-2} from Theorem~\ref{thm:two-agent-feasible}. However, we are not aware of a simple argument for deducing \eqref{eq:trading} from \eqref{eq:aumann-feasible-quant-2} in the two agent case. In particular, Theorem~\ref{thm:two-agent-feasible} is not a simple corollary of Theorem~\ref{thm:no-trade} since the latter requires a broader set of trading schemes, while  indicators are enough for the former. Relatedly, Kellerer's theorem that underlies Theorem~\ref{thm:two-agent-feasible} holds only for $n=2$ and cannot be extended to the multidimensional case without expanding the set of test functions; see the discussion in \citet[pp. 436-437]{strassen1965existence}.

A natural question is whether Theorem~\ref{thm:no-trade} can be strengthened, along the lines of Theorem~\ref{thm:two-agent-feasible}, to consider only indicator trading schemes: Is it sufficient to consider trading schemes  of the form $a_i = \pm\mathds{1}_{A_i}$, so that  each agent is either a buyer or a seller, and has a set of beliefs in which she trades one unit? By computerized verification it is possible to show that the answer is no. A counterexample is the distribution $\nu \times \nu \times \nu \in \Delta([0,1]^3)$, where $\nu = \frac{1}{3}\left(\delta_{3/14}+\delta_{1/2}+\delta_{11/14}\right)$. This distribution is not feasible, and yet each of the small number of possible indicator trading schemes is not profitable. 

Since $\nu \times \nu$ is feasible, this example also shows that Theorem~\ref{thm:no-trade} provides additional obstructions for feasibility when $n\geq 3$, beyond the pairwise condition implied by Theorem~\ref{thm:two-agent-feasible}. We end this section with another such example. Consider three agents whose beliefs $(x_1,x_2,x_3)$ are distributed uniformly and independently on $[0,1]$. That is, let their joint belief distribution $\p \in \Delta([0,1]^3)$ be the Lebesgue measure. By Proposition~\ref{thm:product-feasible} below, the agents {\em pairwise} satisfy the condition of Theorem~\ref{thm:two-agent-feasible}. We argue that this is nevertheless not a feasible distribution. 

To see this, consider the trading scheme given by
\begin{align*}
    a_1(x)=a_2(x)=a_3(x) = \mathds{1}_{x \geq 2/3}-\mathds{1}_{x \leq 1/3}.
\end{align*}
In this scheme each agent buys a unit whenever her belief is above $2/3$, and sells when it is below $1/3$. A simple calculation shows that condition \eqref{eq:trading} of Theorem~\ref{thm:no-trade} is violated.\footnote{We get $\int_{[0,1]^3} a_i(x_i)x_i\dd \p= -\int_{0}^{\nicefrac{1}{3}} x\dd x +\int_{\nicefrac{2}{3}}^{1} x\dd x=\frac{2}{9}$. Hence, $\int_{[0,1]^3} \sum_{i=1}^3 a_i(x_i)x_i\dd \p=\frac{6}{9}$. The hyperplanes $x_i\in\{\nicefrac{1}{3},\nicefrac{2}{3}\}$ partition $[0,1]^3$ into $27$ small cubes.  There is $1$ small cube where the sum $\sum_{i=1}^3 a_i(x_i)$ is equal to $3$, there are $3$ cubes where the sum equals $2$, and $6$ cubes where it has the value of $1$; on the other cubes it is non-positive. Hence, $\int_{[0,1]^3} \max\left\{0,\,\sum_{i=1}^3 a_i(x_i)\right\}\dd \p=\frac{15}{27}=\frac{5}{9}$. We see that the condition \eqref{eq:trading} is violated and conclude that the uniform distribution on $[0,1]^3$ is infeasible.
We thank Eric Neyman for alerting us to an error in a previous version of this example.} These examples illustrate the general phenomenon that is captured  by Proposition~\ref{pro:product} below: for any distribution $\nu$ on $[0,1]$ not concentrated at one point, $\nu^n$ becomes infeasible for large enough $n$. In these two examples $n=3$ suffices.

\section{Applications}
\label{sec:applications}
\paragraph{Identically Distributed Binary Signals.}
As an illustration of the  restrictions imposed by the requirement of feasibility, consider
a setting with prior $p=\half$, and two agents who each receive a binary signal that equals the state with some probability $r > \half$. %We ask: what 
What joint distributions are feasible?

The canonical setting is the one in which signals are independent, conditioned on the state. Note that they are not unconditionally  independent: while each agent has each posterior with probability $\half$, conditioned on the first agent acquiring a high posterior, the second agent is more likely to also have a high posterior than a low one.  In this case the induced belief distribution is
$$
  \p = \frac{r^2+(1-r)^2}{2}\left[\delta_{r,r}+\delta_{1-r,1-r}\right]+r(1-r)\left[\delta_{1-r,r}+\delta_{r,1-r}\right].
$$  
Another simple case is the one in which both agents observe the same signal, in which case the posteriors are of course perfectly correlated, and the distribution is
$$
  \p = \frac{1}{2}\delta_{r,r}+\frac{1}{2}\delta_{1-r,1-r},
$$  
as in the signal agent case. In both of these cases $\p$ is feasible, since we derive it from a joint signal distribution.
 
The case in which agents' posteriors are perfectly anti-correlated, i.e.,
$$
  \p = \frac{1}{2}\delta_{r,1-r}+\frac{1}{2}\delta_{1-r,r}
$$  
is precluded by the Agreement Theorem and its Corollary~\ref{cor:agreement}, as agents here agree to disagree on their posteriors.

More generally, we can consider the case in which conditioned on an agent's posterior, the other agent has the same posterior with probability $c$. That is,
\begin{align}
\label{eq:binary}
  \p = \frac{c}{2}\left[\delta_{r,r}+\delta_{1-r,1-r}\right]+\frac{1-c}{2}\left[\delta_{1-r,r}+\delta_{r,1-r}\right].
\end{align}
In this case there is never common knowledge of beliefs, as long as $c<1$. The natural question is: to what extent can the signals be anti-correlated? Can they for example be (unconditionally) independent, so that after observing a signal, the probability that an agent assigns to the other agent's posterior is still uniform over $\{r,\,1-r\}$? A common intuition suggests that this is impossible, since  even if signals are independent given the state, the induced unconditional distribution of posteriors inherits the dependence on the state and thus the posteriors must be dependent; on the other hand, if the conditional distribution of signals is the same in both states, they convey no information and thus the posterior just equals the prior. Perhaps surprisingly, this intuition is wrong, and posteriors can be independent and even anti-correlated; see e.g.\ \cite{burdzy2020bounds}.
\begin{proposition}
\label{prop:binary}
A joint belief distribution $\p$ as given in \eqref{eq:binary} is $\half$-feasible if and only if $c \geq 2r-1$.
\end{proposition}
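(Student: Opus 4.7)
The plan is to prove the two directions separately: necessity of $c \geq 2r-1$ from Corollary~\ref{cor:agreement-quant} with a targeted choice of sets, and sufficiency by explicit construction of an information structure.

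For necessity, I would apply Corollary~\ref{cor:agreement-quant} with $A_1 = \{r\}$ and $A_2 = \{1-r\}$, which single out the ``agree-to-disagree'' configuration. A direct calculation from \eqref{eq:binary} gives $\p(A_1 \times \overline{A_2}) = c/2$, since the only atom of $A_1 \times \overline{A_2}$ is $(r,r)$. The marginals $\p_1 = \p_2 = \frac{1}{2}(\delta_r + \delta_{1-r})$ yield $\int_{A_1} x\,\dd\p_1(x) = r/2$ and $\int_{A_2} x\,\dd\p_2(x) = (1-r)/2$. Substituting into the left inequality of \eqref{eq:aumann-feasible-quant-2} gives $c/2 \geq (2r-1)/2$, i.e., $c \geq 2r-1$. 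The right inequality and the mirror choice of sets produce only weaker constraints, so nothing more is needed.

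For sufficiency, I would construct an information structure with binary signal spaces $S_1 = S_2 = \{H, L\}$. Exploiting the obvious symmetries of \eqref{eq:binary}---exchange of agents and simultaneous flip of state and signals---I parameterize the joint law of $(\omega, s_1, s_2)$ by three nonnegative reals $\alpha = \bP(\omega = h, H, H)$, $\beta = \bP(\omega = h, H, L) = \bP(\omega = h, L, H)$, and $\gamma = \bP(\omega = h, L, L)$, with the $\omega = \ell$ probabilities being the mirror image $\alpha \leftrightarrow \gamma$. Matching the probability that signals agree to $c/2$ and imposing that agent $i$'s posterior conditional on $s_i = H$ equals $r$ gives a linear system whose unique solution is $\alpha = (2r-1+c)/4$, $\beta = (1-c)/4$, $\gamma = (c-2r+1)/4$. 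Nonnegativity of $\alpha$ and $\beta$ is automatic for $r \in (\frac{1}{2},1)$ and $c \in [0,1]$, while nonnegativity of $\gamma$ is precisely the desired condition $c \geq 2r-1$.

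I do not anticipate a serious obstacle, only routine parameter bookkeeping. I would sanity-check the construction at the boundary case $c = 2r-1$ (where $\gamma = 0$, so the low state reveals a disagreement of signals more strongly) and at $c = 1$ (perfectly correlated, where $\beta = 0$ and the construction collapses to the single-signal case).
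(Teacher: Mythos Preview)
Your proof is correct. The necessity argument matches the paper's: both select $A_1=\{r\}$ and $A_2=\{1-r\}$ in the quantitative agreement inequality. For sufficiency, however, you take a different route. The paper simply invokes Theorem~\ref{thm:two-agent-feasible}: since each marginal is supported on $\{r,1-r\}$, only the finitely many choices $A_i\in\{\emptyset,\{r\},\{1-r\},\{r,1-r\}\}$ need to be checked, and by the symmetry of $\p$ the binding one is exactly the pair you used for necessity. You instead build the information structure explicitly with the symmetric binary-signal ansatz. Your argument is more self-contained---it does not rely on the Kellerer--Strassen machinery underlying Theorem~\ref{thm:two-agent-feasible}---and it yields the implementing pair $(\p^\ell,\p^h)$ as a byproduct. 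The paper's approach is shorter once Theorem~\ref{thm:two-agent-feasible} is available and makes the methodological point that finite-support feasibility reduces to a finite system of linear inequalities.
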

This proposition follows directly from Theorem~\ref{thm:two-agent-feasible}: one only needs to check the sets $A_i = \{r\}$ and $A_i = \{1-r\}$. More generally, for finitely supported $\p$ only finitely many conditions need be checked.

Proposition~\ref{prop:binary} implies that indeed too much anti-correlation is infeasible, especially as the signals become more informative. Nevertheless, it is possible that the agents' posteriors are independent of each other (i.e., $c = \half$) as long as $r \leq 3/4$. Moreover, for $r < 3/4$, the posteriors can be negatively correlated; for example, $\p$ is feasible for $c=1/3$ and $r=2/3$. In this case, posteriors are either $1/3$ or $2/3$, each obtained with probability $\half$. When an agent has the high posterior, she assigns probability $2/3$ to the event the the other agent has the low posterior.

\paragraph{Unconditionally independent signals.}
As another application of Theorem~\ref{thm:two-agent-feasible} we further explore independent joint posterior belief distributions. To motivate this application consider a sender (e.g., a consulting agency) who wants to reveal some information to receivers (its clients). However, there is an additional concern: none of the receivers must be able to form any non-trivial guess about the information received by their counterpart. This can be motivated either by privacy concerns or by the desire to avoid complicated strategic reasoning on receivers' side. For example, consider the case that the receivers are two firms competing on the same market and plan to use the received information to adjust their strategies. If their posteriors are not independent, they might engage in a complicated reasoning involving  higher order beliefs, as in \cite{weinstein2007impact}. Another motivation for studying independent joint beliefs comes from mechanism design, where these distributions arise endogenously \citep*[see, e.g.,][]{bergemann2017first, brooks2019optimal}.

%Lemma~\ref{lem:revelation} implies that independent posteriors can be induced by independent signals, as a consequence of a revelation principle that allows the identification of signals and posteriors.

We already saw above that identically distributed binary signals can be independent for prior $p=\half$ as long as $r \leq 3/4$. As a second example, let $\p$ be the uniform distribution on the unit square. Following \cite{gutmann1991existence}, we verify that it is $\half$-feasible and find the information structure inducing it. This distribution clearly satisfies the martingale condition so it remains to check that
$$
 \p(A_1 \times \overline{A_2}) \geq \int_{A_1} x\,\dd \p_1(x) - \int_{A_2} x\,\dd \p_2(x).
$$
The other inequality of \eqref{eq:aumann-feasible-quant-2} will follow by symmetry.

Let $1-a$ be the Lebesgue measure of $A_1$ and $b$ of $A_2$. Then the left hand side equals $(1-a)(1-b)$ and the right hand-side is maximized when $A_1$ is pushed as much as possible towards high values of the integrand  (i.e., $A_1=\left[a ,1\right]$) and $A_2$ is pushed towards low values ($A_2=\left[0,b\right]$). We get the following inequality
$$
  (1-a)(1-b) \geq\int_{a}^1 x \,\dd x-\int_0^{b}x\,\dd x= \frac{1}{2}\left(1-a^2-b^2\right).
$$
Simplifying the expression, we get $(a+b-1)^2\geq 0$, which holds for any $a$ and $b$. Thus the uniform  distribution is $\half$-feasible.

The equality attained at $a+b-1=0$ helps guess the information structure that induces this distribution of beliefs. Comparing the inequality in the statement of Theorem~\ref{thm:agreement-quant} to inequality~\eqref{eq:agreement_quant_stronger}, we see that for any such information structure, $\bP(\omega=h, x_1\in [a,1], x_2\in [b,1])=\bP(x_1\in [a,1], x_2\in [b,1])$ for $a+b-1=0$, and hence $\bP(\omega=h\mid x_1\in [a,1], x_2\in [b,1])=1$. In other words, whenever the pair of posteriors is in the triangle $T=\{(x_1,x_2)\in[0,1]^2\,:\, x_1+x_2> 1\}$, the state is $\omega=h$; by the symmetric argument, $\omega=\ell$ whenever posteriors are in $[0,1]^2\setminus T$. Hence, one can use the following information structure:   sample a pair of signals $(s_1,s_2)$ uniformly from $T$ if $\omega=h$ and from $[0,1]^2 \setminus T$ if $\omega=\ell$. This information structure leads to $x_i(s_i)=s_i$ and thus induces the uniform distribution.
In \S\ref{sec:implementation}, we describe a family of information structures inducing the uniform distribution on $[c,1-c]^2$.

We generalize this example to study the conditions for $\half$-feasibility of more general product distributions. The next proposition provides a necessary and sufficient condition for feasibility of a large class of such distributions, and shows that the uniform distribution is, in fact, an important edge case. We say that a distribution $\nu \in \Delta([0,1])$ is symmetric around $\half$ if its cumulative distribution function $F(a) = \nu([0,a])$ satisfies $F(a)=1-F(1-a)$ for all $a \in [0,1]$. Recall that $\mu \in \Delta([0,1])$ is a {\em mean preserving spread} of $\mu' \in \Delta([0,1])$ if there exist random variables $x,x'$ such that $x \sim \mu$, $x' \sim \mu'$ and $\mathbb{E}(x|x') = x'$.\footnote{Equivalently, $\int f(x)\,\dd\mu(x) \geq \int f(x)\,\dd\mu'(x)$ for every bounded convex $f$. Another equivalent condition is that both have the same expectation, and $\int_0^y F(x)\,\dd x \leq \int_0^y F'(x)\,\dd x$ for all $y \in [0,1]$, where $F$ and $F'$ are the cumulative distribution functions of $\mu$ and $\mu'$, respectively. See \citep{blackwell1953equivalent}.}
\begin{proposition}
\label{thm:product-feasible}
Let $\p=\nu\times \nu$, where $\nu\in \Delta([0,1])$ is symmetric
around $\half$. Then $\p$ is $\half$-feasible if and only if the uniform distribution on $[0,1]$ is a mean preserving spread of $\nu$.
\end{proposition}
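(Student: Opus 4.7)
The plan is to apply Theorem~\ref{thm:two-agent-feasible} to $\p=\nu\times\nu$, use the product structure and the symmetry of $\nu$ to collapse the two-parameter family of inequalities to a one-parameter condition, and identify the latter as the quantile-function characterization of mean-preserving spread.

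First, with $\p_1=\p_2=\nu$, the left inequality of \eqref{eq:aumann-feasible-quant-2} reads
$$
\nu(A_1)\bigl(1-\nu(A_2)\bigr)\;\geq\;\int_{A_1}x\,\dd\nu(x)-\int_{A_2}x\,\dd\nu(x)
$$
for all measurable $A_1,A_2\subseteq[0,1]$; symmetry of the product in its two coordinates makes the right inequality equivalent, so I focus on this one. Holding the masses $\nu(A_1)=a$ and $\nu(A_2)=b$ fixed, the right-hand side is maximized, by a standard rearrangement argument, by taking $A_1$ to be a ``top'' $a$-quantile set of $\nu$ and $A_2$ a ``bottom'' $b$-quantile set. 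Writing $Q_\nu$ for the quantile function of $\nu$, define
$$
G(a):=\int_{1-a}^{1}Q_\nu(s)\,\dd s=\max_{\nu(A)=a}\int_A x\,\dd\nu(x),\qquad g(b):=\int_0^{b}Q_\nu(s)\,\dd s=\min_{\nu(B)=b}\int_B x\,\dd\nu(x).
$$
Thus feasibility of $\nu\times\nu$ is equivalent to $a(1-b)\geq G(a)-g(b)$ for all $a,b\in[0,1]$.

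Second, the symmetry of $\nu$ around $\half$ gives $Q_\nu(s)=1-Q_\nu(1-s)$, so a change of variable yields $g(b)=b-G(b)$. The condition therefore simplifies to
$$
G(a)+G(b)\;\leq\;a+b-ab\qquad\text{for all }a,b\in[0,1].
$$
Setting $a=b$ gives $G(a)\leq a-\nicefrac{a^2}{2}$. Conversely, if $G(a)\leq a-\nicefrac{a^2}{2}$ for every $a$, then by AM--GM,
$$
G(a)+G(b)\;\leq\;\bigl(a-\tfrac{a^2}{2}\bigr)+\bigl(b-\tfrac{b^2}{2}\bigr)=a+b-\tfrac{a^2+b^2}{2}\;\leq\;a+b-ab,
$$
so the two-parameter condition reduces to the one-parameter inequality $G(a)\leq a-\nicefrac{a^2}{2}$. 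For the uniform $U$ on $[0,1]$, $Q_U(s)=s$ and $G_U(a)=\int_{1-a}^1 s\,\dd s=a-\nicefrac{a^2}{2}$, so feasibility is equivalent to $G(a)\leq G_U(a)$ for all $a\in[0,1]$, i.e.\ $\int_{1-a}^1 Q_\nu(s)\,\dd s\leq\int_{1-a}^1 s\,\dd s$. Since both $\nu$ and $U$ have mean $\half$, this is in turn equivalent (taking $t=1-a$) to $\int_0^t Q_\nu(s)\,\dd s\geq\int_0^t Q_U(s)\,\dd s$ for all $t\in[0,1]$, which is the classical quantile-function characterization that $U$ is a mean-preserving spread of $\nu$.

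The main subtlety I expect is the extremal step in the first paragraph: when $\nu$ has atoms, a ``top $a$-quantile'' set may require splitting an atom, so one must work with the quantile representation $\int_A x\,\dd\nu=\int_{A^\ast}Q_\nu(s)\,\dd s$ for some $A^\ast\subseteq[0,1]$ of Lebesgue measure $\nu(A)$, and then use monotonicity of $Q_\nu$ to identify the extremes as $[1-a,1]$ and $[0,a]$. The rest of the argument is elementary algebra together with the standard majorization fact equating $\int_0^t Q_\mu\leq\int_0^t Q_{\mu'}$ (under equal means) with $\mu$ being a mean-preserving spread of $\mu'$.
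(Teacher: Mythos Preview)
Your proof is correct and takes a genuinely different route from the paper's. The paper handles the two directions separately: for the ``if'' direction it invokes Lemma~\ref{lem:mps-monotone} (monotonicity of product feasibility under mean-preserving contractions, itself proved via the no-trade characterization of Theorem~\ref{thm:no-trade}) together with the earlier direct verification that the uniform product is $\half$-feasible; for the ``only if'' direction it constructs an explicit violating pair $A_1=[1-y,1]$, $A_2=[0,y]$ at an interior maximizer $y$ of $H(y)=\int_0^y F(x)\,\dd x - y^2/2$, using the first-order condition $F(y)=y$ and a separate estimate on the conditional mean. By contrast, you treat both directions uniformly through Theorem~\ref{thm:two-agent-feasible}, rearrangement to quantile form, and the algebraic reduction $G(a)+G(b)\leq a+b-ab \Leftrightarrow G(a)\leq a-a^2/2$ via AM--GM and the diagonal $a=b$. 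Your argument is more self-contained (it needs neither Lemma~\ref{lem:mps-monotone} nor the preliminary feasibility of the uniform square) and gives a single clean equivalence; the paper's approach, on the other hand, yields the general monotonicity Lemma~\ref{lem:mps-monotone} as a byproduct, which is of independent interest.

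One point you flag but should make fully explicit: when $\nu$ has atoms, not every $a\in[0,1]$ arises as $\nu(A)$, so the step ``feasibility $\Leftrightarrow$ $a(1-b)\geq G(a)-g(b)$ for all $a,b\in[0,1]$'' literally holds only for achievable masses. To close the gap, observe that on any interval of non-achievable masses the quantile $Q_\nu$ is constant, so $G$ is affine there; hence $G(a)-(a-a^2/2)$ is convex on that interval and attains its maximum at an achievable endpoint. Thus $G(a)\leq a-a^2/2$ on achievable $a$ already forces it for all $a\in[0,1]$, and conversely a violation at some $a_0$ forces a violation at an achievable endpoint, which then yields actual sets $A_1,A_2$ contradicting Theorem~\ref{thm:two-agent-feasible}.
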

In particular, among symmetric, $\half$-feasible product distributions, the uniform is maximal in the convex order.\footnote{We note that the  symmetry assumption cannot be dropped. Indeed, the uniform distribution is not a mean-preserving spread of the non-symmetric distribution $\nu=a\cdot \delta_{1-a}+(1-a)\cdot \delta_1$ with $a=\frac{1}{\sqrt{2}}$, but nevertheless $\nu\times\nu$ is feasible. Feasibility can checked via Theorem~\ref{thm:two-agent-feasible} or directly by constructing the information structure: both agents have two signals $S_1=S_2=\{\underline{s},\overline{s}\}$; if the state is $\omega=h$, then the pair of signals is $(\overline{s},\overline{s})$ with probability $2(1-a)^2$ and $(\underline{s},\overline{s})$ or $(\overline{s},\underline{s})$ with probabilities $\frac{1}{2}-(1-a)^2$; if the state is $\omega=\ell$, the signals are always $(\underline{s},\underline{s})$. The uniform distribution is not a mean-preserving spread of $\nu$ since $\int_{\frac{1}{2}}^1 \left(x-\frac{1}{2}\right) \dd \nu(x)=\frac{1}{2}-\frac{1}{2\sqrt{2}}> \int_{\frac{1}{2}}^1 \left(x-\frac{1}{2}\right) \dd x=\frac{1}{8}.$}

The proof of Proposition~\ref{thm:product-feasible} is relegated to \S\ref{sec:proofs}. The ``if'' direction is a consequence of the following, more general lemma.
\begin{lemma}
\label{lem:mps-monotone}
Let $\p = \mu_1 \times \cdots \times \mu_n \in \Delta([0,1]^n)$ be $p$-feasible, and let $\p' = \mu_1' \times \cdots \times \mu_n'$, where  each $\mu_i$ is a mean preserving spread of $\mu_i'$. Then $\p'$ is also $p$-feasible.
\end{lemma}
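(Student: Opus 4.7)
The plan is to construct an information structure inducing $\p'$ by starting from one that induces $\p$ and applying an independent ``garbling'' to each agent's signal. Since $\mu_i$ is a mean-preserving spread of $\mu_i'$, I would first invoke the standard coupling reformulation: there exists a Markov kernel $K_i$ on $[0,1]$ such that if $y \sim \mu_i$ and, conditional on $y$, we draw $y'$ from $K_i(\cdot \mid y)$, then the marginal of $y'$ is $\mu_i'$ and the coupling satisfies $\E[y \mid y'] = y'$. (This kernel exists because the MPS relation is precisely the existence of such a martingale coupling.)

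Next, I would fix an information structure $I = ((S_i)_i, \bP)$ that induces $\p$, and enlarge the probability space by auxiliary random variables $u_1,\ldots,u_n$ that are mutually independent and independent of $(\omega, s_1,\ldots,s_n)$. I would then define, for each $i$, a new signal $x_i' = g_i(x_i, u_i)$ whose conditional distribution given $x_i$ is exactly $K_i(\cdot \mid x_i)$. The new information structure has agent $i$ observe only $x_i'$.

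The key verification is that agent $i$'s posterior given $x_i'$ equals $x_i'$ itself. By the tower property,
$$
\bP(\omega = h \mid x_i') = \E\bigl[\bP(\omega = h \mid x_i, x_i') \,\big|\, x_i'\bigr].
$$
Because $u_i$ is independent of $\omega$ given $x_i$ and $x_i' = g_i(x_i, u_i)$, the inner conditional probability reduces to $\bP(\omega = h \mid x_i) = x_i$. The MPS coupling identity $\E[x_i \mid x_i'] = x_i'$ then gives $\bP(\omega = h \mid x_i') = x_i'$, so agent $i$'s posterior has marginal distribution $\mu_i'$, as desired.

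Finally, I would verify independence of the joint posterior. Since $\p = \mu_1\times\cdots\times\mu_n$ is a product, the variables $(x_1,\ldots,x_n)$ are unconditionally independent under $\bP$; combined with the independence of the noise $(u_1,\ldots,u_n)$ from each other and from $(x_1,\ldots,x_n)$, the pairs $(x_i, u_i)$ are mutually independent, and hence so are the garbled posteriors $(x_1',\ldots,x_n') = (g_1(x_1,u_1),\ldots,g_n(x_n,u_n))$. Therefore the joint distribution is $\mu_1'\times\cdots\times\mu_n' = \p'$, proving $p$-feasibility. I do not expect a real obstacle here; the only subtle point is keeping the garbling noise independent of $\omega$ given $x_i$, which ensures that the new signal conveys no more information about the state than $x_i$ does.
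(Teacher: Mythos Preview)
Your proof is correct and is precisely the ``garbling'' argument that the paper itself mentions as an alternative route: given an information structure inducing $\p$, apply an independent garbling (via the martingale-coupling kernel from the mean-preserving-spread relation) to each agent's posterior. The paper, however, deliberately takes a different path. It argues by contrapositive using its no-trade characterization (Theorem~\ref{thm:no-trade}): assuming $\p'$ is infeasible, there is a profitable trading scheme $(a_1',\ldots,a_n')$ against $\p'$; setting $a_i(x_i)=\E[a_i'(x_i')\mid x_i]$ via the same mean-preserving-spread coupling, the authors use Jensen's inequality on the concave map $z\mapsto -\max\{0,z\}$ (together with the independence of the coordinates) to show the new scheme is profitable against $\p$, contradicting feasibility.

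Your construction is more self-contained (it does not rely on Theorem~\ref{thm:no-trade}) and directly exhibits the information structure inducing $\p'$. The paper's proof, by contrast, is chosen to showcase the no-trade characterization as a working tool. Both arguments hinge on the same coupling $(x_i,x_i')$ with $\E[x_i\mid x_i']=x_i'$ and on the product structure of $\p$; they simply exploit it in opposite directions.
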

This lemma can be seen as a corollary of Blackwell's Theorem \citep[Theorem 12]{blackwell1951comparison}: Given an information structure that induces $\p$, one can apply an independent garbling to each coordinate to arrive at a structure that induces $\p'$. Our proof illustrates how the result can be obtained via the no-trade arguments of our Theorem~\ref{thm:no-trade}, without invoking Blackwell's Theorem.

%Since $\frac{(a-1)^2}{2}=\int_a^1 (x-a)\,\,\dd x$, Proposition~\ref{thm:product-feasible} gives a simple sufficient  condition for feasibility: $\p = \nu \times \nu$ is $\half$-feasible if $\nu$ is  second-order stochastically dominated by the uniform distribution on $[0,1]$. Hence, for example, the uniform distribution on $[r,\,1-r]^2$ is feasible for all $0 \leq r < \half$. A simple necessary condition for feasibility is given by $\int_{\frac{1}{2}}^1 (x-\half)\,\dd \nu(x) \leq 1/8$, since $F(\half)=\half$ for any symmetric  distribution.

Using Proposition~\ref{thm:product-feasible}, we show in~\S\ref{sect_Gaussian}  that Gaussian signals may induce independent beliefs, provided that that they are not too informative. Let $\nu$ be the belief distribution induced by a signal which, conditioned on the state, is a unit variance Gaussian with mean $\pm d$. Then $\nu\times \nu$ is $\half$-feasible if and only if $d$ lies between the $\frac{1}{4}$-quantile and $\frac{3}{4}$-quantile of the standard normal distribution.

\medskip

Interestingly, product distributions with a given marginal cease being feasible once the number of agents becomes large enough, as demonstrated in the following proposition.

 \begin{proposition}\label{pro:product}
    For every probability measure $\nu \in \Delta([0,1])$ that differs from a Dirac measure, for sufficiently large $n$ the product distribution $\nu^n \in \Delta([0,1]^n)$ is not feasible.
 \end{proposition}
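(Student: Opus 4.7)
My plan is to apply the no-trade characterization of Theorem~\ref{thm:no-trade} by exhibiting a single function $a\colon [0,1]\to[-1,1]$ such that the symmetric trading scheme $a_i=a$ for all $i$ violates condition~\eqref{eq:trading} once $n$ is large enough. The intuition is that the central-limit-type cancellation between $n$ independent copies will keep the mediator's liability $\mathbb{E}\max\{0,\sum_i a(x_i)\}$ at the scale $\sqrt{n}$, while the expected transfer $\sum_i \mathbb{E}[a(x_i)x_i]$ grows linearly in $n$.

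Because $\nu$ is not a Dirac measure, its support contains at least two distinct points, so I can choose $\alpha<\beta$ in $[0,1]$ with $q_L:=\nu([0,\alpha))>0$ and $q_H:=\nu((\beta,1])>0$. Define
$$
a(x)\;=\;q_L\cdot \mathds{1}_{x>\beta}\;-\;q_H\cdot\mathds{1}_{x<\alpha},
$$
which takes values in $[-1,1]$. A direct computation gives $\mathbb{E}_\nu[a(x)]=q_Lq_H-q_Hq_L=0$, and
$$
T\;:=\;\mathbb{E}_\nu[a(x)\,x]\;=\;q_L\!\int_{(\beta,1]} x\,\dd\nu(x)\;-\;q_H\!\int_{[0,\alpha)} x\,\dd\nu(x)\;\geq\; q_Lq_H(\beta-\alpha)\;>\;0.
$$
Thus, under $\p=\nu^n$, the linear term in~\eqref{eq:trading} equals $nT$.

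To bound the non-linear term, let $S_n=\sum_{i=1}^n a(x_i)$, an i.i.d.\ sum of mean-zero, bounded random variables with variance $\sigma^2=\mathrm{Var}_\nu(a(x))<\infty$. Since $\mathbb{E}[S_n]=0$, the elementary bound $\max\{0,S_n\}\leq |S_n|$ and Cauchy--Schwarz give
$$
\mathbb{E}\!\left[\max\{0,S_n\}\right]\;\leq\;\mathbb{E}|S_n|\;\leq\;\sqrt{\mathbb{E}[S_n^2]}\;=\;\sigma\sqrt{n}.
$$
Combining these, the left-hand side of~\eqref{eq:trading} is at least $nT-\sigma\sqrt{n}$, which is strictly positive whenever $n>\sigma^2/T^2$. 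By Theorem~\ref{thm:no-trade}, $\nu^n$ cannot be feasible for such~$n$.

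The only step requiring any real care is the selection of $\alpha,\beta$ and the verification that $a$ is a bona fide element of $[-1,1]$; the rest reduces to a CLT-style moment bound. I do not anticipate a serious obstacle, since the trading function needs only to be bounded and mean-zero with $\mathbb{E}_\nu[a(x)x]>0$, and any non-Dirac $\nu$ trivially admits such a function.
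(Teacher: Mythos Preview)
Your argument is correct and follows the same strategy as the paper: invoke Theorem~\ref{thm:no-trade}, pick a trading scheme so that $\sum_i a_i(x_i)$ is a sum of i.i.d.\ mean-zero bounded variables, and compare the linear growth of $\sum_i\mathbb{E}[a_i(x_i)x_i]$ against the $\sqrt{n}$ growth of $\mathbb{E}\max\{0,\sum_i a_i(x_i)\}$ obtained from the second-moment bound.

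The implementation differs slightly. The paper splits the agents into two equal groups, one using $a_i=\mathds{1}_{x>m}$ and the other $a_i=-\mathds{1}_{x<m}$ around the median $m$; the mean-zero property of $\sum_i a_i$ comes from balancing the group sizes. You instead use a single function $a=q_L\mathds{1}_{x>\beta}-q_H\mathds{1}_{x<\alpha}$ for all agents and get the mean-zero property by weighting. Your version is arguably cleaner: it avoids the case distinction on the parity of $n$ and the separate treatment of a possible atom at the median, and the existence of suitable thresholds $\alpha<\beta$ follows immediately from $\nu$ not being a point mass. The paper's version has the minor advantage of yielding an explicit quantitative threshold in terms of the median alone.
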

 
 The proof, that is relegated to \S\ref{sec:proofs}, uses our Theorem~\ref{thm:no-trade}. We show that with sufficiently large number of agents, a mediator can implement a strictly beneficial trading scheme, i.e.\ one that violates Theorem~\ref{thm:no-trade}. In fact, we show that the product distribution $\nu^n$ is infeasible whenever $\lfloor \frac{n}{2} \rfloor > \frac{1}{8}\big(\int_m^1 x\,\dd\nu(x)-\int_0^m x \,\dd\nu(x)\big)^{-2}$, where $m$ is the median of the marginal distribution $\nu$.
 
%\new{In \S\ref{sec:proofs}, we  offer an alternative information-theoretic proof of Proposition~\ref{pro:product}, which makes this proposition more intuitive. The idea is that for feasible $\p=\nu^n$, each receiver gets an independent piece of information about $\omega$ through his signal. However, the whole information contained in the union of these pieces, cannot exceed the information in $\omega$, which is at most one bit. Hence each receiver can get at most $1/n$ bit of information. Signals are therefore almost non-informative and $\nu$ necessarily concentrates near the prior.}

\section{First-order Bayesian Persuasion and Extreme Feasible Joint Posterior Belief Distributions}
\label{sec:persuasion}
\paragraph{First-order Bayesian persuasion.}
In this section we consider a sender who sends information regarding an underlying state to a group of $n$ receivers. The sender's utility depends on the actions of the receivers. We assume that each receiver's utility depends only on the state and her own action, as, for example, is common in the social learning literature. Since the equilibrium action of a receiver is dictated solely by her first-order beliefs, we call this setting {\em first-order} Bayesian persuasion.  We note that the results of this section do not rely on the characterizations of feasibility given in Theorems~\ref{thm:two-agent-feasible} and~\ref{thm:no-trade}, but rather offer an additional set of tools to study the set of feasible distributions of posteriors. %We will see that first-order Bayesian persuasion can be regarded as the maximization of a linear functional over the convex set of feasible distributions and, hence, the optima correspond to extreme points of this set.

Formally, a first-order Bayesian persuasion problem is given by $B = (N,\,p,\,(A_i)_{i \in N},\,(u_i)_{i \in N},\,u_s)$. As above, $\omega \in \{\ell,\,h\}$ is a binary state for which $n = |N|$ receivers have a common prior $p \in (0,1)$. Each receiver $i \in N$ has to choose an action $a_i$ from a compact metric set of actions $A_i$. Her utility  $u_i(\omega, a_i)$ depends only on the state and her action. A single sender has utility $u_s(a_1,\ldots,a_n)$ which depends on the receivers' actions. 

The sender chooses an information structure $I = ((S_i)_{i \in N},\,\bP)$ with prior $p$, and each receiver $i$ observes a private signal $s_i \in S_i$ and then chooses an action $a_i$. 

In equilibrium, the action
$$
  \tilde a_i \in \argmax_{a_i \in A_i}\E(u_i(\omega,a_i)\mid s_i)
$$ 
is chosen by $i$ to maximize her expected utility conditioned on her information $s_i$. Note that since $u_i$ depends only on $\omega$ and $a_i$, it follows that given the information structure $I$, receiver $i$'s posterior $x_i = \bP(\omega=h\mid s_i)$ is a sufficient statistic for her utility, i.e.,
$$
  \max_{a_i \in A_i}\E(u_i(\omega,a_i)\mid s_i) = \max_{a_i \in A_i}\E(u_i(\omega,a_i)\mid x_i),
$$
and so a receiver does not decrease her expected utility by discarding her private signal, retaining only the first-order posterior belief $x_i$. We accordingly consider only equilibria in which the agents---even when they are indifferent---use only $x_i$ to choose their actions, so that $\tilde a_i$
is a function of $x_i$.\footnote{This refinement rules out equilibria in which, for example, an agent uses higher order beliefs to break ties when indifferent between two actions.}

The information structure $I$ is chosen to maximize the expectation of $u_s$. We assume that $u_i$ and $u_s$ are upper-semicontinuous, to ensure the existence of equilibria. The value $V(B)$ is the sender's expected equilibrium utility in the first-order Bayesian persuasion problem $B$.

A crucial feature is the assumption that $u_i$ does not depend on the other agents' actions. The case in which externalities are allowed is the general problem of Bayesian mechanism design, which is beyond the scope of this paper.\footnote{Nevertheless, our approach remains  applicable even in the presence of externalities if the game played by receivers is ``simple'' in the sense  of \cite{borgers2019strategically}, i.e., the equilibrium behavior of a receiver does not require her to form higher order beliefs.} In contrast, in first-order Bayesian persuasion the receivers have no strategic incentives. This implies that their higher order beliefs are irrelevant to the sender, who in turn is solely interested in their first-order posterior beliefs. This is captured by the following proposition, which states that for every first-order Bayesian persuasion problem there is an indirect utility function that the sender maximizes by choosing the receivers' posteriors. Of course, the posterior distribution must be feasible. 

Denote by $\cP_p^N$ the set of $p$-feasible $\p \in \Delta([0,1]^N)$.
\begin{proposition}
\label{prop:social-persuasion}
For every first-order Bayesian persuasion problem $B = (N,\,p,\,(A_i)_i,\,(u_i)_i,\,u_s)$ there is an {\em indirect utility function} $v \colon [0,1]^N \to \R$ such that the value $V(B)$ is given by
$$
  V(B) = \max_{\p \in \cP_p^n}\int v(x_1,\ldots,x_n)\,\dd \p(x_1,\ldots,x_n).
$$  
\end{proposition}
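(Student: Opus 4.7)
The plan is to define the indirect utility $v$ from the receivers' optimal actions and then show that maximizing over information structures reduces to integrating $v$ against the induced joint posterior distribution. First, for each receiver $i$ and posterior $x_i \in [0,1]$, let
$$U_i(x_i, a_i) = x_i \cdot u_i(h, a_i) + (1-x_i) \cdot u_i(\ell, a_i)$$
be the expected utility of playing $a_i$. Upper-semicontinuity of $u_i$ and compactness of $A_i$ guarantee that $\argmax_{a_i \in A_i} U_i(x_i, a_i)$ is non-empty and compact for every $x_i$; the Kuratowski--Ryll-Nardzewski measurable selection theorem then yields a measurable selection $\tilde{a}_i(x_i)$ from this argmax. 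For the sender-preferred equilibrium one refines this selection to additionally maximize $u_s(\tilde{a}_1(x_1),\ldots,\tilde{a}_n(x_n))$ pointwise, which is again a measurable selection problem from a compact-valued correspondence. I would then set
$$v(x_1,\ldots,x_n) = u_s\bigl(\tilde{a}_1(x_1),\ldots,\tilde{a}_n(x_n)\bigr).$$

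Next I would verify the equivalence. For any information structure $I = ((S_i)_i,\bP)$ with prior $p$, each receiver's posterior $x_i = \bP(\omega=h\mid s_i)$ is a sufficient statistic for her expected utility, since $u_i$ depends only on $\omega$ and $a_i$. Hence $\tilde{a}_i(x_i)$ is a best response to $s_i$ within the refinement in which actions depend only on posteriors, and by the law of iterated expectation the sender's equilibrium payoff under $I$ equals $\int_{[0,1]^N} v \, \dd\p_I$. Since by definition the map $I \mapsto \p_I$ ranges exactly over $\cP_p^N$, we obtain
$$V(B) = \sup_{I} \int v \, \dd\p_I = \sup_{\p \in \cP_p^N} \int v \, \dd\p.$$

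Finally I would argue that this supremum is attained, justifying the $\max$ in the statement. The set $\cP_p^N$ is a weak-$\ast$ closed subset of $\Delta([0,1]^N)$ and hence compact by Prokhorov's theorem; upper-semicontinuity of $v$, inherited from that of $u_s$ via a maximum-theorem argument applied to each $\tilde{a}_i$, then yields the existence of a maximizer. The main obstacle is the measurable-selection step: the selection $\tilde{a}_i$ must be simultaneously measurable, sender-optimal for tie-breaking, and regular enough that $v$ is upper-semicontinuous on $[0,1]^N$. This is a standard but slightly delicate issue that underlies the existence of sender-preferred equilibria in Bayesian persuasion; once handled, the rest of the proof is definition-chasing via sufficient statistics and the definition of feasibility.
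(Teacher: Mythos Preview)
Your proposal is correct and follows essentially the same approach as the paper: define $v(x_1,\ldots,x_n) = u_s(\tilde a_1(x_1),\ldots,\tilde a_n(x_n))$ via the receivers' posterior-dependent best-response maps and observe that the sender's payoff under any information structure equals $\int v\,\dd\p_I$, so optimizing over information structures is the same as optimizing over $\cP_p^N$. You are in fact more careful than the paper, which simply invokes its equilibrium-refinement assumption to posit the maps $\alpha_i\colon[0,1]\to A_i$ and does not explicitly address measurable selection, sender-preferred tie-breaking, upper-semicontinuity of $v$, or attainment of the maximum.
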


\paragraph{Example: Polarizing receivers.}
In \S\ref{sec:applications} we showed that joint belief distributions can be anti-correlated, and explored the extent to which identically distributed binary signals can be anti-correlated. For general signal distributions, this question can be formalized using a first-order Bayesian persuasion approach. 

Consider a sender whose indirect utility for the posteriors of two receivers is
$$
  v(x_1,x_2) = |x_1-x_2|^a,
$$
for some parameter $a > 0$. Informally, the sender wishes to maximize the polarization between receivers, or the discrepancy between their posteriors.

For the case $a=2$ we solve the sender's problem completely.
\begin{proposition}
\label{prop:quadratic-persuasion}
Let $V$ be the value of the two receiver first-order Bayesian persuasion problem where the sender's indirect utility is
$$
  v(x_1,x_2) = (x_1-x_2)^2
$$
and the prior is $p \in (0,1)$. Then $V(B) = (1-p)p$.
\end{proposition}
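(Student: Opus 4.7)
The plan is to reduce the persuasion problem to optimization over feasible joint distributions, then bound the resulting expectation via a Hilbert-space projection argument, and finally exhibit a matching construction. By Proposition~\ref{prop:social-persuasion}, $V(B)=\sup_{\p\in\cP_p^2}\int(x_1-x_2)^2\,\dd\p(x_1,x_2)$, so it suffices to maximize $\E[(x_1-x_2)^2]$ across information structures with prior $p$.

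For the upper bound, I would exploit the Hilbert-space point of view flagged in the introduction. Writing $y=\mathds{1}_{\omega=h}$, the posterior $x_i=\E[y\mid s_i]$ is the orthogonal projection $P_i y$ of $y$ onto $L^2(\sigma(s_i))$. Since constants are fixed by every $P_i$, the centered posteriors satisfy $x_i-p=P_i(y-p)$, whence $x_1-x_2=(P_1-P_2)(y-p)$. The key step is then the operator-norm inequality $\|P_1-P_2\|_{\mathrm{op}}\le 1$, which I would prove from the decomposition $(P_1-P_2)v=(I-P_2)P_1 v-P_2(I-P_1)v$: the two summands lie in the mutually orthogonal subspaces $\mathrm{Im}(I-P_2)$ and $\mathrm{Im}(P_2)$, so Pythagoras gives $\|(P_1-P_2)v\|^2=\|(I-P_2)P_1 v\|^2+\|P_2(I-P_1)v\|^2\le\|P_1 v\|^2+\|(I-P_1)v\|^2=\|v\|^2$, using that orthogonal projections are contractions and one more application of Pythagoras. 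Taking $v=y-p$ and using $\|y-p\|_{L^2}^2=\mathrm{Var}(y)=p(1-p)$ yields $\E[(x_1-x_2)^2]\le p(1-p)$.

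For the matching lower bound, I would exhibit the simplest candidate: let agent~$1$ observe $\omega$ perfectly and give agent~$2$ a trivial signal. Then $x_1=y$ and $x_2\equiv p$, so $\E[(x_1-x_2)^2]=\E[(y-p)^2]=p(1-p)$. Feasibility is immediate from the construction, so the supremum is attained and the value equals $p(1-p)$.

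The only non-trivial ingredient is the projection-difference bound $\|P_1-P_2\|_{\mathrm{op}}\le 1$; the rest is direct verification. Notably, this route sidesteps the feasibility characterizations of Theorems~\ref{thm:two-agent-feasible} and~\ref{thm:no-trade}, consistent with the paper's remark that the first-order persuasion results rely on the separate Hilbert-space technique.
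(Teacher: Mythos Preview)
Your proof is correct and, like the paper's, rests on the Hilbert-space view of conditional expectations as orthogonal projections; the matching construction is identical. The key lemma differs, however. The paper does not prove the operator-norm bound $\|P_1-P_2\|_{\mathrm{op}}\le 1$ directly. Instead it establishes a geometric ``sphere lemma'' (Lemma~\ref{lem:ortho}): any orthogonal projection $w$ of a vector $u$ satisfies $\|w-\tfrac12 u\|=\tfrac12\|u\|$, so both projections $\hat X_1,\hat X_2$ of (a normalized version of) the state indicator lie on a common sphere of diameter $\|\hat X\|$, and the triangle inequality gives $\|\hat X_1-\hat X_2\|\le\|\hat X\|$. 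Your decomposition $(P_1-P_2)v=(I-P_2)P_1v-P_2(I-P_1)v$ into orthogonal pieces followed by two applications of Pythagoras is a clean alternative route to the same inequality, and arguably more direct since it avoids the auxiliary normalization $\hat X=X/p-1$. The trade-off is that the paper's sphere lemma is reused verbatim for the anti-correlation result (Proposition~\ref{prop:anti}), where the sphere picture in a three-dimensional subspace becomes essential; your operator-norm argument is tailored to the difference $P_1-P_2$ and would not immediately yield that second application.
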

In this case of $a=2$, the optimum can be achieved by completely informing one agent, and leaving the other completely uninformed. E.g., by letting $s_1 = \omega$ and $s_2 = 0$. We show in \S\ref{sec:proofs} that when $p=\half$, the same policy is also optimal for all $a < 2$. \cite{dubins1980maximal} show the same result for $a=1$ and any $p$.

For $a\geq 3$, it is no longer true that it is optimal to reveal the state to one receiver and leave the other uninformed, which yields utility $2^{-a}$ to the sender. For example, the posterior distribution
\begin{align}
    \label{eq:extreme-3}
    \p = \frac{1}{4}\delta_{0,2/3}+\frac{1}{4}\delta_{2/3,0}+\frac{1}{2}\delta_{2/3,2/3}
\end{align}
can be easily verified to be $\half$-feasible using \eqref{eq:aumann-feasible-quant-2}, and yields utility
$\frac{1}{2}\left(\frac{2}{3}\right)^a$, which for $a=3$ (for example) is larger than $2^{-a}$.

\medskip
Another approach to quantifying the extent to which beliefs can be anti-correlated is by directly minimizing their covariance. This question admits a simple, non-trivial solution, as the next proposition shows: for the prior $p=\half$, the smallest possible covariance between posterior beliefs is $-1/32$, and is achieved on a distribution supported on four points.
\begin{proposition}
\label{prop:anti}
Let $V$ be the value of the two receiver first-order Bayesian persuasion problem where the prior is $p=\half$ and the sender's indirect utility is
$$
  v(x_1,x_2) = -(x_1-p)\cdot (x_2-p).
$$
Then $V(B) = 1/32$, and is achieved by
\begin{align*}
    \p = \frac{1}{8}\delta_{3/4,0} + \frac{3}{8}\delta_{3/4,1/2}+\frac{1}{8}\delta_{1/4,1}+\frac{3}{8}\delta_{1/4,1/2}.
\end{align*}
\end{proposition}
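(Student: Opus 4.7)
The plan is to work in the Hilbert space of square-integrable random variables on the probability space underlying the information structure, using the fact that posterior beliefs are orthogonal projections. I would set $Y := \mathds{1}_{\omega=h} - \half$, so $Y \in \{-\half,\half\}$ with $\E Y = 0$ and $\E Y^2 = \tfrac{1}{4}$, and $Z_i := x_i - \half = \E[Y \mid s_i]$ for $i=1,2$. The sender's objective rewrites as $-\E[Z_1 Z_2]$, so the task reduces to proving the universal lower bound $\E[Z_1 Z_2] \geq -\tfrac{1}{32}$ and exhibiting a feasible distribution attaining it. The driving identity is the orthogonality of the projection, $\E[Z_i Y] = \E[Z_i^2] =: \sigma_i^2$.

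For the lower bound, the key step is to apply Cauchy-Schwarz, not to each $Z_i$ individually, but to the sum $W := Z_1 + Z_2$ paired with $Y$:
\[
(\E[W Y])^2 \leq \E[W^2] \cdot \E[Y^2].
\]
Substituting $\E[W Y] = \sigma_1^2 + \sigma_2^2$, $\E[W^2] = \sigma_1^2 + \sigma_2^2 + 2\,\E[Z_1 Z_2]$, and $\E[Y^2] = \tfrac{1}{4}$, and setting $t := \sigma_1^2 + \sigma_2^2 \geq 0$, this rearranges to
\[
\E[Z_1 Z_2] \geq 2 t^2 - \tfrac{t}{2}.
\]
The right-hand side is a quadratic in $t$ minimized at $t = \tfrac{1}{8}$ with value $-\tfrac{1}{32}$, yielding $V(B) \leq \tfrac{1}{32}$ for every information structure.

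For the matching construction, I would verify directly that the proposed $\p$ is feasible---either by checking the finitely many inequalities of Theorem~\ref{thm:two-agent-feasible} generated by its support, or, more transparently, by writing down an explicit information structure with $|S_1|=2$ and $|S_2|=3$ that induces it---and then compute $\E[(x_1-\half)(x_2-\half)] = -\tfrac{1}{32}$ directly on the four atoms. This gives $V(B) \geq \tfrac{1}{32}$, closing the gap.

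The main obstacle is choosing the right vector for Cauchy-Schwarz. Applied separately to each $Z_i$ and $Y$, the inequality recovers only $\sigma_i \leq \half$; the naive bound $\E[(Y-Z_1)(Y-Z_2)] \geq -\sqrt{(\tfrac{1}{4}-\sigma_1^2)(\tfrac{1}{4}-\sigma_2^2)}$ is similarly too loose. The sum $W = Z_1 + Z_2$ succeeds because equality in its Cauchy-Schwarz step requires $Z_1 + Z_2 = cY$ for some constant $c$, pinpointing a rigid class of extremizers in which $Z_1 + Z_2$ depends only on $\omega$. On the proposed $\p$ one checks that $Z_1 + Z_2 = \half\, Y$ (with $\sigma_1^2 = \sigma_2^2 = \tfrac{1}{16}$, hitting $t = \tfrac{1}{8}$), so the Cauchy-Schwarz inequality is saturated---explaining why this particular four-point distribution is optimal.
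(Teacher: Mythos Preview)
Your proof is correct and takes a genuinely different route from the paper's. Both arguments work in the Hilbert space $\cL^2$ and exploit the projection identity $\E[Z_iY]=\E[Z_i^2]$, but the paper proceeds geometrically: it invokes Lemma~\ref{lem:ortho} to place each $\hat X_i=2Z_i$ on the sphere of radius $\tfrac12\Vert\hat X\Vert$ around $\tfrac12\hat X$, observes that the span of $\hat X,\hat X_1,\hat X_2$ is at most three-dimensional, and then appeals to a separate elementary-geometry fact (Lemma~\ref{lem:sphere}, left as an exercise or to Mathematica) about the minimal inner product of two points on a sphere through the origin in $\R^3$. Your Cauchy--Schwarz applied to $W=Z_1+Z_2$ against $Y$ sidesteps the 3D reduction entirely and replaces Lemma~\ref{lem:sphere} with a two-line quadratic minimization in $t=\sigma_1^2+\sigma_2^2$. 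The payoff is a fully self-contained argument that is shorter and that delivers the equality condition $Z_1+Z_2=cY$ for free---which, as you note, the proposed four-point distribution satisfies with $c=\tfrac12$. The paper's geometric picture, on the other hand, makes the structure of the extremizers visually transparent and would adapt more readily to other sphere-constrained inner-product problems; your inequality is more tailored to this specific quadratic objective.
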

The distribution $\p$ is depicted in Figure~\ref{figure:anti}.

The proofs of Propositions~\ref{prop:quadratic-persuasion} and ~\ref{prop:anti} are presented in \S\ref{sec:proofs}. They do not rely on our characterization of feasible beliefs, and instead use the fact that conditional expectations are orthogonal projections in the Hilbert space of square-integrable random variables. This technique exploits the quadratic form of these persuasion problems. A natural avenue for future research is the extension of these techniques---or the development of new techniques---to tackle non-quadratic problems.

\begin{figure}[h]
\begin{center}
\begin{tikzpicture}[scale=0.4,line width=1pt]
        \draw (0,0)--(10,0)--(10,10)--(0,10)--(0,0);
        
        \draw[thin, densely dashed] (2.5,0)--(2.5,10);
		\draw[thin, densely dashed] (7.5,0)--(7.5,5);
		\draw[thin, densely dashed] (0,5)--(7.5,5);
		
		\draw [->,>=stealth] (0,0) -- (10.7,0);
\draw [->,>=stealth] (0,0) -- (0,10.7);
\node[below right] at (10.7,0) {\large $x_1$};
\node[above left] at (0,10.7) {\large $x_2$};
	    
	    \node[below] at (2.5,0) {\large $\frac{1}{4}$};
	    	    \node[below] at (0,0) {\large $0$};
   	    \node[below] at (10.1,0) {\large $1\phantom{\frac{1}{2}}$};
		\node[below] at (7.5,0) {\large $\frac{3}{4}$};
		\node[left] at (0,5) {\large $\frac{1}{2}$};
		\node[left] at (0,9.7) {\large $1$};

        \filldraw[blue] (7.5,0) circle (0.2);
        \node[above right] at (7.5,0) {\large $\frac{1}{8}$};
        \filldraw[red] (7.5,5) circle (0.4);
        \node[right] at (7.8,5) {\large $\frac{3}{8}$};
        \filldraw[red] (2.5,10) circle (0.2);
        \node[below left] at (2.5,10) {\large $\frac{1}{8}$};
        \filldraw[blue] (2.5,5) circle (0.4);
        \node[left] at (2.2,5) {\large $\frac{3}{8}$};

        \end{tikzpicture}
\end{center}
\caption{An optimal distribution of the first-order Bayesian persuasion problem given by the indirect utility $v(x_1,x_2) = -(x_1-p) \cdot (x_2-p)$ and prior $p=\half$; see Proposition~\ref{prop:anti}. For $p=\half$, this distribution achieves the lowest possible covariance between posterior beliefs: $-1/32$. Blue points describe the distribution of posteriors conditional on $\omega=0$. Red points describe the distribution of posteriors conditional on $\omega=1$.\label{figure:anti}}
\end{figure}
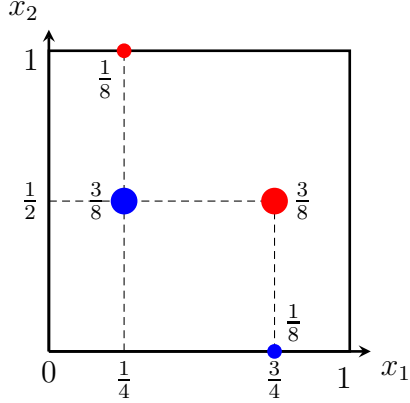

\paragraph{Extreme feasible joint posterior belief distributions.}
The set $\cP_p^N$ of $p$-feasible joint posterior belief distributions is a convex, compact subset of $\Delta([0,1]^N)$, when the latter is naturally equipped with the weak* topology; see  \cite{dubins1980maximal} or Proposition~\ref{prop:convex-compact} in the Appendix.  Together with Proposition~\ref{prop:social-persuasion}, this fact implies that $V(B)$ is always achieved at an {\em extreme point} of the set of $p$-feasible distributions $\cP_p^N$. It is thus natural to study the set of extreme points.

In the single agent case the concavification argument of \cite{aumann1995repeated} and \cite{kamenica2011bayesian} implies that every extreme point is a distribution with support of size at most $2$. This is not true for $2$ or more agents. For example, the posterior distribution with support of size $3$ defined in \eqref{eq:extreme-3} is extreme in $\cP_{1/2}^2$, since its restriction to any support of smaller cardinality is not feasible, as it cannot satisfy the martingale condition. The next theorem shows that there in fact exist extreme points with countably infinite support. It also states that the support cannot be too large, in the sense that every extreme point is supported on a set of zero Lebesgue measure.

\begin{theorem}
\label{thm:extreme}
Let $|N| \geq 2$. Then 
\begin{enumerate}
\item For every $p \in (0,1)$ there exists an extreme point in $\cP_p^N$ whose support has an infinite countable number of points. 
\item For every extreme $\p \in \cP_p^N$ there exists a measurable $A \subseteq [0,1]^N$ such that $\p(A)=1$, and the Lebesgue measure of $A$ is zero.%\footnote{Note that $A$ is not necessarily closed, and this result does not exclude the possibility that the support of some extreme $\p$ (i.e., the minimal closed set $B$ with $\p(B)=1$) has positive Lebesgue measure.}
\end{enumerate}
\end{theorem}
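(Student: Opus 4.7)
\textbf{Plan for Theorem~\ref{thm:extreme}.}

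\textbf{Part 2.} My plan is to reduce extremity in $\cP_p^N$ to extremity inside the classical set of probability measures with given marginals, and then invoke \cite{lindenstrauss1965remark}. Any feasible $\p$ decomposes as $\p=p\p^h+(1-p)\p^\ell$ via some information structure, in which Bayes consistency forces the marginals of the conditionals to satisfy $\dd\p_i^h(x)=\tfrac{x}{p}\dd\p_i(x)$ and $\dd\p_i^\ell(x)=\tfrac{1-x}{1-p}\dd\p_i(x)$. The first step is to show that in this representation $\p^h$ is extreme in the convex set $M^h$ of probability measures on $[0,1]^N$ with marginals $(\p_i^h)_{i\in N}$. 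Given a splitting $\p^h=\tfrac12(\mu_1+\mu_2)$ with $\mu_j\in M^h$, set $\p^{(j)}=p\mu_j+(1-p)\p^\ell$; each $\p^{(j)}$ is a probability measure with marginals $p\p_i^h+(1-p)\p_i^\ell=\p_i$, and is feasible via the information structure that draws signals from $\mu_j$ in state $h$ and from $\p^\ell$ in state $\ell$ (Bayes consistency for this structure is immediate from the marginal identities). Then $\p=\tfrac12(\p^{(1)}+\p^{(2)})$ in $\cP_p^N$, so extremity of $\p$ forces $\p^{(1)}=\p^{(2)}$, which after cancelling the common $(1-p)\p^\ell$ yields $\mu_1=\mu_2$; the same argument applies to $\p^\ell$.

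The second step is Lindenstrauss's theorem (in the extension to $|N|$ factors and general marginals): any extreme probability measure on $[0,1]^N$ with prescribed marginals is supported on a Borel set of zero $N$-dimensional Lebesgue measure. Applying this to $\p^h$ and $\p^\ell$ yields Borel null sets $A^h,A^\ell$ carrying the two conditionals, so $A=A^h\cup A^\ell$ is a null set with $\p(A)=1$. The technical nuisance is that the marginals $\p_i^h,\p_i^\ell$ may have atoms while Lindenstrauss is classically stated for atomless marginals: I would handle this by conditioning on the atomic parts of the coordinates and inducting on the number of atomic coordinates, noting that each atomic coordinate traps the relevant piece inside a lower-dimensional coordinate hyperplane of $[0,1]^N$, which is automatically Lebesgue-null once $|N|\geq 2$.

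\textbf{Part 1.} My plan is to construct an explicit countably-supported extreme point of $\cP_p^{\{1,2\}}$ and then promote it to $\cP_p^N$ for $|N|\geq 2$ by tensoring with Dirac masses at $p$ on the remaining coordinates. The promotion works because any decomposition of $\tilde\p=\p\otimes\delta_p^{\otimes(|N|-2)}$ in $\cP_p^N$ must stay concentrated on the hyperplane $\{x_i=p:i\geq 3\}$ and therefore factors as $\p^{(j)}\otimes\delta_p^{\otimes(|N|-2)}$ with $\p^{(j)}\in\cP_p^{\{1,2\}}$, so extremity in the two-agent case transfers. For the core two-agent construction I would take a ``zigzag'' measure supported on a countable chain $\{(a_k,b_k),\,(a_k,b_{k+1})\}_{k\geq 1}$, with positive weights chosen so that the marginals satisfy the martingale condition and the Dawid inequalities of Theorem~\ref{thm:two-agent-feasible} are saturated at each singleton pair $A_1=\{a_k\}$, $A_2=\{b_k\}$. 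Extremity in $\cP_p^{\{1,2\}}$ would then follow from a linear-algebraic argument: any admissible signed perturbation $\epsilon$ of $\p$ must satisfy the three mass/martingale equalities $\sum_k\epsilon_k=\sum_k\epsilon_k a_k=\sum_k\epsilon_k b_k=0$ together with the linear equalities coming from the saturated Dawid inequalities, and the chain geometry propagates these constraints link by link, forcing $\epsilon=0$.

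The step I expect to be hardest is the extremity verification in Part~1: extremity in $\cP_p^{\{1,2\}}$ (as opposed to extremity inside a fixed-marginal slice $M^h$, where a graph support would suffice) allows perturbations that change the marginals, and ruling them out requires a delicate accounting of tight Dawid inequalities along the entire infinite chain. In Part~2 the main technical subtlety is atoms in the marginals of the conditionals, which is a nuisance rather than a conceptual obstacle.
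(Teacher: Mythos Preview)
Your plan for \textbf{Part 2} is correct and essentially matches the paper's. The paper phrases the reduction via the convex set $\mathcal{H}_p^N$ of implementing pairs $(\p^\ell,\p^h)$ and the linear map $(\p^\ell,\p^h)\mapsto (1-p)\p^\ell+p\p^h$, noting that extreme points of the image come from extreme pairs; your direct argument (perturb $\p^h$ inside its marginal class, push this to a perturbation of $\p$ in $\cP_p^N$) is the same mechanism unwound, and is arguably cleaner. Your worry about atoms in the marginals is unnecessary: the paper invokes (and reproves) the Lindenstrauss--Shortt singularity theorem for \emph{arbitrary} marginals on $[0,1]^N$, so no atomic-versus-atomless splitting is needed.

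For \textbf{Part 1}, the construction you outline is the same zigzag chain the paper uses (their support is $\{(t_k,w_k),(t_{k+1},w_k)\}_{k\geq 1}$, generated by an e-mail-game-style structure). The difference is in the extremity verification. The paper does \emph{not} argue via saturated Dawid inequalities. Instead it shows that the implementation $(\p^\ell,\p^h)$ is \emph{unique}: starting from $t_1=0$ (which forces the mass at $(t_1,w_1)$ entirely into $\p^\ell$), the marginal identity $\tfrac{x}{p}\dd\p^\ell_i=\tfrac{1-x}{1-p}\dd\p^h_i$ propagates rigidly along the zigzag, pinning down every weight. Then, given any convex decomposition $\p=\alpha Q+(1-\alpha)\hat Q$ in $\cP_p^{\{1,2\}}$, the implementations of $Q$ and $\hat Q$ must have supports contained in those of $\p^\ell,\p^h$ (by uniqueness of $\p$'s implementation), and the same recursive marginal identities determine $Q$ up to a global scalar; normalization then forces $Q=\p$. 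This ``unique implementation $\Rightarrow$ extreme'' route exploits the zigzag geometry directly and avoids having to inventory which Dawid inequalities are tight. Your saturated-inequality strategy is plausible, but you would be re-deriving the same link-by-link rigidity through a less transparent lens; the paper's route is what you should aim for. The promotion to $|N|\geq 2$ by tensoring with $\delta_p$ is exactly what the paper does, and your justification (any component of a decomposition must keep the $i$th marginal equal to $\delta_p$, since $\delta_p$ is extreme in $\Delta([0,1])$) is correct.
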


To prove the first part of Theorem~\ref{thm:extreme} we explicitly construct an extreme feasible belief distribution with countably infinite support; see \S\ref{sec:proofs}. The construction is a variant of Rubinstein's e-mail game \citep{rubinstein1989electronic}. Unlike Rubinstein's email game, no agent in our construction is fully informed about the state. The resulting belief distribution for two agents is depicted in Figure \ref{figure:extreme}. Interestingly, it is possible to modify our construction in a way that places the beliefs closer and closer to the diagonal. This results in a sequence of extreme points that converge to a distribution that is supported on the diagonal, and has support size larger than two. Every extreme point that is supported on the diagonal is supported on at most two points, since this reduces to the single agent case. Therefore, for two agents or more, the set of extreme points is not closed within the set of feasible distributions. This demonstrates another distinction from the single agent case where the set of extreme points is closed.

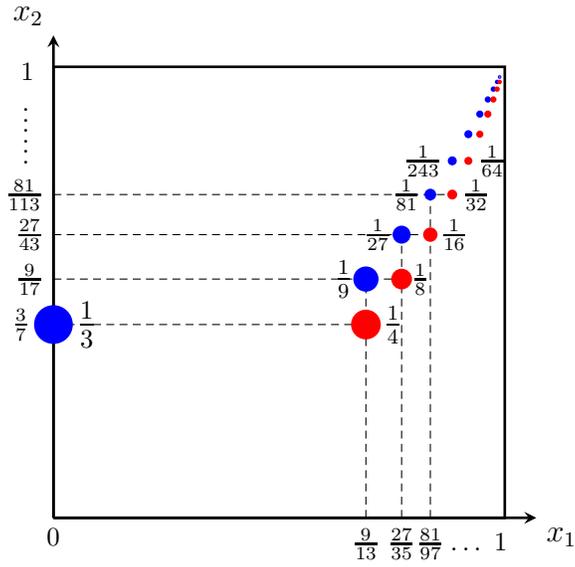
\begin{figure}
\begin{center}
\begin{tikzpicture}[scale=0.6, line width=1pt]
\begin{scope}[rotate=90,yscale=-1]
\draw (0,0)--(10,0)--(10,10)--(0,10)--(0,0);
\draw [->,>=stealth] (0,0) -- (10.7,0);
\draw [->,>=stealth] (0,0) -- (0,10.7);
\node[above left] at (10.7,0) {\large $x_2$};
\node[below right] at (0,10.7) {\large $x_1$};

\draw[thin, densely dashed] (4.28571,0)--(4.28571,6.92308);
\node[left] at (4.28571,0) {$\frac{3}{7}\phantom{a}$};
\draw[thin, densely dashed] (5.29412,0)--(5.29412,7.71429);
\node[left] at (5.29412,0) {$\frac{9}{17}$};

\draw[thin, densely dashed] (6.27907,0)--(6.27907,8.35052);
\node[left] at (6.27907,0) {$\frac{27}{43}$};
\draw[thin, densely dashed] (7.16814,0)--(7.16814,8.83636);
\node[left] at (7.16814,0) {$\frac{81}{113}$};
\node[left] at (8.3,0) {$\vdots\phantom{a}$};
\node[left] at (9,0) {$\vdots\phantom{a}$};
\node[left] at (9.9,0) {$1\phantom{i}$};

\draw[thin, densely dashed] (0,6.92308)--(5.29412,6.92308);
\node[below] at (0,6.92308) {$\frac{9}{13}$};
\draw[thin, densely dashed] (0,7.71429)--(6.27907,7.71429);
\node[below] at (0,7.71429) {$\frac{27}{35}$};
\draw[thin, densely dashed] (0,8.35052)--(7.16814,8.35052);
\node[below] at (0,8.35052) {$\frac{81}{97}$};
\node[below] at (0,0) {$0$};
\node[below] at (0,10.1) {$1\phantom{\frac{1}{2}}$};

\node[below] at (0,8.9) {$\phantom{\frac{2}{3}}\ldots$};

\filldraw[blue] (4.28571,0) circle (0.4);
\node[right] at (4.28571,0.3) {\Large $\frac{1}{3}$};
\filldraw[red] (4.28571,6.92308) circle (0.3);
\node[right] at (4.28571,7.1) {\large  $\frac{1}{4}$};
\filldraw[blue] (5.29412,6.92308) circle (0.25);
\node[left] at (5.29412,6.85) {\large $\frac{1}{9}$};
\filldraw[red] (5.29412,7.71429) circle (0.2);
\node[right] at (5.29412,7.71429) {$\frac{1}{8}$};
\filldraw[blue] (6.27907,7.71429) circle (0.17);
\node[left] at (6.27907,7.71429) {$\frac{1}{27}$};
\filldraw[red] (6.27907,8.35052) circle (0.13);
\node[right] at (6.27907,8.35052) {$\frac{1}{16}$};
\filldraw[blue] (7.16814,8.35052) circle (0.1);
\node[left] at (7.16814,8.35052) {$\frac{1}{81}$};
\filldraw[red] (7.16814,8.83636) circle (0.08);
\node[right] at (7.16814,8.83636) {$\frac{1}{32}$};
\filldraw[blue] (7.91531,8.83636) circle (0.07);
\node[left] at (7.91531,8.83636) {$\frac{1}{243}$};
\filldraw[red] (7.91531,9.19294) circle (0.06);
\node[right] at (7.91531,9.19294) {$\frac{1}{64}$};
\filldraw[blue] (8.50642,9.19294) circle (0.06);
\filldraw[red] (8.50642,9.44708) circle (0.05);
\filldraw[blue] (8.95211,9.44708) circle (0.05);
\filldraw[red] (8.95211,9.62447) circle (0.05);
\filldraw[blue] (9.27612,9.62447) circle (0.04);
\filldraw[red] (9.27612,9.74647) circle (0.04);
\filldraw[blue] (9.50548,9.74647) circle (0.03);
\filldraw[red] (9.50548,9.82954) circle (0.03);
\filldraw[blue] (9.6648,9.82954) circle (0.02);
\filldraw[red] (9.6648,9.88571) circle (0.02);
\filldraw[blue] (9.77401,9.88571) circle (0.01);
\end{scope}
\end{tikzpicture}

\end{center}
\caption{An extreme point of $\mathcal{P}^2_{1/2}$ with infinite countable support. The numbers near the points indicate their probabilities. Conditional on $\omega=\ell$ the pair of posteriors belongs to the set of blue points. Conditional on $\omega=h$ the pair of posteriors belongs to the set of red points.\label{figure:extreme}}
%\fed{I adjusted the figure to match the construction. Please, check that this version is correct}
\end{figure}

The proof of the second part of Theorem~\ref{thm:extreme} relies on the classic result of \cite{lindenstrauss1965remark} regarding extreme points of the set of joint distributions with given marginals.

Theorem~\ref{thm:extreme} leaves a natural question open: Are there any non-atomic extreme points? Or conversely, does every extreme point have countable support?

%\newpage

\section{Implementing feasible distributions}
\label{sec:implementation}
Assume we are given a feasible distribution $\p\in\Delta([0,1]^n)$. A natural question is: Which information structures induce $\p$? And, relatedly, which conditional posterior belief distributions $(\p^\ell,\p^h)$ are compatible with $\p$?
%how can one find such a structure?

By Lemma~\ref{lem:domination} (stated in the Appendix), $\p$ is feasible if and only if there exists a distribution $\q\in\Delta([0,1]^n)$ such that  $\q\leq \frac{1}{p}\p$ and $\dd\q_i(x)=\frac{x}{p}\dd \p_i(x)$ for all agents $i$. Note that this is a linear program, which in general is infinite-dimensional.

Given a solution $\q$, define $\p^h=\q$ and $\p^\ell=\frac{\p-p\cdot \q}{1-p}$. Then the distribution of posteriors $\p$ is induced by the information structure in which the signals have joint distribution $\p^\omega$ conditional on $\omega$ (see Lemma~\ref{lem:revelation} and its proof); in this case, we say that $\p$ is {\em implemented} by the pair $(\p^\ell,\p^h)$, which are also the conditional distributions of the beliefs. Thus, to find an information structure that induces $\p$, it suffices to solve the above mentioned linear program.

When $\p$ has finite support, this linear program has finite dimension, and thus a solution can be numerically calculated by simply applying a linear program solver. In the general, infinite dimensional case, we do not expect that simple, closed-form solutions always exist. An exception is the single agent case, in which an implementing pair is given by (see Lemma~\ref{lem:revelation})
$$
  \dd\p^\ell(x) = \frac{1-x}{1-p}\dd\p(x)\quad\quad\dd\p^h(x) = \frac{x}{p}\dd\p(x).
$$
We make two observations about the single agent case. First, there is a unique pair of conditional belief distributions $(\p^\ell,\p^h)$ that implements $\p$:  every information structure that induces $\p$ will have the same conditional distributions of beliefs. Second, the two distributions $\p^\ell,\p^h$ have the Radon–Nikodym derivative $\frac{\dd\p^h}{\dd\p^\ell}(x)=\frac{1-p}{p}\frac{x}{1-x}$, and so are mutually absolutely continuous, unless there is an atom on $0$ or on $1$. As we now discuss, neither of these two properties hold in general beyond the single agent case.

We consider the case that the number of agents is $n \geq 2$, and that $\p \in \Delta([0,1]^n)$ is a feasible distribution that admits a density. In this case, the next proposition shows that $\p$ can be implemented by $(\p^\ell,\p^h)$ that are very far from being mutually absolutely continuous: they are supported on disjoint sets.\footnote{These pairs are also easily seen to be extreme points among the convex set of pairs that implement $\p$.}
\begin{proposition}
\label{prop:gutman}
Let $n \geq 2$, and let $\p \in \Delta([0,1]^n)$ be $p$-feasible for some $p$. Assume that $\p$ admits a density. Then there exists a subset $D \subset [0,1]^n$, such that $\p$ can implemented by $(\p^\ell,\p^h)$ with the property that $\p^h$ is supported on $D$ and $\p^\ell$ is supported on the complement $\overline{D}$. Furthermore, restricted to $D$ and $\overline{D}$ respectively, $\p^\ell = \frac{1}{1-p}\p$, and $\p^h = \frac{1}{p}\p$. 
\end{proposition}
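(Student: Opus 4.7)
By Lemma~\ref{lem:domination}, $p$-feasibility of $\p$ is equivalent to the existence of a measurable $f\colon[0,1]^n\to[0,1]$ with $\E_\p[f\mid x_i]=x_i$ a.s.\ for each $i$; the corresponding implementation is $\dd\p^h=p^{-1}f\,\dd\p$ and $\dd\p^\ell=(1-p)^{-1}(1-f)\,\dd\p$. Thus the proposition reduces to showing that, under the density assumption and $n\geq 2$, we may choose $f=\mathds{1}_D$ to be $\{0,1\}$-valued; then the claimed formulas $\p^h=\frac{1}{p}\p|_D$ and $\p^\ell=\frac{1}{1-p}\p|_{\overline D}$ are immediate.

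My plan is a Krein--Milman extremality argument. The set
$$
M=\{f\in L^\infty([0,1]^n,\p):0\leq f\leq 1,\ \E_\p[f\mid x_i]=x_i\text{ a.s.\ for every }i\}
$$
is non-empty by feasibility, convex, and weak-$*$ compact (a weak-$*$ closed subset of the unit ball of $L^\infty(\p)$), so by Krein--Milman it has an extreme point $f^{\ast}$. The heart of the proof is to show $f^{\ast}\in\{0,1\}$ a.s.: if for some $\delta>0$ the set $A:=\{\delta\leq f^{\ast}\leq 1-\delta\}$ has positive $\p$-measure, it suffices to produce a non-zero bounded $g$ with $\supp g\subseteq A$, $\|g\|_\infty\leq\delta$, and $\E_\p[g\mid x_i]=0$ a.s.\ for every $i$---then $f^{\ast}\pm g\in M$, contradicting the extremality of $f^{\ast}$. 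Equivalently, the closed subspace $\sum_i L^2(\sigma(x_i))\subseteq L^2(\p)$ must have a non-trivial orthogonal component among functions supported on $A$.

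This local richness statement is the main obstacle. My plan for it is to use the Lebesgue density theorem to choose a point $y\in A$ at which $\p$ has positive, approximately continuous density $\rho$ and $A$ has $\p$-density one; on a small cube $C=\prod_i[y_i-\epsilon,y_i+\epsilon]$ the measure $\p|_C$ is close in total variation to a rescaling of Lebesgue measure, and $A\cap C$ occupies all but an arbitrarily small fraction $\eta$ of $\p(C)$. Because $n\geq 2$, the tensor polynomial $g_0(x)=\prod_{i=1}^n(x_i-y_i)\,\mathds{1}_C(x)$ is non-zero, bounded, and satisfies $\E_{\mathrm{Leb}}[g_0\mid x_i]\equiv 0$, so $\E_\p[g_0\mid x_i]$ is small in $L^\infty$. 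A quantitative perturbation---correcting $g_0$ by a bounded function supported in $C$, which exists because the restrictions of additive functions to $C$ form a proper closed subspace of $L^2(\p|_C)$ with a codimension bounded below uniformly in $\epsilon$ when $n\geq 2$---yields $g$ with $\E_\p[g\mid x_i]\equiv 0$ exactly; truncating to $A\cap C$, absorbing the $O(\eta)$ error from $C\setminus A$ by a second small correction, and rescaling produce the desired $g$. The delicate point is that the correction must be supported in $C$, rather than being merely additive (which would spread it over all of $[0,1]^n$); this is precisely where $n\geq 2$ and the density hypothesis are both essential, the former providing the transverse tensor direction and the latter the room to localize.
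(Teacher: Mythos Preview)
Your overall strategy is sound and represents a genuine alternative to the paper's proof. The paper's argument is five lines: it simply invokes Theorem~3 of \cite{gutmann1991existence}, which says that for any absolutely continuous finite measure $R$ on $[0,1]^n$ and any $\q\le R$, there is a measurable $D$ with $R|_D$ having the same marginals as $\q$; applying this to $R=\frac1p\p$ and the $\q$ from Lemma~\ref{lem:domination} finishes the proof immediately. Your Krein--Milman route is essentially a reconstruction of that Gutmann et al.\ theorem from scratch. That is a legitimate thing to do, and the reduction to ``every extreme point of $M$ is $\{0,1\}$-valued'' is correct.

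Where your write-up has a real gap is the last paragraph. The step ``correct $g_0$ by a bounded function supported in $C$ to make $\E_\p[g\mid x_i]=0$ exactly'' is not justified: knowing that additive functions form a proper subspace of $L^2(\p|_C)$ gives you \emph{some} non-zero element of the orthogonal complement, but it does not give you a small correction of your specific $g_0$. Moreover, after you ``truncate to $A\cap C$'' the conditional expectations are spoiled again, and the ``second small correction'' you invoke has no specified support---if it leaks outside $A$, the whole scheme collapses, because $f^\ast\pm g$ must stay in $[0,1]$. As written these iterated corrections are circular.

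The clean fix is to drop $g_0$ and the cube $C$ entirely and run the Lindenstrauss density argument (the paper's Theorem~\ref{th_Linden} in Appendix~\ref{sec_Linden}) directly on the measure $\mu=\p|_A$, which has a density because $\p$ does. That argument shows additive functions are not dense in $L^1(\mu)$; Hahn--Banach then yields a non-zero $\theta\in L^\infty(\mu)$ annihilating all additive functions. Extending $\theta$ by zero off $A$ and rescaling so $\|\theta\|_\infty\le\delta$ gives exactly the $g$ you need: it is bounded, supported in $A$, and satisfies $\E_\p[g\mid x_i]=0$ for every $i$, since for $g$ supported in $A$ this is equivalent to orthogonality to $h(x_i)|_A$ in $L^2(\p|_A)$. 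No truncation or second correction is required.
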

\begin{proof}
Let $R$ be a measure on $[0,1]^n$. For a subset $D \subseteq [0,1]^n$ we denote by $R\big|_D$ the restriction of $R$ to this subset, i.e., $R\big|_D(A)=R(D\cap A)$ for any measurable set $A$.

In Theorem 3 in \cite{gutmann1991existence} it was shown that for any absolutely continuous, finite measure $R$ on $[0,1]^n$ and every measure $\q\leq R$, there exists a measurable set $D$ such that $R\big|_D$ and $\q$ have identical marginals.

Apply this result to $R = \frac{1}{p}\p$ and $\q \leq \frac{1}{p}\p$, which solves the linear program of Lemma~\ref{lem:domination}. Then there is a $D \subset [0,1]^n$ such that $\q$ and $\frac{1}{p}\p\big|_D$ have the same marginals. Let $\q' = \frac{1}{p}\p\big|_D$. Then $\q'$ is also a solution to the linear program of Lemma~\ref{lem:domination}: $\q' \leq \frac{1}{p}\p$ and $\dd \q_i(x)=\frac{x}{p}\dd \p_i(x)$. Hence, $\p$ is implemented by $\p^h = \q' = \frac{1}{p}\p\big|_D$ and $\p^\ell = \frac{\p-p\cdot \q'}{1-p}=\frac{1}{1-p}\p|_{\overline D}$, which proves the claim.
\end{proof}

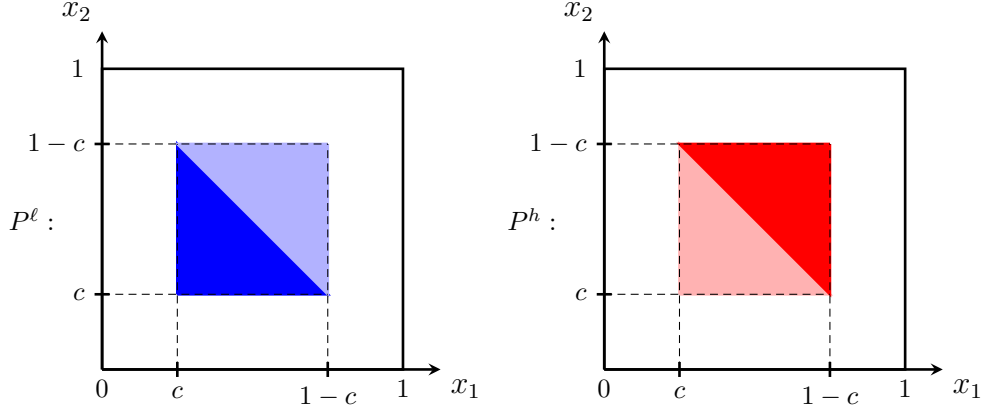
\begin{figure}
\begin{center}
\begin{tikzpicture}[line width=1pt]
\draw (0,0) -- (4,0)--(4,4)--(0,4)--(0,0);
		\draw [->,>=stealth] (0,0) -- (4.5,0);
\draw [->,>=stealth] (0,0) -- (0,4.5);
\node[below right] at (4.5,0) {\large $x_1$};
\node[above left] at (0,4.5) {\large $x_2$};

\filldraw[blue] (1,1)--(1,3)--(3,1)--(1,1);
\filldraw[blue!30!white] (3,3)--(1,3)--(3,1)--(3,3);
\node[below] at (0,0) {0};
\node[below] at (4,0) {1};

\draw (-0.1,1)--(0.1,1);
\node[left] at (-0.1,1) {$c$};
\draw (-0.1,3)--(0.1,3);
\node[left] at (-0.1,3) {$1-c$};
\node[left] at (-0.1,4) {$1$};
\draw[thin, densely dashed] (0,3)--(3,3);
\draw[thin, densely dashed] (0,1)--(3,1);
\draw[thin, densely dashed] (3,0)--(3,3);
\draw[thin, densely dashed] (1,0)--(1,3);

\draw (1,-0.1)--(1,0.1);
\node[below] at (1,-0.1) {$c$};
\draw (3,-0.1)--(3,0.1);
\node[below] at (3,-0.1) {$1-c$};

\node[left] at (-0.5,2) {$P^\ell :$};
\end{tikzpicture}
\begin{tikzpicture}[line width=1pt]
\draw (0,0) -- (4,0)--(4,4)--(0,4)--(0,0);
		\draw [->,>=stealth] (0,0) -- (4.5,0);
\draw [->,>=stealth] (0,0) -- (0,4.5);
\node[below right] at (4.5,0) {\large $x_1$};
\node[above left] at (0,4.5) {\large $x_2$};
\filldraw[red!30!white] (1,1)--(1,3)--(3,1)--(1,1);
\filldraw[red] (3,3)--(1,3)--(3,1)--(3,3);
\node[below] at (0,0) {0};
\node[below] at (4,0) {1};

\draw (-0.1,1)--(0.1,1);
\node[left] at (-0.1,1) {$c$};
\draw (-0.1,3)--(0.1,3);
\node[left] at (-0.1,3) {$1-c$};
\node[left] at (-0.1,4) {$1$};
\draw[thin, densely dashed] (0,3)--(3,3);
\draw[thin, densely dashed] (0,1)--(3,1);
\draw[thin, densely dashed] (3,0)--(3,3);
\draw[thin, densely dashed] (1,0)--(1,3);

\draw (1,-0.1)--(1,0.1);
\node[below] at (1,-0.1) {$c$};
\draw (3,-0.1)--(3,0.1);
\node[below] at (3,-0.1) {$1-c$};

\node[left] at (-0.5,2) {$P^h :$};
\end{tikzpicture}
\end{center}
\caption{The implementation $(P^\ell,P^h)$ of the uniform distribution on $[c,1-c]^2$. Darker/lighter colors indicate higher/lower densities. \label{figure:imp1}}
\end{figure}

When $n \geq 2$, an implementation is not always unique, in contrast to the single agent case. To see this, consider the case of $n =2$, and $\p$ equal to the uniform distribution $U_S$ on the square  $S=[c,1-c]^2$ with $c\in\left[0,\half\right)$. It is $\half$-feasible since the marginals are second-order dominated by the uniform distribution on $[0,1]$ (see Proposition~\ref{thm:product-feasible}). 
In \S\ref{sec:model}, we saw the implementation of the uniform distribution on $[0,1]^2$, which corresponds to $c=0$. For $c>0$, that implementation suggests that solutions of 
the linear program of Lemma~\ref{lem:domination} might be found in the form of a convex combination of $U_S$ and the uniform distribution $U_T$ on the upper-triangle $T=S\cap\{x_1+x_2\geq 1\}$. Indeed, it is easy to check that
$$\q=\frac{1-c}{1+c} \cdot U_T + \frac{2c}{1+c} \cdot U_S$$
has the correct marginals and satisfies $\q\leq 2\cdot  U_S$. The corresponding pair of distributions $(\p^\ell,\p^h)$ is  illustrated in Figure~\ref{figure:imp1}.

Note that in this implementation the two distributions $(\p^\ell,\p^h)$ do not have disjoint supports. Hence Proposition~\ref{prop:gutman} implies that another implementation exists. This is illustrated in Figure~\ref{figure:imp2}.

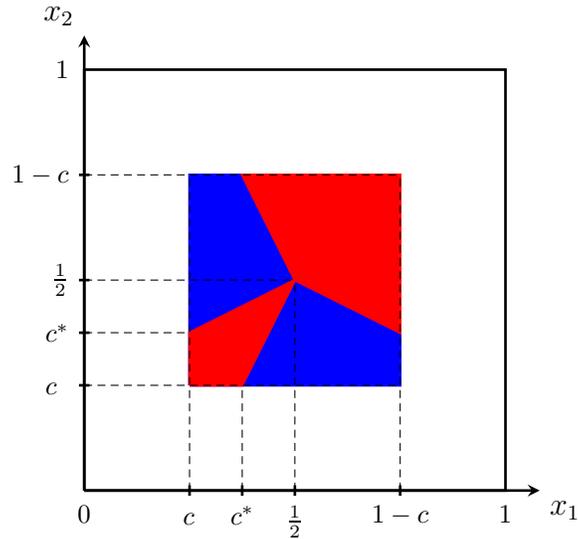
\begin{figure}[H]
\begin{center}
\begin{tikzpicture}[scale=0.7, line width=1pt]
\draw (0,0) -- (8,0)--(8,8)--(0,8)--(0,0);

\draw [->,>=stealth] (0,0) -- (8.65,0);
\draw [->,>=stealth] (0,0) -- (0,8.65);
\node[below right] at (8.65,0) {\large $x_1$};
\node[above left] at (0,8.65) {\large $x_2$};

\filldraw[blue] (2,2)--(2,6)--(6,6)--(6,2)--(2,2);
\filldraw[red] (2,2)--(2,3)--(4,4)--(3,2)--(2,2);
\filldraw[red] (4,4)--(6,3)--(6,6)--(3,6)--(4,4);
\draw (2,-0.1)--(2,0.1);
\node[below] at (2.1,-0.1) {$c\phantom{^*}$};
\draw[thin, densely dashed] (2,0)--(2,6);
\draw (3,-0.1)--(3,0.1);
\node[below] at (3,-0.1) {$c^*$};
\draw[thin, densely dashed] (3,0)--(3,2);
\draw (4,-0.1)--(4,0.1);
\node[below] at (4,-0.1) {$\frac{1}{2}$};
\draw[thin, densely dashed] (4,0)--(4,4);
\draw (6,-0.1)--(6,0.1);
\node[below] at (6,-0.1) {$1-c$};
\draw[thin, densely dashed] (6,0)--(6,6);
\node[below] at (0,-0.1) {$0$};
\node[below] at (8,-0.1) {$1$};

\draw (-0.1,2)--(0.1,2);
\node[left] at (-0.1,2) {$c\phantom{^*}$};
\draw[thin, densely dashed] (0,2)--(6,2);
\draw (-0.1,3)--(0.1,3);
\node[left] at (-0.1,3) {$c^*$};
\draw[thin, densely dashed] (0,3)--(2,3);
\draw (-0.1,4)--(0.1,4);
\node[left] at (-0.1,4) {$\frac{1}{2}$};
\draw[thin, densely dashed] (0,4)--(4,4);
\draw (-0.1,6)--(0.1,6);
\node[left] at (-0.1,6) {$1-c$};
\draw[thin, densely dashed] (0,6)--(6,6);
\node[left] at (-0.1,8) {$1$};
\end{tikzpicture}
\end{center}
\caption{An implementation of the uniform distribution on $[c,1-c]^2$ in which $\p^\ell$ (depicted in blue) and $\p^h$ (depicted in red) have disjoint supports. We denote $c^*=2c(1-c)$.  
\label{figure:imp2}
}
\end{figure}

\ifdefined\EC
\bibliographystyle{ACM-Reference-Format}
\fi
\bibliography{feasible_beliefs_bibliography}

\appendix
\section{Proof of Theorem~\ref{thm:two-agent-feasible}}
\label{sec:proof}

In this section we present a proof of Theorem~\ref{thm:two-agent-feasible}, a result due to \cite{dawid1995coherent}. We begin with the following simple lemma, which gives a direct revelation  principle for joint belief distributions. Similar statements appeared in the literature
\citep[e.g.,][]{gutmann1991existence, dawid1995coherent, levy2018persuasion, ziegler2020adversarial, arieli2020identifiable}. We include a proof for the convenience of the reader.
\begin{lemma}
\label{lem:revelation}
Let $\p \in \Delta([0,1]^n)$ be a $p$-feasible belief distribution. Then there exist $\p^\ell,\p^h \in \Delta([0,1]^n)$ such that $\p=(1-p)\cdot\p^\ell+p\cdot \p^h$ and for every $i \in \{1,\ldots,n\}$ the marginal distributions $\p^h_i$ and $\p_i$ satisfy
\begin{align}
    \label{eq:marginals}
    \dd\p^h_{i}(x)=\frac{x}{p}\dd\p_{i}(x).
\end{align}
In particular, every feasible belief distribution can be induced by an information structure in which, for each $i$, the belief $x_i$ is equal to the signal $s_i$.
\end{lemma}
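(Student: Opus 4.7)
The plan is to unpack feasibility directly: since $\p$ is $p$-feasible, there is an information structure $I=((S_i)_i,\bP)$ with prior $p$ such that $\p=\p_I$. Define $\p^\ell$ and $\p^h$ to be the joint distributions of $(x_1,\ldots,x_n)$ conditional on $\omega=\ell$ and $\omega=h$, respectively. The decomposition $\p=(1-p)\p^\ell+p\p^h$ is then just the law of total probability applied to the event $\{\omega=\ell\}\cup\{\omega=h\}$.

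The key identity \eqref{eq:marginals} is obtained via Bayes' rule combined with the definition of the posterior. For any measurable $A\subseteq [0,1]$, I would compute
\begin{align*}
p\cdot \p^h_i(A)\;=\;\bP(x_i\in A,\,\omega=h)\;=\;\E\bigl[\mathds{1}_{x_i\in A}\cdot \mathds{1}_{\omega=h}\bigr]\;=\;\E\bigl[\mathds{1}_{x_i\in A}\cdot \bP(\omega=h\mid s_i)\bigr],
\end{align*}
using iterated expectations and the fact that $\mathds{1}_{x_i\in A}$ is $\sigma(s_i)$-measurable. By definition of $x_i$, the last expression equals $\E[\mathds{1}_{x_i\in A}\cdot x_i]=\int_A x\,\dd\p_i(x)$, which yields \eqref{eq:marginals}. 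A symmetric computation gives $\dd\p^\ell_i(x)=\frac{1-x}{1-p}\dd\p_i(x)$, which will be useful in the second part.

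For the direct revelation statement, the plan is to build an explicit information structure with $S_i=[0,1]$ for every $i$, and joint law $\bP$ defined by $\bP(\omega=h,(s_1,\ldots,s_n)\in B)=p\cdot \p^h(B)$ and $\bP(\omega=\ell,(s_1,\ldots,s_n)\in B)=(1-p)\cdot \p^\ell(B)$. The marginal distribution of $s_i$ is then $p\,\dd\p^h_i+(1-p)\,\dd\p^\ell_i=x\,\dd\p_i(x)+(1-x)\,\dd\p_i(x)=\dd\p_i(x)$. Applying Bayes' rule in this new structure gives $\bP(\omega=h\mid s_i=x)=\tfrac{p\,\dd\p^h_i(x)}{\dd\p_i(x)}=x$ by \eqref{eq:marginals}, so $x_i=s_i$ almost surely. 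The joint law of $(x_1,\ldots,x_n)=(s_1,\ldots,s_n)$ under $\bP$ is $(1-p)\p^\ell+p\p^h=\p$, completing the argument.

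The only mild obstacle is the Radon–Nikodym manipulation in \eqref{eq:marginals}: one must justify expressing $\dd\p^h_i(x)$ as $\tfrac{x}{p}\dd\p_i(x)$ as an equality of measures rather than densities. This is immediate once the identity $p\cdot\p^h_i(A)=\int_A x\,\dd\p_i(x)$ is established for every measurable $A$, which pins down $\p^h_i$ uniquely as the measure with density $x/p$ with respect to $\p_i$.
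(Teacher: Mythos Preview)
Your proof is correct and follows essentially the same approach as the paper: both define $\p^\ell,\p^h$ as the conditional distributions of $(x_1,\ldots,x_n)$ given the state, obtain the decomposition by total probability, and derive \eqref{eq:marginals} via Bayes' rule / iterated expectations. The only cosmetic difference is order: the paper first establishes the direct-revelation information structure (using $\bP(\omega=h\mid s_i)=\bP(\omega=h\mid x_i)$) and then reads off \eqref{eq:marginals} from Bayes' law, whereas you prove \eqref{eq:marginals} first and then construct the revelation structure explicitly; your integral computation $p\cdot\p^h_i(A)=\int_A x\,\dd\p_i(x)$ is a slightly more explicit version of the same Bayes step.
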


The next lemma gives a necessary and sufficient condition for feasibility in terms of the existence of a measure with given marginals and with a given upper bound.
\begin{lemma}\label{lem:domination}
For $n\geq 2$ agents, a distribution $\p\in \Delta([0,1]^n)$ is $p$-feasible if and only if there exists a probability measure $\q \in \Delta([0,1]^n)$ such that
\begin{enumerate}
    \item $\q$ is upper-bounded by $\frac{1}{p}\p$, i.e.\ $\q(A) \leq \frac{1}{p}\p(A)$ for any measurable $A \subseteq [0,1]^n$, and
    \item for every $i \in \{1,\ldots,n\}$ the marginal distribution $\q_i$ is given by
    \begin{align}
        \label{eq:marginals2}
        \dd\q_{i}(x)=\frac{x}{p}\dd\p_{i}(x).
    \end{align}
\end{enumerate} 
\end{lemma}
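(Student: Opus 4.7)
The plan is to prove both directions by building an explicit bridge between the probability measure $\q$ in the lemma and the conditional belief distribution $\p^h$ provided by the direct revelation principle of Lemma~\ref{lem:revelation}. The central idea is that $\q$ will simply be identified with $\p^h$, the distribution of the belief vector conditional on the state being high.

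For the ``only if'' direction I would start from a $p$-feasible $\p$ and invoke Lemma~\ref{lem:revelation} to obtain conditional distributions $\p^\ell,\p^h\in\Delta([0,1]^n)$ with $\p=(1-p)\p^\ell+p\p^h$ and marginals satisfying $\dd\p^h_i(x)=\frac{x}{p}\dd\p_i(x)$. Setting $\q:=\p^h$, the marginal condition \eqref{eq:marginals2} holds by construction. The domination $\q\leq \frac{1}{p}\p$ is immediate: for every measurable $A\subseteq[0,1]^n$, non-negativity of $\p^\ell$ gives $\p(A)=(1-p)\p^\ell(A)+p\p^h(A)\geq p\,\q(A)$.

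For the ``if'' direction, suppose such a $\q$ exists. Define $\p^h:=\q$ and $\p^\ell:=\frac{\p-p\,\q}{1-p}$. The domination condition $\q\leq \frac{1}{p}\p$ ensures $\p^\ell$ is a non-negative measure; that it is a probability measure follows from $\p([0,1]^n)=\q([0,1]^n)=1$. I would then construct an information structure on $\Omega\times[0,1]^n$ in the spirit of Lemma~\ref{lem:revelation}: the signal space of agent $i$ is $S_i=[0,1]$, the prior on $\omega$ is $p$, and conditional on $\omega=h$ (resp.\ $\omega=\ell$) the signal vector $(s_1,\dots,s_n)$ is drawn from $\p^h$ (resp.\ $\p^\ell$). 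By construction the unconditional distribution of $(s_1,\dots,s_n)$ equals $(1-p)\p^\ell+p\p^h=\p$.

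It then remains to verify that this information structure induces the posterior belief $x_i=s_i$ for each agent~$i$, so that $\p_I=\p$. This is the step requiring the most care but it is a direct Bayes' rule computation using the marginal condition \eqref{eq:marginals2}. Indeed, writing $\dd\p^h_i(x)=\tfrac{x}{p}\dd\p_i(x)$ and deducing $\dd\p^\ell_i(x)=\tfrac{1-x}{1-p}\dd\p_i(x)$ from $\p=(1-p)\p^\ell+p\p^h$, Bayes' rule gives
\[
\bP(\omega=h\mid s_i=x)=\frac{p\,\dd\p^h_i(x)}{(1-p)\dd\p^\ell_i(x)+p\,\dd\p^h_i(x)}=\frac{x\,\dd\p_i(x)}{\dd\p_i(x)}=x,
\]
so the posterior of agent~$i$ upon observing $s_i$ is exactly $s_i$, as required. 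I expect the only subtle point to be writing the Bayes computation rigorously through Radon--Nikodym derivatives (since $\p_i$ need not admit a density), but this is standard given that $\p^h_i$ and $\p^\ell_i$ are absolutely continuous with respect to $\p_i$ by \eqref{eq:marginals2}.
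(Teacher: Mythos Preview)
Your proof is correct and follows essentially the same approach as the paper's: both directions identify $\q$ with the conditional distribution $\p^h$, invoke Lemma~\ref{lem:revelation} for the forward direction, and build the direct-revelation information structure for the converse. Your explicit Bayes' rule computation showing $x_i=s_i$ is a welcome elaboration of what the paper leaves as ``it follows from \eqref{eq:marginals2}.''
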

\begin{proof}
If $\p$ is feasible, then by Lemma~\ref{lem:revelation} there is a pair $\p^\ell,\,\p^h \in \Delta([0,1]^n)$ such that $\p=(1-p)\p^\ell+p\cdot \p^h$ and such that \eqref{eq:marginals} holds. Picking $\q=\p^h$, it follows from \eqref{eq:marginals} that $\q\leq \frac{1}{p}\p$ and that $\q$ the desired marginals. In the opposite direction: if there is $\q$ satisfying the theorem hypothesis, then let $\p^h=\q$ and $\p^\ell=\frac{\p-p\cdot \q}{1-p}$. Consider the information structure $I$ in which $\omega=h$ with probability $p$, and $(s_1,\ldots,s_n)$ is chosen from $\p^h$ conditioned on $\omega=h$, and from $\p^\ell$ conditioned on $\omega=\ell$. It follows from \eqref{eq:marginals2} that $\p$ is the joint posterior distribution induced by $I$, and is hence $p$-feasible.
\end{proof}

Lemma~\ref{lem:domination} reduces the question of feasibility to the question of the existence of a bounded measure with given marginals. This is  a well-studied question in the case $n=2$. When the upper bound is proportional to the Lebesgue measure, \cite{lorentz1949problem} provided the answer in terms of first-order stochastic dominance of marginals. The discrete analog  of this result for matrices is known as the Gale-Ryser Theorem \citep{gale1957theorem,ryser1957combinatorial}. The condition for general upper-bound measures was derived by Kellerer \citeyearpar[Satz~4.2]{kellerer1961Funktionen}; the formulation below is due to Strassen  \citeyearpar[Theorem~6]{strassen1965existence}.
\begin{thmx}[Kellerer, Strassen]
Fix $\p \in \Delta([0,1]^2)$, $0 < p \leq 1$, and $M_1,\, M_2 \in \Delta([0,1])$. Then the following are equivalent:
\begin{enumerate}
    \item There exists a probability measure $\q \in \Delta([0,1]^2)$ that is bounded from above by $\frac{1}{p}\p$, and which has marginals $\q_i=M_i$.
    \item For every measurable $B_1,B_2 \subset [0,1]$ it holds that 
    \begin{equation}
        \label{eq_Strassen}
        \frac{1}{p}\p(B_1\times B_2)\geq M_1(B_1)+M_2(B_2)-1.
    \end{equation}
\end{enumerate}
\end{thmx}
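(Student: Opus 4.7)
The plan is to treat necessity as a direct inclusion–exclusion argument and to derive sufficiency by combining Hahn–Banach separation with a layer-cake reduction from continuous test functions to the indicator inequalities of (2).

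For necessity, suppose $\q$ as in (1) exists. Since $\q \leq \tfrac1p\p$ and $\q([0,1]^2) = 1$, inclusion–exclusion on $B_1 \times [0,1]$ and $[0,1] \times B_2$ gives
\[
M_1(B_1) + M_2(B_2) = \q(B_1 \times [0,1]) + \q([0,1] \times B_2) = \q\bigl((B_1 \times [0,1]) \cup ([0,1] \times B_2)\bigr) + \q(B_1 \times B_2) \leq 1 + \tfrac1p \p(B_1 \times B_2).
\]

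For sufficiency, let $K = \{\q : 0 \leq \q \leq \tfrac1p \p\}$, a convex weak$^*$-compact subset of the space of finite Radon measures on $[0,1]^2$. The marginal map $\Phi(\q) = (\q_1, \q_2)$ is linear and weak$^*$-continuous, so $\Phi(K)$ is convex and compact. If $(M_1, M_2) \notin \Phi(K)$, Hahn–Banach separation yields $f_1, f_2 \in C([0,1])$ with
\[
\int f_1 \,\dd M_1 + \int f_2 \,\dd M_2 > \sup_{\q \in K} \int (f_1(x) + f_2(y)) \,\dd\q(x,y) = \tfrac1p \int (f_1(x) + f_2(y))^+ \,\dd\p(x,y),
\]
where the supremum is attained by $\dd\q = \tfrac1p \mathds{1}_{\{f_1 + f_2 > 0\}} \dd\p$. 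I will show condition (2) rules out such an $(f_1, f_2)$, which is the main obstacle.

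The key observation is that the transformation $(f_1, f_2) \mapsto (f_1 + c, f_2 - c)$ leaves both sides invariant: the LHS because $M_i$ are probability measures, the RHS because $(f_1 + c + f_2 - c)^+ = (f_1 + f_2)^+$. Choosing $c = \max(-\min f_1, \max f_2)$ this gauge allows me to assume $f_1 \geq 0$ and $f_2 \leq 0$; setting $g_2 = -f_2 \geq 0$ the separating inequality becomes $\int f_1 \,\dd M_1 - \int g_2 \,\dd M_2 > \tfrac1p \int (f_1 - g_2)^+ \,\dd\p$. Now use the pointwise identity $(a - b)^+ = \int_0^\infty \mathds{1}_{\{a > s\}}\mathds{1}_{\{b < s\}} \,\dd s$ valid for $a, b \geq 0$, combined with the layer-cake formulas $\int f_1 \,\dd M_1 = \int_0^\infty M_1(\{f_1 > s\}) \,\dd s$ and $\int g_2 \,\dd M_2 = \int_0^\infty M_2(\{g_2 > s\}) \,\dd s$ and Fubini (noting that $\{g_2 < s\}$ differs from $\overline{\{g_2 > s\}}$ only on a Lebesgue-null set of $s$); this rewrites the separating inequality as
\[
\int_0^\infty \Bigl[M_1(\{f_1 > s\}) - M_2(\{g_2 > s\}) - \tfrac1p \p\bigl(\{f_1 > s\} \times \overline{\{g_2 > s\}}\bigr)\Bigr] \,\dd s > 0.
\]
But condition (2) applied to $B_1 = \{f_1 > s\}$ and $B_2 = \overline{\{g_2 > s\}}$ (so that $M_2(B_2) = 1 - M_2(\{g_2 > s\})$) makes the bracketed expression nonpositive for every $s$, a contradiction. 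A more combinatorial alternative would first prove the finite-dimensional analog via LP duality on a transportation problem with capacity constraints---whose dual certificate is precisely (2) restricted to unions of partition cells---and then pass to the limit using weak$^*$-compactness of $\cP([0,1]^2)$ along a net of refining partitions of $[0,1]$.
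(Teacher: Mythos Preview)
Your proof is correct. The paper does not prove this theorem; it is quoted as a classical result of Kellerer and Strassen and used as a black box in the proof of Theorem~\ref{thm:two-agent-feasible}. So there is no ``paper's own proof'' to compare against.

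Your argument is a clean self-contained proof. The necessity direction is the standard inclusion--exclusion bound. For sufficiency, the Hahn--Banach separation of the compact convex image $\Phi(K)$ yields continuous test functions $(f_1,f_2)$; the gauge shift $(f_1,f_2)\mapsto(f_1+c,f_2-c)$ to force $f_1\geq 0$, $f_2\leq 0$ is the key reduction, and the layer-cake identity $(a-b)^+=\int_0^\infty \mathds{1}_{\{a>s\}}\mathds{1}_{\{b<s\}}\,\dd s$ then collapses the continuous separating functional to an integral of the indicator inequalities from condition~(2). The remark that $\{g_2<s\}$ and $\{g_2\leq s\}$ agree for Lebesgue-a.e.\ $s$ correctly handles the only measurability wrinkle. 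The paper's general $n$-agent machinery (the Hansel--Troallic theorem in Appendix~\ref{app:no-trade}) is in the same spirit---separation followed by a reduction to finitely many indicator conditions---but your direct layer-cake argument is more elementary for $n=2$ and avoids the detour through finitely additive measures.
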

As we now show, Theorem~\ref{thm:two-agent-feasible} is a consequence of Kellerer's Theorem and Lemma~\ref{lem:domination}.\footnote{We note that this result can also proved by interpreting the infinite-dimensional linear program for $\q$ from Lemma~\ref{lem:domination} as a condition that there is a feasible flow of magnitude~$1$ in an auxiliary network (as in \cite{gale1957theorem}, but with a continuum of edges) and then using max-flow/min-cut duality for infinite networks \citep[see][]{neumann1985theorem}.}

\begin{proof}[Proof of Theorem~\ref{thm:two-agent-feasible}]

When $\p$ is $p$-feasible, \eqref{eq:aumann-feasible-quant-2} follows immediately from Corollary~\ref{cor:agreement-quant}. To prove the other direction, we assume that $\p$ satisfies \eqref{eq:aumann-feasible-quant-2}, and prove that $\p$ is $p$-feasible for some $p$. To this end we will use Kellerer's Theorem to show that there exists a $\q \in \Delta([0,1]^2)$ satisfying the conditions of Lemma~\ref{lem:domination}.

Assume then that $\p$ satisfies \eqref{eq:aumann-feasible-quant-2}, i.e.\ for every measurable $A_1,A_2 \subseteq [0,1]$ it holds that
\begin{align}
\label{eq:feasibility_again}
\p(A_1 \times \overline{A_2}) \geq \int_{A_1} x\,\dd \p_1(x) - \int_{A_2} x\,\dd \p_2(x) 
  \geq \p(\overline{A_1} \times A_2).
\end{align}
Applying this to $A_1 = A_2 = [0,1]$ yields
$$
\int_0^1 x\,\dd \p_1(x) = \int_0^1 x\,\dd \p_2(x).
$$  
We accordingly set $p = \int_0^1 x\,\dd \p_1(x)$, and define the measures $M_1,\,M_2 \in \Delta([0,1])$ by $\dd M_i(x) = \frac{x}{p}\dd \p_i(x)$. Given any two measurable subsets $B_1,B_2 \subseteq [0,1]$, let $A_1 = B_1$ and $A_2 = \overline{B_2}$. Applying the left hand inequality of \eqref{eq:feasibility_again} to $A_1=B_1$ and $A_2 = \overline{B_2}$ yields
\begin{align*}
    \p(B_1 \times B_2) \geq \int_{B_1} x\,\dd \p_1(x) - \int_{\overline{B_2}} x\,\dd \p_2(x).
\end{align*}
Since $\dd M_i(x) = \frac{x}{p}\dd \p_i(x)$ we can write this as
\begin{align*}
\p(B_1 \times B_2) \geq p\cdot M_1(B_1) - p\cdot M_2(\overline{B_2}),
\end{align*}
dividing by $p$ and substituting $M_2(\overline{B_2})=1-M_2(B_2)$ yields.
\begin{align*}
  \frac{1}{p}\p(B_1 \times B_2) \geq M_1(B_1) + M_2(B_2)-1.
\end{align*}
Thus condition \eqref{eq_Strassen} holds, and we can directly apply Kellerer's theorem to conclude that there exists a measure $\q$ that satisfies the conditions of Lemma~\ref{lem:domination}, as condition \eqref{eq:marginals2} is simply $Q_i=M_i$. Hence $\p$ is $p$-feasible.
\end{proof}
\section{Proof of Theorem~\ref{thm:no-trade}}
\label{app:no-trade}

In this section we prove Theorem~\ref{thm:no-trade}. The proof of necessity is straightforward and explained before the theorem statement. The proof of sufficiency uses Th\'eor\`eme~2.6 of \cite{hansel1986probleme}, which is a generalization of Kellerer's Theorem.  

Recall that a {\em paving} of a set $X$ is a set of subsets of $X$ that includes the empty set, and an {\em algebra} is a paving that is closed under unions and complements. By a \emph{collection} we mean a multiset, i.e., the same element can enter the collection several times.
Denote $\bar\R_+ = [0,\infty]$. A finitely additive measure is a map from an algebra to $\bar\R_+$ that is additive for disjoint sets.
\begin{thmx}[\citet{hansel1986probleme}]
  \label{thm:hansel}
  Let $X$ be a set, and $\cF$ an algebra of subsets of $X$. Let $\cF_1,\ldots,\cF_n,\cG_1,\ldots,\cG_m$ be subpavings of $\cF$. For  $i \in \{1,\ldots,n\}$ and $j \in \{1,\ldots,m\}$ let  $\alpha_i \colon \cF_i \to \bar\R_+$,  and $\beta_j \colon \cG_j \to \bar\R_+$ be maps that vanish on the empty set. Then the following are equivalent:
  \begin{enumerate}
  \item There is a finitely additive measure $\q \colon \cF \to \R_+$ such that, for every $i \in \{1,\ldots,n\}$ and every $A \in \cF_i$ it holds that $\alpha_i(A) \leq \q(A)$, and for every $j \in \{1,\ldots,m\}$ and every $B \in \cG_j$ it holds that $\q(B) \leq \beta_j(B)$.
  \item For every finite collections of sets $\cA_1\subseteq\cF_1,\ldots,\cA_n\subseteq\cF_n$, $\cB_1\subseteq\cG_1,\ldots,\cB_m\subseteq\cG_m$, if
    \begin{align*}
      \sum_{i=1}^n\sum_{A \in \cA_i}\mathds{1}_{A} \leq \sum_{j=1}^m\sum_{B \in
      \cB_j}\mathds{1}_B
    \end{align*}
    then
    \begin{align*}
      \sum_{i=1}^n\sum_{A \in \cA_i}\alpha_i(A) \leq \sum_{j=1}^m\sum_{B \in
      \cB_j}\beta_j(B).
    \end{align*}
  \end{enumerate}
\end{thmx}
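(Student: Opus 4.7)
The plan is to prove the two directions separately. For \emph{necessity}, (1)$\Rightarrow$(2) is routine: assume a finitely additive $\q$ satisfying the prescribed bounds and take any finite collections with $\sum_i\sum_{A\in\cA_i}\mathds{1}_A\le\sum_j\sum_{B\in\cB_j}\mathds{1}_B$ pointwise. Integrating this inequality against $\q$ and using finite additivity yields $\sum_i\sum_{A\in\cA_i}\q(A)\le\sum_j\sum_{B\in\cB_j}\q(B)$. Sandwiching with $\alpha_i(A)\le\q(A)\le\beta_j(B)$ gives the desired conclusion.

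The bulk of the work is \emph{sufficiency} (2)$\Rightarrow$(1), and here the plan is a Hahn--Banach separation argument in the space $V$ of bounded $\cF$-simple real-valued functions on $X$. I would define a sublinear functional $p:V\to(-\infty,\infty]$ by
\begin{equation*}
  p(f)=\inf\Bigl\{\sum_j\sum_{B\in\cB_j}c_B\,\beta_j(B)-\sum_i\sum_{A\in\cA_i}d_A\,\alpha_i(A)\Bigr\},
\end{equation*}
where the infimum is taken over finite collections $\cA_i\subseteq\cF_i$, $\cB_j\subseteq\cG_j$ and non-negative rationals $c_B,d_A$ with $f\le\sum_j\sum_{B}c_B\mathds{1}_B-\sum_i\sum_A d_A\mathds{1}_A$, under the convention $0\cdot\infty=0$. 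Sublinearity (subadditivity and positive homogeneity) follows by concatenating certificates and rescaling.

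The crucial verification is that $p(0)\ge0$, which is precisely the rational analog of condition~(2): whenever $\sum_i\sum_A d_A\mathds{1}_A\le\sum_j\sum_B c_B\mathds{1}_B$ with $c_B,d_A\in\mathbb{Q}_{\ge0}$, the corresponding inequality on the $\alpha_i,\beta_j$ values must hold. To deduce it from the stated multiset form, clear denominators and replicate each set in the collection according to its integer coefficient. Once this is established, Hahn--Banach applied to the trivial subspace $\{0\}$ yields a linear $\phi:V\to\R$ with $\phi\le p$. A short check then produces: positivity ($p(-f)\le0$ for $f\ge0$ via the empty certificate); the upper bound $\phi(\mathds{1}_B)\le\beta_j(B)$ for $B\in\cG_j$ (using the one-term certificate $c_B=1$); and the lower bound $\phi(\mathds{1}_A)\ge\alpha_i(A)$ for $A\in\cF_i$ (by bounding $p(-\mathds{1}_A)\le-\alpha_i(A)$ via $d_A=1$). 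Setting $\q(A):=\phi(\mathds{1}_A)$ yields a non-negative finitely additive measure on $\cF$ with the required bounds.

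The main obstacle I expect is twofold. First, the possibility that some $\alpha_i(A)$ or $\beta_j(B)$ equals $+\infty$ has to be accommodated carefully: infinite $\beta_j$ gives a vacuous upper constraint, while an infinite $\alpha_i(A)$ forces $\q(A)=\infty$, which can be handled either by extending $\q$ to $\bar\R_+$-valued measures or by deleting such constraints and adding them back through a separate compactness step (or by detecting infeasibility directly from condition~(2) applied to suitable singletons). Second, one must justify passing from the rational-coefficient version of condition~(2) needed in the definition of $p$ to the integer-multiset form actually stated; this is a density/continuity argument that is routine but should not be skipped. Everything else amounts to standard bookkeeping with the Hahn--Banach extension.
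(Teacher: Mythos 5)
The paper does not prove Theorem~\ref{thm:hansel}; it is stated as a direct citation of Th\'eor\`eme~2.6 of Hansel and Troallic, so there is no in-paper argument to compare against. Your Hahn--Banach separation plan is a legitimate route to this kind of measure-extension duality (it is close in spirit to Strassen's own proof of his marginal theorem), and your treatment of the necessity direction is fine.

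The sufficiency direction, however, has a gap that is more serious than the two obstacles you flag. Any usable form of the Hahn--Banach theorem---even the version allowing $p$ to take the value $+\infty$---requires $p$ never to take the value $-\infty$; otherwise no real-valued linear $\phi\le p$ can exist. You never verify this, and it can fail under the stated hypotheses: if $\alpha_i(A)=+\infty$ for some nonempty $A\in\cF_i$, the one-term certificate $-\mathds{1}_A\le -\mathds{1}_A$ already gives $p(-\mathds{1}_A)\le -\alpha_i(A)=-\infty$, and the construction collapses at the outset. Your proposed remedies (extending $\q$ to $\bar\R_+$ values, or deleting such constraints and recovering them through a compactness step) are not arguments as written, and the first one alters the conclusion of the theorem. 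Moreover, even when all $\alpha_i,\beta_j$ are finite, $p>-\infty$ is not automatic: the inequality $p(f)+p(-f)\ge 0$, which condition~(2) does deliver, only shows $p(f)>-\infty$ when $p(-f)<+\infty$, and nothing in the hypotheses guarantees that every $-f$ admits a certificate at all. In the paper's actual application, Corollary~\ref{cor:hansel}, this is harmless because $\cG_1$ contains the whole space $[0,1]^n$ with $\beta_1\big([0,1]^n\big)=1$, so every bounded simple function has a certificate with finite value; but a proof of the general theorem must establish this finiteness rather than assume it. The rational-versus-real coefficient issue you note is real but routine by comparison: defining $p$ over real coefficients restores positive homogeneity, at the cost of a density step to pass from the stated integer-multiset form of condition~(2) to its real-coefficient analogue. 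The skeleton of your argument is correct, but until $p>-\infty$ is verified the central Hahn--Banach step does not go through.
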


We will need the following corollary of this result.
\stepcounter{corx}
\begin{corx}
  \label{cor:hansel}
  Fix $\p \in \Delta([0,1]^n)$, $0 < p \leq 1$, and $M_1, \ldots, M_n \in \Delta([0,1])$. Then the following are equivalent:
  \begin{enumerate}
  \item There exists a probability measure $\q \in \Delta([0,1]^n)$ that is upper-bounded by $\frac{1}{p}\p$, and which has marginals $\q_i=M_i$.
  \item For every $\cA_1\ldots,\cA_n$, finite collections of Borel subsets of $[0,1]$, every $\cC$ a finite collection of Borel subsets of $[0,1]^n$, and for every non-negative integer $K$, if
    \begin{align}
      \label{eq:ht1a}
      \sum_{i=1}^n\sum_{A \in \cA_i}\mathds{1}_A(x_i) \leq K+ \sum_{C \in \cC}\mathds{1}_C(x_1,\ldots,x_n)
    \end{align}
    for all $(x_1,\ldots,x_n)$, then
    \begin{align}
      \label{eq:ht2a}
      \sum_{i=1}^n\sum_{A \in \cA_i}M_i(A) \leq K + \sum_{C \in \cC}\frac{1}{p}\p(C).
    \end{align}
  \end{enumerate}
\end{corx}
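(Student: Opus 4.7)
The plan is to deduce the corollary from Hansel's Theorem by specializing the abstract setup: take $X = [0,1]^n$ with $\cF$ the Borel algebra, and choose subpavings so that Hansel's condition (2) collapses exactly to the corollary's condition (2).

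For the easy direction $(1) \Rightarrow (2)$, I would simply integrate the pointwise inequality \eqref{eq:ht1a} against $\q$: the left-hand side integrates to $\sum_i \sum_{A \in \cA_i} \q_i(A) = \sum_i \sum_{A \in \cA_i} M_i(A)$, while the right-hand side integrates to $K \cdot \q([0,1]^n) + \sum_{C \in \cC} \q(C) \leq K + \sum_{C \in \cC} \tfrac{1}{p}\p(C)$, which is \eqref{eq:ht2a}.

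For the sufficiency direction, I would let each $\cF_i$ consist of the cylinder sets $A^{(i)} := \{x \in [0,1]^n : x_i \in A\}$ for Borel $A \subseteq [0,1]$, with $\alpha_i(A^{(i)}) = M_i(A)$, so that Hansel's lower-bound constraints encode $\q_i \geq M_i$. For the upper bounds I would use two pavings: $\cG_1$ the full Borel algebra with $\beta_1 = \tfrac{1}{p}\p$, and $\cG_2 = \{[0,1]^n\}$ with $\beta_2([0,1]^n) = 1$, the latter serving to cap the total mass at $1$. With this choice, a collection $\cB_2 \subseteq \cG_2$ is just a multiset of $K$ copies of the whole space, and after renaming $\cB_1$ as $\cC$, Hansel's pointwise test inequality becomes exactly \eqref{eq:ht1a} and his numerical conclusion becomes exactly \eqref{eq:ht2a}. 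Thus, assuming (2), Hansel's theorem yields a finitely additive $\q$ on $\cF$ satisfying $\q_i \geq M_i$, $\q \leq \tfrac{1}{p}\p$, and $\q([0,1]^n) \leq 1$.

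It remains to upgrade $\q$ to a countably additive probability measure with $\q_i = M_i$. For total mass, $\q([0,1]^n) = \q_i([0,1]) \geq M_i([0,1]) = 1$ combined with $\q([0,1]^n) \leq 1$ gives $\q([0,1]^n) = 1$; then $\q_i$ and $M_i$ are probability measures on $[0,1]$ with $\q_i \geq M_i$, so they coincide. Countable additivity is inherited from the dominating measure $\tfrac{1}{p}\p$: for any Borel sets $E_k \downarrow \emptyset$, $0 \leq \q(E_k) \leq \tfrac{1}{p}\p(E_k) \to 0$, giving continuity at $\emptyset$ and hence countable additivity. The step that I expect to require the most care is the bookkeeping in matching Hansel's abstract formulation to the concrete form \eqref{eq:ht1a}, and in particular verifying that the freedom given by the single integer $K$ exactly captures the effect of the paving $\cG_2$.
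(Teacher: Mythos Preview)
Your proposal is correct and follows essentially the same approach as the paper's proof: the same specialization of Hansel--Troallic with cylinder pavings $\cF_i$ carrying $\alpha_i = M_i$, one upper-bound paving for $\tfrac{1}{p}\p$, and one for the total-mass cap, followed by the same argument that $\q_i \geq M_i$ and $\q([0,1]^n)\leq 1$ force $\q_i = M_i$, and that domination by a countably additive measure yields countable additivity. The only cosmetic differences are that the paper swaps your labels $\cG_1,\cG_2$ and states the countable-additivity step as a separate lemma rather than via continuity at~$\emptyset$; note also that a paving must contain $\emptyset$, so your $\cG_2$ should be $\{\emptyset,[0,1]^n\}$ with $\beta_2(\emptyset)=0$, a harmless adjustment.
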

\begin{proof}
  The proof that 1.\ implies 2.\ is simple and omitted.\footnote{We will only make use of the other, non-trivial direction.}

  Let $\cF$ be the Borel sigma-algebra of $[0,1]^n$. For each $i \in \{1,\ldots,n\}$ let $\cF_i$ be the sub-sigma-algebra of sets that are measurable in the $i$\textsuperscript{th} coordinate. That is,  $\cF_i$ consists of sets of the form $[0,1]^{i-1} \times A \times [0,1]^{n-i}$, where $A$ is any Borel subset of $[0,1]$. Denote by $\pi_i \colon [0,1]^n \to [0,1]$ the projection on the $i$\textsuperscript{th} coordinate.

  Let $m=2$. Define $\cG_1$ as the trivial algebra $\{\emptyset,\ [0,1]^n\}$ and $\cG_2=\cF$. Let $\alpha_i = M_i$, $\beta_1\big([0,1]^n\big) = 1$, $\beta_1(\emptyset)=0$, and $\beta_2  = \frac{1}{p}\p$. Let $\cA_1\ldots,\cA_n$ and $\cC$ satisfy \eqref{eq:ht1a}.   By abuse of notation we identify each $A \in \cA_i$ with its preimage $\pi_i^{-1}(A) = [0,1]^{i-1} \times A \times [0,1]^{n-i}$. Thus $\cA_i \subseteq \cF_i$. We define $\cB_1$ and $\cB_2$ as follows. The collection $\cB_1\subseteq \cG_1$ contains $K$ copies of $[0,1]^n$ and $\cB_m = \cC\subseteq\cG_2$.

  We can apply Theorem~\ref{thm:hansel} directly to conclude that there is a finitely additive measure $\q$  that is upper-bounded by $\frac{1}{p}\p$, has $\q([0,1]^n)\leq 1$, and whose marginals $\q_i$ bound $M_i$ from above for  $i\in N$. Each  $M_i$ is a probability measure and the total mass of $\q$ is at most $1$; hence, $\q_i$ can upper-bound $M_i$ only if $\q_i=M_i$. We conclude that $\q$ is a probability measure with marginals $M_i$ for each $i$.
  Since $\q$ is upper-bounded by a sigma-additive measure, it is also itself sigma-additive (see Lemma~\ref{lem:sigma-additive} below). 
  %Finally, since its marginals are probability measures, $\q$ is also a probability measure.
\end{proof}
\begin{lemma}
\label{lem:sigma-additive}
Let $\cF$ be a sigma-algebra of subsets of $X$. Let $\mu \colon \cF \to \bar\R_+$ be a finitely additive measure, and let $\nu \colon \cF \to \bar\R_+$ be a sigma-additive measure. If $\mu \leq \nu$ and $\nu(X) <\infty$ then $\mu$ is sigma-additive.
\end{lemma}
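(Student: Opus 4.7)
The plan is to reduce countable additivity to continuity from above at the empty set, which can be controlled using the bound $\mu \le \nu$ together with the finiteness and sigma-additivity of $\nu$.

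First I would take a countable collection of pairwise disjoint sets $A_1, A_2, \ldots \in \cF$ with union $A = \bigcup_{i=1}^\infty A_i$, and define the tails $B_n = A \setminus \bigcup_{i=1}^n A_i$. Because the $A_i$ are disjoint and $\cF$ is a sigma-algebra, each $B_n$ lies in $\cF$, the sequence is decreasing, and $\bigcap_n B_n = \emptyset$. Since $\nu$ is sigma-additive with $\nu(X) < \infty$, the standard continuity-from-above property gives $\nu(B_n) \to 0$. The hypothesis $\mu \le \nu$ then forces $\mu(B_n) \to 0$ as well.

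Next I would combine this with finite additivity of $\mu$. For each $n$, the sets $A_1, \ldots, A_n, B_n$ are pairwise disjoint with union $A$, so finite additivity yields
\[
\mu(A) = \sum_{i=1}^n \mu(A_i) + \mu(B_n).
\]
Passing to the limit $n \to \infty$ and using $\mu(B_n) \to 0$ gives $\mu(A) = \sum_{i=1}^\infty \mu(A_i)$, which is exactly sigma-additivity.

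The argument is essentially a one-liner once the right formulation is chosen, so I do not anticipate any real obstacle. The only point to be a little careful about is that one needs $\nu(A) \le \nu(X) < \infty$ in order to justify $\nu(B_n) = \nu(A) - \nu(\bigcup_{i\le n} A_i) \to 0$; this is precisely where the finiteness hypothesis on $\nu$ enters.
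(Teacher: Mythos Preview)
Your proof is correct and takes essentially the same approach as the paper: both decompose $A$ into $A_1,\ldots,A_n$ and the tail $\bigcup_{i>n}A_i$, use $\mu\le\nu$ to control the tail, and invoke finiteness and sigma-additivity of $\nu$ to make the tail vanish. The only cosmetic difference is that you pass to the limit directly in the finite-additivity identity $\mu(A)=\sum_{i=1}^n\mu(A_i)+\mu(B_n)$, whereas the paper establishes the two inequalities $\mu(A)\ge\sum_i\mu(A_i)$ and $\mu(A)\le\sum_i\mu(A_i)$ separately.
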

\begin{proof}
Let $A_1,A_2,\ldots \in \cF$ be pairwise disjoint. Denote $A = \cup_i A_i$. Then, by additivity, we have that
\begin{align*}
    \mu(A) = \mu(\cup_{i=1}^nA_i) + \mu(\cup_{i=n+1}^\infty A_i) \geq  \mu(\cup_{i=1}^nA_i) = \sum_{i=1}^n \mu(A_i)
\end{align*}
Hence $\mu(A) \geq \sum_{i=1}^\infty\mu(A_i)$. For the other direction, 
\begin{align*}
    \mu(A) = \mu(\cup_{i=1}^nA_i) + \mu(\cup_{i=n+1}^\infty A_i) \leq \sum_{i=1}^n\mu(A_i) + \nu(\cup_{i=n+1}^\infty A_i),
\end{align*}
since $\mu \leq \nu$. By the sigma-additivity and finiteness of $\nu$, the last addend vanishes as $n$ tends to infinity. Hence $\mu(A) \leq \sum_{i=1}^\infty\mu(A_i)$.
\end{proof}

\bigskip

We are now ready to finish the proof of Theorem~\ref{thm:no-trade}. We will show that \eqref{eq:trading} suffices for feasibility by checking that it implies the condition of Lemma~\ref{lem:domination}. Assume then that $\p$ satisfies \eqref{eq:trading},
i.e.\ for every trading scheme $(a_1,\ldots,a_i)$ it holds that
\begin{align}
  \label{eq:trading_again}
    \int_{[0,1]^n}\left(\sum_{i=1}^n a_i(x_i)x_i-\max\left\{0,\sum_{i=1}^n a_i(x_i)\right\}\right)\,\dd\p(x_1,\ldots,x_n) \leq 0.
\end{align}
Choose $i,j \in \{1,\ldots,n\}$, and consider the trading scheme in which $a_i=1$, $a_j=-1$ and $a_k=0$ for all $k \not \in \{i,j\}$. Then \eqref{eq:trading_again} implies that $\int_0^1 x\,\dd \p_i(x) = \int_0^1 x\,\dd \p_j(x)$. We accordingly set $p = \int_0^1 x\,\dd \p_i(x)$, and note that this definition is independent of the choice of $i$. Define the measures $M_i \in \Delta([0,1]^n)$ by $\dd M_i(x) = \frac{x}{p}\dd \p_i(x)$.

Let $\cA_1\ldots,\cA_n,$ $K$, and  $\cC$ satisfy \eqref{eq:ht1a}. We will show that \eqref{eq:ht2a} must hold. This will conclude the proof, since then it follows from Corollary~\ref{cor:hansel} that there exists a measure $\q$ that satisfies the conditions of Lemma~\ref{lem:domination}, as condition \eqref{eq:marginals2} is simply $Q_i=M_i$. 

To show that \eqref{eq:ht2a} holds, define the trading scheme $(a_1,\ldots,a_n)$ by
\begin{align*}
  a_i(x) = c\cdot\left(\sum_{A \in \cA_i}\mathds{1}_{A}(x)-\frac{K}{n}\right),
\end{align*}
where $c$ is a normalization constant chosen small enough so that the image of all the $a_i$'s is in $[-1,1]$. Then by \eqref{eq:trading_again}
\begin{align*}
  \int_{[0,1]^n}\sum_{i=1}^n a_i(x_i)x_i\,\dd\p(x_1,\ldots,x_n) \leq \int_{[0,1]^n}\max\left\{0,\sum_{i=1}^n a_i(x_i)\right\}\,\dd\p(x_1,\ldots,x_n).
\end{align*}
We substitute the definition of the trading scheme and the measures $M_i$. This yields
\begin{align*}
  p\cdot \sum_{i=1}^n\sum_{A \in \cA_i}M_i(A)-p\cdot K \leq \int_{[0,1]^n}\max\left\{0,\sum_{i=1}^n\sum_{A \in \cA_i}\mathds{1}_{A}(x_i)-K\right\}\,\dd\p(x_1,\ldots,x_n).
\end{align*}
By \eqref{eq:ht1a}, the integrand on the right-hand side is at most $\sum_{C \in \cC}\mathds{1}_C(x_1,\ldots,x_n)$. Hence
\begin{align*}
  p\cdot \sum_{i=1}^n\sum_{A \in \cA_i}M_i(A)-p\cdot K \leq \sum_{C \in \cC}\p(C).
\end{align*}
Dividing by $p$ and rearranging yields
\begin{align*}
  \sum_{i=1}^n\sum_{A \in \cA_i}M_i(A) \leq K +\sum_{C \in \cC}\frac{1}{p}\p(C).
\end{align*}
Thus condition \eqref{eq:ht2a} holds, and by Lemma~\ref{lem:domination} the distribution $\p$ is $p$-feasible. This completes the proof of Theorem~\ref{thm:no-trade}.

\bigskip
We end this section by noting a different avenue for proving this theorem. Lemma~\ref{lem:domination} gives a necessary and sufficient condition for a distribution $\p$ to be feasible, in terms of the existence of a distribution $\q$ satisfying two properties. Since the support of $\q$  is always a subset of the support of $\p$, for finitely supported $\p$ the conditions of the lemma reduce to a finite number of equalities and inequalities on a finite number of variables. Theorem~\ref{thm:no-trade} then becomes an corollary of the Farkas lemma.\footnote{A similar approach to the  characterization of finitely supported feasible distributions was taken by \cite{morris2020notrade}.} Theorem~\ref{thm:no-trade} for general distributions can then be deduced  by  approximation arguments.

\section{Additional missing claims and proofs}
\label{sec:proofs}

\begin{proof}[Proof of Lemma~\ref{lem:mps-monotone}]

To prove the lemma we assume that $\p'$ is not $p$-feasible and show that $\p$ is not $p$-feasible.

Let $(x_1,x_1',\ldots,x_n,x_n')$ be random variables such that (i) each pair $(x_i,x_i')$ is independent of the rest, (ii) $x_i$ has distribution $\mu_i$ and $x_i'$ has distribution $\mu_i'$, and (iii) $\mathbb{E}(x_i|x'_i)=x'_i$, which is possible since $\mu_i$ is a mean preserving spread of $\mu_i'$. 

By this definition, $\p$ is the joint distribution of $(x_1,\ldots,x_n)$ and $\p'$ is the joint distribution of $(x_1',\ldots,x_n')$. Hence, by Theorem~\ref{thm:no-trade}, in order to prove our claim it suffices to find a trading scheme $(a_1,\ldots,a_n)$ such that 
\begin{align}
  \label{eq:a_fails}
    \mathbb{E}\left(\sum_{i=1}^n a_i(x_i)x_i-\max\left\{0,\sum_{i=1}^n a_i(x_i)\right\}\right) > 0,
\end{align}
which would violate \eqref{eq:trading}.

Since $\p'$ is not $p$-feasible, it follows from Theorem~\ref{thm:no-trade} that there exists a trading scheme $(a_1',\ldots,a_n')$ such that
\begin{align}
  \label{eq:a_prime_fails}
    \mathbb{E}\left(\sum_{i=1}^n a_i'(x_i')x_i'-\max\left\{0,\sum_{i=1}^n a_i'(x_i')\right\}\right) > 0.
\end{align}
Define $a_i$ by\footnote{This only defines $a_i$ $\mu_i$-almost everywhere, which is sufficient for our needs; the rest of the definition can be completed using any version of the conditional expectation.}
$$
  a_i(x_i)=\mathbb{E}(a_i'(x'_i)|x_i).
$$
It follows from the definition of $a_i$, from the law of iterated expectation, and from  $\mathbb{E}(x_i|x'_i)=x'_i$ that
\begin{align}
  \mathbb{E}(a_i(x_i)\cdot x_i)
  &= \mathbb{E}\left(\mathbb{E}(a_i'(x'_i)|x_i)\cdot x_i\right) \nonumber \\
  &= \mathbb{E}\left(\mathbb{E}(a_i'(x'_i)\cdot x_i|x_i)\right) \nonumber \\
  &= \mathbb{E}\left(a_i'(x'_i)\cdot x_i\right) \nonumber \\
  &= \mathbb{E}\left(\mathbb{E}(a_i'(x'_i)\cdot x_i|x_i')\right) \nonumber \\
  &= \mathbb{E}\left(a_i'(x'_i)\cdot\mathbb{E}( x_i|x_i')\right) \nonumber \\
  &= \mathbb{E}\left(a_i'(x'_i)\cdot x_i'\right). \label{eq:transfer}
\end{align}
In addition, we have
\begin{align}
\mathbb{E}\left(-\max\left\{0,\sum_{i=1}^n a_i(x_i)\right\}\right)
&=\mathbb{E}\left(-\max\left\{0,\sum_{i=1}^n \mathbb{E}(a_i'(x'_i)|x_i)\right\}\right)\nonumber\\
&=\mathbb{E}\left(-\max\left\{0,\mathbb{E}\left(\sum_{i=1}^n a_i'(x'_i)\middle\vert x_1,\ldots,x_n\right)\right\}\right)\nonumber\\
&\geq\mathbb{E}\left(\mathbb{E}\left(-\max\left\{0,\sum_{i=1}^n a_i'(x'_i)\right\}\middle\vert x_1,\ldots,x_n\right)\right)\nonumber\\
&=\mathbb{E}\left(-\max\left\{0,\sum_{i=1}^n a_i'(x'_i)\right\}\right) \label{eq:loss}
\end{align}
where the first equality holds by the definition of $a_i$, the second equality holds since $(x_1,\ldots,x_n)$ are independent, the third inequality follows from Jensen's inequality since the function $x\mapsto -\max\{0,x\}$ is concave, and the last equality is again the law of iterated expectation.

Together, \eqref{eq:transfer} and \eqref{eq:loss} imply that the left hand side of \eqref{eq:a_fails} is at least as large as that of \eqref{eq:a_prime_fails}. Since the latter is positive, we conclude that
\begin{align*}
    \mathbb{E}\left(\sum_{i=1}^n a_i(x_i)x_i-\max\left\{0,\sum_{i=1}^n a_i(x_i)\right\}\right) \geq \mathbb{E}\left(\sum_{i=1}^n a_i'(x_i')x_i'-\max\left\{0,\sum_{i=1}^n a_i'(x_i')\right\}\right) > 0,
\end{align*}
as desired.
\end{proof}

\begin{proof}[Proof of Proposition~\ref{thm:product-feasible}]
Assume that the uniform distribution is a mean preserving spread of $\nu$ (or, equivalently, that $\nu$ is a {\em mean preserving contraction} of the uniform distribution). As shown before the statement of the proposition, the uniform distribution is $\half$-feasible. Hence $\p$ is $\half$-feasible by Lemma~\ref{lem:mps-monotone}.

Conversely, let $\nu$ be a probability measure that is symmetric around $\half$, and assume that $\nu$ is not a mean preserving contraction of the uniform distribution. We show  that $\nu\times\nu$ is not  $\half$-feasible. 

Let $F(x) = \nu([0,x])$ be the cumulative distribution function of $\nu$. Since the uniform distribution is not  a mean-preserving spread of $\nu$, there must be $y\in[0,1]$ such that
$$H(y)=\int_0^yF(x)\,\dd x-\int_0^y x\dd x =\int_0^yF(x)\,\dd x-\frac{y^2}{2}>0.
$$ 
Note that $H$ is continuous, and furthermore differentiable. Since $\nu$ is symmetric  and $\int_0^1F(x)\,\dd x=\half$, the function $H$ is also symmetric around $\half$: $H(y)=H(1-y)$. 

The  function $H$ vanishes at the end-points of the interval, and, as we noted above, is positive somewhere on $[0,1]$. Hence, it must have a global maximum $y \in (0,\half]$. By construction,
$H(y)>0$. Since $H$ is differentiable, we also have the first-order condition $H'(y)=F(y)-y=0$.

Let $z=\frac{1}{F(y)}\int_0^y x \,\dd \nu(x).$
We claim that $z<\frac{y}{2}$. To see this note that 
$$
  z=\frac{1}{F(y)}\int_0^y x\,\dd F(x)=\frac{1}{F(y)}\left[yF(y)-\int_0^y F(x) \,\dd x\right]<y-\frac{y^2}{2F(y)}=y-\frac{y}{2}=\frac{y}{2}.
$$
The second equality follows from integration by parts, the third equality holds since $\int_0^y F(x)\,\dd x>\frac{y^2}{2}$, and the forth inequality follows since $F(y)=y$. 
 
Let $A_1=[1-y,1]$ and $A_2=[0,y].$ Let $\p=\nu\times\nu$.
Note that by symmetry $\nu([y,1])=1-F(y)=1-y$ and by independence $\p(A_1\times\overline{A_2})=y(1-y)$. In addition, by construction, $\int_{A_1}x \,\dd \p(x)=\frac{1}{2}-\int_0^yx \,\dd \p(x)=\frac{1}{2}-yz$, and $\int_{A_2}x\,\dd \p(x)=y z$. Therefore 
$$
  \int_{A_1}x \,\dd \p(x)-\int_{A_2}x \,\dd \p(x)-\p(A_1 \times \overline{A_2})=\frac{1}{2}-yz -yz-y(1-z)=\frac{1}{2}-2yz-y+y^2>\frac{1}{2}-y\geq 0,
$$
where the second inequality follows since $z<\half$. Therefore, Theorem \ref{thm:two-agent-feasible} implies that  $\p=\nu\times\nu$ is not feasible.

\end{proof}

\begin{proof}[Proof of Proposition \ref{pro:product}] %via the no-trade argument}]
Let $m\in (0,1)$ be the median of $\nu$. We first prove the proposition for the case where $\nu$ has no atom at $m$. We assume for simplicity that the number of agents is even; i.e., $n=2k$ (otherwise, we simply ignore the last agent).

We consider the trading scheme $(a_1,\ldots,a_{2k})$ given by
\begin{align*}
    a_i(x) = \mathds{1}_{x < m}
\end{align*}
for $i \in \{1,\ldots, k\}$, and
\begin{align*}
    a_i(x) = -\mathds{1}_{x > m}
\end{align*}
for $ i \in \{k+1,\ldots, 2k\}$. We argue that for large enough $n$ (i.e., large enough $k$) this trading scheme violates the condition of Theorem~\ref{thm:no-trade}.

Let $B_1\sim \mathrm{Bin}(k,\,\frac{1}{2})$ denote the random variable of the number of agents $i=1,...,k$ whose posterior is in $[0,m)$ (i.e., those that sell the product). Let $B_2\sim \mathrm{Bin}(k,\,\frac{1}{2})$ denote the random variable of the number of agents $i=k+1,...,2k$ whose posterior is in $(m,1]$ (i.e., those that buy the product). Since $\nu^n$ is a product distribution we have that $B_1$ is independent of $B_2$. 

Note also that $\sum_{i=1}^n a_i(x_i) = B_2-B_1$, and so we have that the lower bound to the mediator's profit given by \eqref{eq:trading} equals
$$
  k\int_m^1x\,\dd\nu(x) - k\int_0^m x\,\dd\nu(x)-\mathbb{E}\big(\max\{0,B_2-B_1\}\big).
$$
We observe that $\mathbb{E}\big(\max\{0,B_2-B_1\}\big) \leq \sqrt{\frac{k}{8}}$: We have $\mathbb{E}\big(B_2-B_1\big)=0$ and $\mathbb{E}\big((B_2-B_1)^2\big)=\frac{k}{2}$. Jensen's inequality implies that $\mathbb{E}\big(|B_2-B_1|\big)\leq \sqrt{\frac{k}{2}}$ and by symmetry we deduce that $\mathbb{E}\big(\max\{B_2-B_1,0\}\big)\leq \frac{1}{2}\sqrt{\frac{k}{2}}$.

It follows that for $k> \frac{1}{8}\big(\int_m^1 x d\nu-\int_0^m x d\nu\big)^{-2}$ this lower bound on the mediator's profit is positive, and thus by Theorem~\ref{thm:no-trade} the belief distribution $\nu^n$ is not feasible.

In case the distribution $\nu$ has an atom at $m$, the only needed change is to choose $a_i(m)$ so that the expected number of units that each agent buys or sells is $\half$; the same calculation applies in this case.

\end{proof}

\begin{proof}[Proof of Lemma~\ref{lem:revelation}]
Let $I_0=((S_i)_{i \in N},\,\bP)$ be an information structure with prior $p$ that induces $\p$. By the law of total expectation,
$$
  x_i = \bP(\omega=h\mid s_i) = \bP(\omega=h\mid x_i)
$$
almost everywhere. Hence, if we define a new information structure $I'$ for which the signals are $s_i' = x_i$ and the beliefs are, accordingly, $x_i'=\bP(\omega=h\mid s_i')$, we will have that 
$$
  x_i' = s_i',
$$
and so we proved the second part of the claim. 

Denote by $\p^\ell$ and $\p^h$ be the conditional distributions of $(x_1',\ldots,x_n')$, conditioned on $\omega=\ell$ and $\omega=h$, respectively. Then clearly $\p = (1-p)\cdot \p^\ell+p \cdot \p^h$.

Finally,  \eqref{eq:marginals} holds, since by Bayes' law and the fact that $x_i' = s_i'$,
$$
  x_i' = \bP(\omega=h\mid x_i') = p\cdot \frac{\dd \p^h_i}{\dd \p_i}(x_i').
$$
\end{proof}

\begin{proof}[Proof of Proposition~\ref{prop:social-persuasion}]
  Let $B = (N,\,p,\,(A_i)_i,\,(u_i)_i,\, u_s)$ be a first-order Bayesian persuasion problem. By our equilibrium refinement assumption,  for each receiver $i$ there is a map $\alpha_i \colon [0,1] \to A_i$ such that in equilibrium receiver $i$ chooses action $\tilde a_i = \alpha_i(x_i)$ when their posterior is $x_i$. Hence, if we set
  $$
  v(x_1,\ldots,x_n) = u_s(\alpha_1(x_1),\ldots,\alpha_n(x_n)),
  $$
  Then, given an information structure chosen by the sender,
  \begin{align*}
    \E(u_s(\tilde a_1,\ldots,\tilde a_n)) = \E(u_s(\alpha_1(x_1),\ldots,\alpha_n(x_n))) = \E(v(x_1,\ldots,x_n)).
  \end{align*}
  Thus, if we denote by $\p$ the posterior belief distribution induced by the chosen information structure, we have shown that the sender's expected utility is
  \begin{align*}
    \int v(x_1,\ldots,x_n)\,\dd \p(x_1,\ldots,x_n).
  \end{align*}
  Since the sender, by choosing the information structure, can choose any $p$-feasible distribution, we arrive at the desired conclusion.
\end{proof}

\begin{proof}[Proof of Proposition~\ref{prop:quadratic-persuasion}] 
We first show that every information structure $I = (S_1,S_2,\bP)$ with prior $p$ chosen by the sender yields utility at most $(1-p)p$. 

This proof uses the fact conditional expectations are orthogonal projections in the Hilbert space of square-integrable random variables. We now review this fact.

For the probability space $(\Omega \times S_1 \times S_2,\, \bP)$ denote by $\cL^2$ the Hilbert space of square-integrable random variables, equipped with the usual inner product $(X,\,Y) = \E(X\cdot Y)$ and corresponding norm $\Vert X \Vert = \sqrt{\E(X^2)}$.

Given a sub-sigma-algebra $\mathcal{G} \subseteq \mathcal{F}$, denote by $\cL^2(\mathcal{G}) \subseteq \cL^2$ the closed subspace of $\mathcal{G}$-measurable, sqaure-integrable random variables. Recall that the map $X \mapsto \E(X \mid\mathcal{G})$ is simply the orthogonal projection $\cL^2 \to \cL^2(\mathcal{G})$. 

The following is an elementary lemma regarding Hilbert spaces. We will, of course, apply it to $\cL^2$.
\begin{lemma}
  \label{lem:ortho}
  Let $u$ be a vector in a Hilbert space $U$, and let $w$ be an orthogonal projection of $u$ to a closed subspace $W \subseteq U$. Then $w$ lies on a sphere of radius $\frac{1}{2}\Vert u\Vert$ around $\frac{1}{2}u$, i.e.\ $\Vert w -  \frac{1}{2}u\Vert = \frac{1}{2}\Vert u\Vert$.
\end{lemma}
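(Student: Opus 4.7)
The plan is to exploit the defining property of orthogonal projections: the residual $u-w$ is orthogonal to the subspace $W$, and in particular orthogonal to $w$ itself (since $w \in W$). This single orthogonality relation pins down the inner product $(u,w)$ in terms of $\|w\|^2$, which is exactly what is needed to compute the squared distance from $w$ to $\tfrac{1}{2}u$.

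Concretely, I would expand
\[
\Bigl\|w - \tfrac{1}{2}u\Bigr\|^2 = \|w\|^2 - (w,u) + \tfrac{1}{4}\|u\|^2,
\]
then substitute $(w,u) = \|w\|^2$, which follows from $(u-w,w) = 0$. Two of the three terms cancel and we are left with $\tfrac{1}{4}\|u\|^2$, i.e.\ $\bigl\|w - \tfrac{1}{2}u\bigr\| = \tfrac{1}{2}\|u\|$, as required.

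There is no real obstacle here; the lemma is a one-line consequence of the Pythagorean/projection identity. Geometrically it is simply the statement that if we complete the right triangle with vertices $0$, $w$, $u$ (right angle at $w$), then the hypotenuse $[0,u]$ is a diameter of the circumscribed circle and $w$ lies on that circle—i.e.\ $w$ is on the sphere of radius $\tfrac{1}{2}\|u\|$ centered at the midpoint $\tfrac{1}{2}u$.
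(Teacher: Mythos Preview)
Your proof is correct and is essentially the same as the paper's: both rely on the orthogonality $(u-w,w)=0$, with the paper writing $u=w+w'$ and using $\Vert w-w'\Vert=\Vert w+w'\Vert$ to reach the same conclusion, while you expand $\Vert w-\tfrac{1}{2}u\Vert^2$ directly. The difference is purely in algebraic packaging.
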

\begin{proof}
  Since $w$ is an orthogonal projection of $u$, we can write $u = w + w'$, where $w$ and $w'$ are orthogonal. This orthogonality implies that $
    \Vert w-w' \Vert = \Vert w+w'\Vert = \Vert u \Vert$. It follows that
  \begin{align*}
    \Big\Vert w -  \frac{1}{2}u\Big\Vert= \Big\Vert w -  \frac{1}{2}(w+w')\Big\Vert= \Big\Vert \frac{1}{2}w -  \frac{1}{2}w'\Big\Vert= \frac{1}{2}\Vert w -  w'\Vert= \frac{1}{2}\Vert u\Vert.
  \end{align*}
\end{proof}

Let $X = \mathds{1}_{\omega = h}$ be the indicator of the event that the state is high. Hence $x_i = \E(X\mid\mathcal{F}_i)$, where $\mathcal{F}_i = \sigma(s_i)$ is the sigma-algebra generated by the private signal $s_i$. Denote $\hat X = X/p - 1$. Hence $\bP(\hat X=1/p-1)=p$, $\bP(\hat X=-1)=1-p$, and $\Vert \hat X \Vert^2 = (1-p)/p$. Denote $\hat X_i = x_i/p-1$, so that $\hat X_i = \E(\hat X\mid\mathcal{F}_i)$. Since $\hat X_i$ is the projection of $\hat X$ to the subspace $\cL^2(\mathcal{F}_i)$, it follows from Lemma~\ref{lem:ortho} that
\begin{align*}
  \Vert \hat X_i - \hat X/2 \Vert = \frac{1}{2}\sqrt{\frac{1-p}{p}}.
\end{align*}
Therefore, by the triangle inequality, $\Vert \hat X_1 - \hat X_2 \Vert \leq \sqrt{(1-p)/p}$, and since $\hat X_i = x_i/p-1$ we get that $\Vert x_1/p - x_2/p \Vert \leq \sqrt{(1-p)/p}$,
or
\begin{align*}
  \E((x_1/p-x_2/p)^2) \leq \frac{1-p}{p}.
\end{align*}
Thus
\begin{align*}
  \E((x_1-x_2)^2) \leq (1-p)p.
\end{align*}

To finish the proof of  Proposition~\ref{prop:quadratic-persuasion}, we note that a simple calculation shows that the sender can achieve the expected utility $(1-p)p$ by completely informing one agent, and giving no information to the other. That is, by inducing the joint posterior distribution
$$
\p = p\cdot\delta_{1,p} + (1-p)\delta_{0,p}.
$$
\end{proof}

\paragraph{Remark on the polarizing first-order Bayesian persuasion problem with $a < 2$ and $p=\half$.}
For the first-order Bayesian persuasion problem $B_a$ given by $v(x_1,x_2)=|x_1-x_2|^a$ and symmetric prior $p=\half$, we argue that $\p=\frac{1}{2}\delta_{1,\frac{1}{2}} + \frac{1}{2}\delta_{0,\frac{1}{2}}$ is the optimal posterior distribution for all $a\in (0, 2)$; thus the value $V(B_a)$ of the problem  equals $\frac{1}{2^a}$. To prove this, it is enough to check the upper bound $V(B_a)\leq \frac{1}{2^a}$.

Consider an arbitrary $\half$-feasible policy $P'$. H\"{o}lder's inequality implies
$$\int v(x_1,x_2)\,d\p'= \int v(x_1,x_2)\cdot 1\,d\p'\leq \left(\int \left(v(x_1,x_2)\right)^q\,d\p' \right)^\frac{1}{q}\cdot\left(\int 1^{q'}\, d\p'\right)^\frac{1}{q'}$$
where $q,q'>0$ and $\frac{1}{q}+\frac{1}{q'}=1$.  Picking $q=\frac{2}{a}$ and taking supremum over $\p'$ on both sides, we get $V(B_a)\leq \big(V(B_2)\big)^\frac{a}{2}$. By Proposition~\ref{prop:quadratic-persuasion}, $V(B_2)=\frac{1}{4}$ and we obtain the required upper bound.

\begin{proof}[Proof of Proposition~\ref{prop:anti}]
It is immediate to check that the expected utility for the distribution $\p$ given in the theorem statement is $1/32$. It thus remains to prove that no other distribution can achieve higher expected utility.

We retain the notation used in the proof of Proposition~\ref{prop:social-persuasion}. Let $(x_1,x_2)$ be the posterior beliefs of two agents induced by some information structure. Since the prior is $p=\half$, $\hat X_i = 2x_i-1$, and so
\begin{align*}
    \mathbb{E}(v(x_1,x_2)) = -\mathbb{E}((x_1-1/2) \cdot (x_2-1/2)) = -\frac{1}{4}\mathbb{E}(\hat X_1 \cdot \hat X_2).
\end{align*}
We recall that $(\cdot,\cdot)$ denotes the inner product in the Hilbert space $\cL^2$. Thus we can write
\begin{align}
    \label{eq:x1x2}
    \mathbb{E}(v(x_1,x_2)) = -\frac{1}{4}(\hat X_1,\, \hat X_2).
\end{align}
By Lemma~\ref{lem:ortho}, the vectors $\hat X_1,\hat X_2 \in \cL^2$ lie on a sphere of radius $\frac{1}{2}\Vert \hat X\Vert$ around $\frac{1}{2}\hat X \in \cL^2$. Note that since $p=\half$, $\Vert\hat X\Vert=1$. Since $\{\hat X,\hat X_1,\hat X_2\}$ span a subspace that is (at most) three dimensional, the question is reduced to the following elementary question in  three dimensional geometry: given a vector $w \in \R^3$, what is the minimum of the inner product $(u_1,u_2)$, as $u_1,u_2 \in \R^3$ range over the sphere of radius $\Vert w \Vert$ around $w$? Lemma~\ref{lem:sphere} states that the minimum is $-\frac{1}{2}\Vert w \Vert^2$. Hence by \eqref{eq:x1x2}
\begin{align*}
    \mathbb{E}(v(x_1,x_2)) \leq \frac{1}{32}.
\end{align*}
\end{proof}

\begin{figure}[h]
\begin{center}
\begin{tikzpicture}[scale=0.4, line width = 0.7pt]
\draw  (0,2)--(12,2);
\draw  (2,0)--(2,12);

\draw (5.54, 5.54) circle (5);

\draw[line width = 1pt,->, >=stealth] (2,2) -- (5.54,5.54);
\node[right] at (5.54,5.54) {\large $w$};

%\draw[densely dashed] (2,5.54) -- (5.54,5.54);
%\draw[densely dashed] (5.54,2) -- (5.54,5.54);
%\node[above left] at (5.54,2) {$\frac{1}{\sqrt{2}}$};
%\node[below right] at (2,5.54) {$\frac{1}{\sqrt{2}}$};
%\circle at (2,5.54);

\node[below left] at (2,2) {\large $0$};
\draw[line width = 1pt,->, >=stealth] (2,2) -- (0.71,6.83);
\node[right] at (0.69,6.87) {\large $u_1$};
\draw[line width = 1pt,->, >=stealth] (2,2) -- (6.83,0.71);
\node[above] at (6.83,0.71) {\large $u_2$};
\end{tikzpicture}
\end{center}
\caption{Lemma~\ref{lem:sphere} states that $-\half$ is the smallest possible inner product between two vectors that lie on a unit sphere which intersects the origin. The depicted $u_1$ and $u_2$ achieve this minimum.\label{figure:sphere}}
\end{figure}
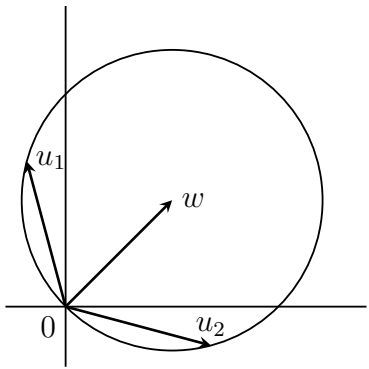

\begin{lemma}
\label{lem:sphere}
Let $w \in \R^3$. Then
$$
  \min\Big\{(u_1,u_2)\,:\, \Vert w-u_1\Vert = \Vert w-u_2\Vert = \Vert w \Vert \Big\} = -\frac{1}{2}\Vert w \Vert^2.
$$  
\end{lemma}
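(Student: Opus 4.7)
The plan is to exploit the fact that the sphere in question passes through the origin: since $\|w - 0\| = \|w\|$, any point $u$ on the sphere satisfies
$$
  \|u\|^2 = \|u\|^2 - 2(w,u) + \|w\|^2 + 2(w,u) - \|w\|^2 = \|w-u\|^2 - \|w\|^2 + 2(w,u) = 2(w,u),
$$
that is, $(w,u) = \tfrac12\|u\|^2$ whenever $u$ lies on the sphere.

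To obtain the lower bound $(u_1,u_2) \ge -\tfrac12\|w\|^2$, my key step would be to expand the obviously non-negative quantity $\|u_1 + u_2 - w\|^2$. Indeed,
$$
  0 \le \|u_1 + u_2 - w\|^2 = \|u_1\|^2 + \|u_2\|^2 + \|w\|^2 + 2(u_1,u_2) - 2(w,u_1) - 2(w,u_2),
$$
and substituting $\|u_i\|^2 = 2(w,u_i)$ from the previous paragraph collapses the cross terms and yields exactly $\|w\|^2 + 2(u_1,u_2) \ge 0$, which is the desired inequality.

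For sharpness, I would exhibit an explicit pair achieving equality. Equality in the argument above holds precisely when $u_1 + u_2 = w$, so any such pair on the sphere works. Writing $r = \|w\|$ and (by choice of an orthonormal basis) $w = (r,0,0) \in \R^3$, I would take
$$
  u_1 = \bigl(\tfrac{r}{2},\, \tfrac{r\sqrt{3}}{2},\, 0\bigr), \qquad u_2 = \bigl(\tfrac{r}{2},\, -\tfrac{r\sqrt{3}}{2},\, 0\bigr).
$$
A direct check gives $\|w-u_i\|^2 = r^2 = \|w\|^2$, while $(u_1,u_2) = \tfrac{r^2}{4} - \tfrac{3r^2}{4} = -\tfrac12\|w\|^2$. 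This is the configuration suggested by Figure~\ref{figure:sphere}, where the three points $0, u_1, u_2$ form an equilateral triangle inscribed in a great circle of the sphere.

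There is no real obstacle here: the whole proof hinges on recognizing that $\|u_1 + u_2 - w\|^2 \ge 0$ is exactly the right non-negative quantity to expand. Without this identification, one would be forced into a Lagrangian or coordinate-based minimization — parametrizing $u_i = w + v_i$ with $\|v_i\| = \|w\|$, reducing to a trigonometric optimization of $4\cos(a)\cos(b)\cos(a+b)$ over $a,b \in [0,\pi/2]$, and checking the critical point $a = b = \pi/3$. That route works but is considerably more laborious than the one-line expansion above.
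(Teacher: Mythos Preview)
Your argument is correct and, in fact, more substantive than what the paper provides: the paper simply leaves Lemma~\ref{lem:sphere} as an exercise (and offers a Mathematica command as an alternative verification), so there is no proof to compare against. Your approach is clean: the key observation $\|u\|^2 = 2(w,u)$ for points on the sphere, followed by expanding $\|u_1 + u_2 - w\|^2 \ge 0$, gives the lower bound in one line, and your explicit pair (the equilateral-triangle configuration visible in Figure~\ref{figure:sphere}) establishes sharpness. This is exactly the kind of short, conceptual argument the authors presumably had in mind when calling it an exercise, and it is certainly preferable to the Lagrangian or trigonometric routes you mention at the end.
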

The proof is left as an exercise to the reader. Alternatively, this problem can be solved symbolically using the Mathematica command \begin{verbatim}
  Minimize[{x1*y1 + x2*y2 + x3*y3, 
  (x1 - 1)^2 + (x2)^2 + (x3)^2 == 1 && (y1 - 1)^2 + (y2)^2 + (y3)^2 == 1}, 
  {x1, x2, x3, y1, y2, y3}]
\end{verbatim}

\begin{proof}[Proof of Theorem \ref{thm:extreme}]
We start by constructing an extreme point of $\cP_p^2$ with infinite support. Thereafter, we extend this construction for $N\geq 2$ receivers. 

Consider the following information structure. Let $S_1 = S_2 = \{1,2,\ldots\}$. Choose at random a number $K$, which is distributed geometrically with parameter $r$ that depends on the state: When $\omega=\ell$, $r=\nicefrac{2}{3}$, and when $\omega=h$, $r=\nicefrac{1}{2}$. The signals $(s_1,s_2)$ that the agents receive are equal to $(K,K)$ when $\omega=\ell$, and to $(K+1,K)$ when $\omega=h$. 

A simple calculation shows that when an agent 1 observes a signal $s_1=k$ her posterior $x_1(k)=t_k$
and when agent 2 observes $s_2=k$, her posterior is $x_2(k)=w_k$, where $t_k$ and $w_k$ are given by 
$$t_k=\left\{\begin{array}{cc}\frac{2^{-k+1}p}{2^{-k+1}p+2\cdot 3^{-k}(1-p)}, & k\geq 2\\
0, & k=1\end{array}\right.,\qquad w_k=\frac{2^{-k}p}{2^{-k}p+2\cdot 3^{-k}(1-p)},\ \   k\geq 1.$$
The induced conditional distributions of $(x_1,x_2)$ are
\begin{align*}
    \p^{\ell} &= \sum_{k=1}^\infty 2\cdot 3^{-k} \cdot \delta_{(t_k,\,w_k)} \\
    P^{h} &= \sum_{k=1}^{\infty} 2^{-k} \cdot \delta_{(t_{k+1},\,w_k)}.
\end{align*}
Note that $\p^\ell$ and $\p^h$ have disjoint supports; see Figure~\ref{figure:extreme}.

Hence the induced (unconditional) distribution is $\p=(1-p)\cdot \p^\ell+p\cdot \p^h$. It satisfies the  pair of identities
\begin{align}\label{eq_p_extreme_identity_t}
{t_{k}}\cdot \p\left(\left\{\big(t_{k},w_{k}\big)
\right\}\right)
&=(1-t_{k})\cdot \p\left(\left\{\big(t_{k},w_{k-1}\big)\right\}\right), & k\geq 2,\\
\label{eq_p_extreme_identity_w}
{w_{k}}\cdot \p\left(\left\{\big(t_{k},w_{k}\big)
\right\}\right)
&=(1-w_{k})\cdot \p\left(\left\{\big(t_{k+1},w_{k}\big)\right\}\right),& k\geq 1,
\end{align}
which we use below.

We first argue that $(\p^\ell,\,\p^h)$ is the unique implementation of $P$.\footnote{See \S\ref{sec:implementation} for the definition of an implementation, as well as an example that shows that feasible distributions may have multiple implementations.} 

By Lemma~\ref{lem:revelation}, for any pair $(\q^\ell,\q^h)$ implementing a feasible distribution $\q$, the marginals 
satisfy $\dd \q^h_i(x)=\frac{x}{p}\dd \q_i(x)$ for each agent $i\in N$. Since $\q^\ell=\frac{\q-p\cdot \q^h}{1-p}$, we get the complementary equation $\dd \q^\ell_i(x)=\frac{1-x}{1-p}\dd \q_i(x)$. Combining the two equations, we obtain 
\begin{equation}\label{eq_identity_for_marginals}
\frac{x}{p}\cdot \dd \q^\ell_i(x)=\frac{1-x}{1-p}\cdot \dd \q^h_i(x) \ \ \ \mbox{ for $i\in N$.}
\end{equation}

Let $(\hat\p^\ell,\hat\p^h)$ be an implementation of $\p$. The identity~\eqref{eq_identity_for_marginals}, applied to $i=1$ and $x=t_k$, gives the following equation
\begin{equation}\label{eq_hatP_base}
\frac{t_{1}}{p} \hat\p^\ell\left(\left\{\big(t_{1},w_{1}\big)
\right\}\right)=\frac{1-t_{1}}{1-p}\hat\p^h\left(\left\{\big(t_{1},w_{1}\big)
\right\}\right)
\end{equation}
for $k=1$, and the following family of equations for $k\geq 2$:
\begin{align}\label{eq_projection_x1}
\lefteqn{\frac{t_{k}}{p} \left(\hat\p^\ell\left(\left\{\big(t_{k},w_{k}\big)
\right\}\right) + \hat\p^\ell\left(\left\{\big(t_{k},w_{k-1}\big)
\right\}\right)\right)}\notag\\
&=\frac{1-t_{k}}{1-p}\left(\hat\p^h\left(\left\{\big(t_{k},w_{k}\big)
\right\}\right) + \hat\p^h\left(\left\{\big(t_{k},w_{k-1}\big)
\right\}\right)\right).
\end{align}
Similarly, for $i=2$ and $x_2=w_k$, we get
\begin{align}
\label{eq_projection_x2}
\lefteqn{\frac{w_{k}}{p} \left(\hat\p^\ell\left(\left\{\big(t_{k},w_{k}\big)
\right\}\right) + \hat\p^\ell\left(\left\{\big(t_{k+1},w_{k}\big)
\right\}\right)\right)}\notag\\
&=\frac{1-w_{k}}{1-p}\left(\hat\p^h\left(\left\{\big(t_{k},w_{k}\big)
\right\}\right) + \hat\p^h\left(\left\{\big(t_{k+1},w_{k}\big)
\right\}\right)\right).
\end{align}
We now show that these equations and the condition $\p=(1-p)\hat\p^\ell+ p\cdot\hat\p^h$ completely determine the pair $(\hat\p^\ell, \hat\p^h)$. Since $t_1=0$, equation~\eqref{eq_hatP_base} results in $\hat\p^h(\{(t_1,w_1)\})=0$ and hence the entire mass
$\frac{1}{1-p}\p(\{(t_1,w_1)\})$ of the point $(t_1,w_1)$ must be assigned to $\hat{P}^\ell$. Given this, the equality~\eqref{eq_projection_x2} implies the entire mass $\frac{1}{p}\p(\{(t_{2},w_1)\})$ of the point $(t_2,w_1)$ must be assigned to $\hat{\p}^h$. Indeed, expressing $\hat\p^\ell(\{(t_k,w_k)\})$ and $\hat\p^h(\{(t_{k+1},w_k)\})$ through $\p$ and taking into account that $\hat\p^h(\{(t_k,w_k)\})=0$ for $k=1$, we rewrite~\eqref{eq_projection_x2} as
\begin{align}\label{eq_inductive_w}
\lefteqn{\frac{w_{k}}{p} \left(\frac{1}{1-p}\p\left(\left\{\big(t_{k},w_{k}\big)
\right\}\right) + \hat\p^\ell\left(\left\{\big(t_{k+1},w_{k}\big)
\right\}\right)\right)}\notag\\
&=\frac{1-w_{k}}{1-p}\left(\frac{\p\left(\left\{\big(t_{k+1},w_{k}\big)
\right\}\right)-(1-p)\hat\p^\ell\left(\left\{\big(t_{k+1},w_{k}\big)
\right\}\right)}{p}\right)
\end{align}
for $k=1$. This equality 
and the identity~\eqref{eq_p_extreme_identity_w} lead to $\hat\p^\ell(\{(t_{2},w_1)\})=0$.

Next, a similar argument demonstrates that the entire mass $\frac{1}{1-p}\p\left(\left\{\big(t_{2},w_{2}\big)\right\}\right)$ of the point $(t_2,\,w_2)$ must be assigned to $\hat{\p}^\ell$.  Indeed, we know that $\hat\p^\ell(\{(t_k,w_{k-1})\})=0$ for $k=2$, which allows us to rewrite~\eqref{eq_projection_x1} as
\begin{align}\label{eq_inductive_t}
\lefteqn{\frac{t_{k}}{p} \left(\frac{\p\left(\left\{\big(t_{k},w_{k}\big)
\right\}-p\cdot\hat\p^h\left(\left\{\big(t_{k},w_{k}\big)
\right\}\right)\right)}{1-p}\right)}\notag\\
&=\frac{1-t_{k}}{1-p}\left(\hat\p^h\left(\left\{\big(t_{k},w_{k}\big)
\right\}\right) + \frac{1}{p}\p\left(\left\{\big(t_{k},w_{k-1}\big)
\right\}\right)\right)
\end{align}
for $k=2$. Combining this equality with  identity~\eqref{eq_p_extreme_identity_t}, we get $\hat\p^h\left(\left\{\big(t_{2},w_{2}\big)
\right\}\right)=0$.
We proceed inductively: knowing that $\hat\p^h\left(\left\{\big(t_{k},w_{k}\big)
\right\}\right)=0$, we deduce equality~\eqref{eq_inductive_w} and infer using~\eqref{eq_p_extreme_identity_w} that $\hat\p^\ell\left(\left\{\big(t_{k+1},w_{k}\big)
\right\}\right)=0$; from this, we derive equality~\eqref{eq_inductive_t} and, with the help of~\eqref{eq_p_extreme_identity_t}, see that $\hat\p^h\left(\left\{\big(t_{k+1},w_{k+1}\big)
\right\}\right)=0$, and so on. Thus $\hat\p^\ell$ coincides with $\frac{1}{1-p}\p $ restricted to  $\{(t_k,w_k)\,:\,k\geq 1\}$ and $\hat\p^h$ coincides with $\frac{1}{p}\p$ restricted to $\{(t_{k+1},w_k)\,:\, k \geq 1\}$, and so $\hat \p^\ell = \p^\ell$, $\hat\p^h=\p^h$. Hence $\p$ has a unique implementation.

Given  a convex combination $\p=\alpha \q + (1-\alpha) \hat{\q}$, where $\q$ and $\hat{\q}$ are both feasible and $\alpha\in(0,1)$, our goal is to show that $\q=\p$.
Let $(\q^\ell,\,\q^h)$ and $(\hat{\q}^\ell,\,\hat{\q}^h)$
be some pairs of conditional probability distributions that implement $\q$ and $\hat{\q}$, respectively.
The pair $(\alpha \q^\ell+ (1-\alpha) \hat{\q}^\ell,\,\alpha \q^h+ (1-\alpha) \hat{\q}^h)$ implements~$\p$. From the uniqueness of $\p$'s implementation we deduce that $\supp(\q^\ell)\subset \supp(\p^\ell)$ and $\supp(\q^h)\subset \supp(\p^h)$, and thus $\q^\ell(\{(t_{k+1},w_k)\})=\q^h(\{(t_{k},w_k)\})=0$. Therefore, condition~\eqref{eq_identity_for_marginals} implies 
\begin{align}
\notag
\frac{t_{k}}{p}\cdot  \q^\ell\left(\left\{\big(t_{k},w_{k}\big)
\right\}\right)&=
\frac{1-t_{k}}{1-p}\cdot  \q^h\left(\left\{\big(t_{k},w_{k-1}\big)\right\}\right), & k&\geq 2\\
\notag
\frac{w_{k}}{p}\cdot  \q^\ell\left(\left\{\big(t_{k},w_{k}\big)
\right\}\right)&=
\frac{1-w_{k}}{1-p}\cdot\q^h\left(\left\{\big(t_{k+1},w_{k}\big)
\right\}\right), & k&\geq 1.
\end{align}
This family of equations uniquely determines the weights of each point
$(t_k,w_k)$ and $(t_{k+1},w_k)$, $k\geq 1$, up to a multiplicative factor, which is pinned down by the condition that $\q$ is a probability measure. Hence in the decomposition $\p=\alpha \q + (1-\alpha) \hat{\q}$, the distribution $\q$ is unique. Thus $\q$ must be equal to $\p$ and hence $\p$ is an extreme point.

This construction extends to $N\geq 2$, by sending no information to the remaining $N-2$ agents. The resulting distribution over posteriors is an extreme point due to the same arguments.

Now we check the singularity of the extreme points of $ \cP_p^N$ with respect to the Lebesgue measure $\lambda$ on the unit cube $[0,1]^N$. The proof relies on the classical theorem of \citet{lindenstrauss1965remark}, which states that all extreme points of the set of all probability measures with given marginals are singular. \citet{lindenstrauss1965remark} proved this result for $n=2$ and \citet{shortt1986singularity} extended it to general $n$. In~\S\ref{sec_Linden} (Theorem~\ref{th_Linden}), we include  an alternative proof for $n\geq 2$ closely following the original proof of Lindenstrauss. Here we show how to apply this result  to our problem.

By Lemma~\ref{lem:revelation}, any $p$-feasible distribution $\p$ can be represented as $(1-p)\p^\ell+p\cdot \p^h$, where the marginals of the pair $(\p^\ell,\, \p^h)$ satisfy the identity
 \begin{equation}\label{eq_pair_feasibility}
  \frac{x}{p}dP_i^\ell(x)=\left(1-\frac{x}{p}\right)dP_i^h(x).
 \end{equation} 
 Denote by ${\mathcal{H}}_p^N$ the set of all pairs satisfying~\eqref{eq_pair_feasibility}. The set of $p$-feasible distributions $\cP_p^N$ is, therefore, the image of the convex set $\mathcal{H}_p^N$ under the linear map $(\p^{\ell},\,\p^h)\to (1-p)\p^\ell+p\cdot \p^h$. The  extreme points of ${\cP}_p^N$ are hence all contained in the image of the extreme points of ${\mathcal{H}}_p^N$.
 
So it is enough to check that in any extreme pair $(\p^\ell,\,\p^h)$, both measures are necessary singular with respect to $\lambda$. We check singularity of $\p^\ell$; the argument for $\p^h$ is analogous. We assume towards a contradiction that $\p^\ell$ is not singular. Then by Theorem~\ref{th_Linden}, $\p^\ell$ is not an extreme point among measures with the same marginals and thus can be represented as $\p^\ell=\frac{1}{2}Q+\frac{1}{2}Q'$, where $Q$ and $Q'$ have the same marginals as $\p^\ell$ and $Q\ne Q'$. This induces the representation of $(\p^\ell,\,\p^h)$ as the average of $\big(Q, \,\p^h\big)$ and $\big(Q', \,\p^h\big)$, where both pairs satisfy~\eqref{eq_pair_feasibility}. Contradiction.
\end{proof}
\begin{proposition}
\label{prop:convex-compact}
The set of $p$-feasible posterior distributions is a convex, weak* compact subset of $\Delta([0,1]^n)$.
\end{proposition}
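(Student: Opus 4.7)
The plan is to establish convexity and compactness separately, leveraging Lemma~\ref{lem:domination} for the latter.

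For convexity, suppose $\p_1,\p_2 \in \cP_p^N$ are realized by information structures $I_1=((S_i^{(1)})_i,\bP_1)$ and $I_2=((S_i^{(2)})_i,\bP_2)$, and fix $\alpha \in [0,1]$. I would construct a new information structure on the \emph{disjoint} unions $S_i = S_i^{(1)} \sqcup S_i^{(2)}$ with joint distribution $\bP = \alpha \bP_1 + (1-\alpha)\bP_2$ (each $\bP_j$ being supported on $\Omega \times S_1^{(j)}\times\cdots\times S_n^{(j)}$). Because the signals themselves reveal which component was drawn, the posterior of each agent $i$ conditional on $s_i \in S_i^{(j)}$ coincides with her posterior under $I_j$, so the induced joint posterior distribution is exactly $\alpha \p_1 + (1-\alpha) \p_2$. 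The prior is preserved: $\bP(\omega=h)=\alpha p+(1-\alpha)p=p$.

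For compactness, I first recall that $\Delta([0,1]^n)$ equipped with the weak* topology is compact (it is the space of probability measures on a compact metric space). It therefore suffices to prove that $\cP_p^N$ is weak* closed. Let $\p_k \to \p$ weakly with each $\p_k$ being $p$-feasible. By Lemma~\ref{lem:domination}, for each $k$ I can pick $\q_k \in \Delta([0,1]^n)$ satisfying $\q_k \leq \frac{1}{p}\p_k$ and $\dd\q_{k,i}(x)=\frac{x}{p}\dd\p_{k,i}(x)$ for every agent $i$. By weak* compactness, I pass to a subsequence along which $\q_k \to \q$ for some $\q \in \Delta([0,1]^n)$.

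To pass the properties to the limit: for any non-negative continuous $f$ on $[0,1]^n$, integrating against the inequality $\q_k \leq \frac{1}{p}\p_k$ and taking $k\to\infty$ yields $\int f\,\dd\q \leq \frac{1}{p}\int f\,\dd\p$; by outer regularity this is equivalent to $\q \leq \frac{1}{p}\p$ as measures. For the marginal condition, for any continuous $g$ on $[0,1]$ the function $x \mapsto x\cdot g(x)$ is continuous on $[0,1]$, so
$$
\int g(x)\,\dd\q_i(x) = \lim_k \int g(x)\,\dd\q_{k,i}(x) = \lim_k \int \frac{x\cdot g(x)}{p}\,\dd\p_{k,i}(x) = \int \frac{x\cdot g(x)}{p}\,\dd\p_i(x),
$$
which gives $\dd\q_i(x)=\frac{x}{p}\dd\p_i(x)$. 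Applying Lemma~\ref{lem:domination} in the reverse direction, $\p$ is $p$-feasible, completing the argument.

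I don't anticipate a substantive obstacle; the only point requiring mild care is translating the pointwise measure inequality $\q \leq \frac{1}{p}\p$ into an equivalent inequality on integrals of non-negative continuous functions so that weak convergence can be applied, but this is a standard consequence of outer regularity of Borel measures on a compact metric space.
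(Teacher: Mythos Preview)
Your proof is correct. The compactness argument is essentially identical to the paper's: both pick witnesses $\q_k$ from Lemma~\ref{lem:domination}, pass to a weak* subsequential limit, and verify that the two conditions of the lemma (the domination $\q \leq \frac{1}{p}\p$ and the marginal identity) survive weak convergence by rewriting them as integral conditions against continuous test functions. Your convexity argument, however, differs from the paper's: you build an explicit mixed information structure on disjoint signal spaces, whereas the paper stays within the framework of Lemma~\ref{lem:domination}, observing that if $\q$ and $\q'$ witness feasibility of $\p$ and $\p'$, then $\alpha \q + (1-\alpha)\q'$ witnesses feasibility of $\alpha \p + (1-\alpha)\p'$. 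Your construction is more concrete and self-contained (it does not invoke the characterization lemma at all), while the paper's is slightly shorter and keeps both halves of the proof unified under the same tool.
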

\begin{proof}%[Proof of Proposition~\ref{prop:convex-compact}]
Pick a pair of feasible distributions $\p,\,\p' \in \cP_p^N$ and consider the corresponding $Q,\,Q'\in\Delta\big([0,1]^N\big)$ that satisfy the conditions of Lemma~\ref{lem:domination} for $\p$ and $\p'$, respectively. For a convex combination $\p''=\alpha\p+(1-\alpha)\p'$, the conditions  of the lemma  are satisfied by $Q''=\alpha Q+(1-\alpha)Q'$. Thus $\p''$ also belongs to $\cP_p^N$, and so we have shown that the set of feasible distributions is convex.

To verify the weak* compactness of $\cP_p^N$, consider a sequence of $p$-feasible distributions $\p^{(k)}$ weakly converging to some $\p^{(\infty)}\in \Delta\big([0,1]^N\big)$, as $k\to\infty$. To prove weak* compactness we show that the limit distribution $\p^{(\infty)}$ is also feasible. For each $\p^{(k)}$, select some $Q^{(k)}$ from Lemma~\ref{lem:domination}. The set of all probability distributions on $[0,1]^N$ is weak* compact and, therefore, there is a subsequence $Q^{(k_m)}$ weakly converging to some $Q^{(\infty)}\in \Delta\big([0,1]^N\big)$.

The conditions of Lemma~\ref{lem:domination} can be rewritten in an equivalent integrated form. Condition~$(1)$ becomes 
\begin{equation}\label{eq_weak_domination}
\int f(x_1,\ldots,x_n)\left(\frac{1}{p}\dd\p(x_1,\ldots,x_n)-\dd Q(x_1,\ldots,x_n)\right)\geq 0
\end{equation}
for any non-negative continuous function $f$ on the unit cube. Condition~$(2)$ is equivalent to 
\begin{equation}\label{eq_weak_marginals}
\int g(x_i)\left(\frac{x_i}{p}\dd\p(x_1,\ldots,x_n)-\dd Q(x_1,\ldots,x_n)\right)= 0
\end{equation}
for any agent $i\in N$ and any continuous $g$ of arbitrary sign on the unit interval.

With this reformulation it is immediate that both conditions are closed, and hence withstand the weak* limit. Therefore, since $\p^{(k_m)}$, $ Q^{(k_m)}$ satisfy the conditions~\eqref{eq_weak_domination} and~\eqref{eq_weak_marginals}, the limiting pair $\p^{(\infty)}$, $ Q^{(\infty)}$ satisfies them as well. We deduce that  $\p^{(\infty)}$ is feasible. \end{proof}

\section{Checking on Intervals is not Sufficient}
\label{sec:intervals}
In this section we show that restricting the condition \eqref{eq:aumann-feasible-quant-2} to intervals $A_1,A_2$ provides a condition that---while clearly necessary---is not sufficient for feasibility.

To see this, consider the following distribution $\p$, depicted in Figure~\ref{figure:intervals}. It is parameterized by small $\varepsilon>0$ and is supported on four points: two ``heavy'' points, $\left(\frac{1}{4}-\frac{\varepsilon}{4},\frac{1}{2}\right)$ and $\left(\frac{3}{4},\frac{1}{2}\right)$ with probabilities $\frac{1-\varepsilon}{2}$ and $\half$, respectively, and two ``light'' points, $\left(\frac{1}{2}-\frac{\varepsilon}{4},0\right)$ and $\left(\frac{1}{2}-\frac{\varepsilon}{4},1\right)$, each with probability $\frac{\varepsilon}{4}$. Thus
\begin{align}
    \label{eq:intervals}
\p =
\frac{1-\varepsilon}{2}\delta_{\frac{1-\varepsilon}{4},\frac{1}{2}} +
\frac{1}{2} \delta_{\frac{3}{4},\frac{1}{2}} + \frac{\varepsilon}{4}\left(\delta_{1/2-\varepsilon/4,0}+
\delta_{1/2-\varepsilon/4,1}\right)
\end{align}

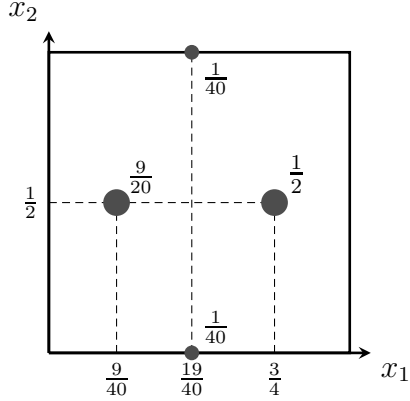
\begin{figure}[h]
\begin{center}
\begin{tikzpicture}[scale=0.4, line width = 1pt]
\draw (0,0)--(10,0)--(10,10)--(0,10)--(0,0);
		\draw [->,>=stealth] (0,0) -- (10.7,0);
\draw [->,>=stealth] (0,0) -- (0,10.7);
\node[below right] at (10.7,0) {\large $x_1$};
\node[above left] at (0,10.7) {\large $x_2$};

%\draw (5,0.1)--(5,-0.1);

\draw [thin, densely dashed] (2.25,0) -- (2.25,5);
\node[below] at (2.25,0) {$\frac{9}{40}$};
\draw [thin, densely dashed] (7.5,0) -- (7.5,5);
\node[below] at (7.5,0) {$\frac{3}{4}$};

%\draw (5,10.1)--(5,9.9);
\node[left] at (0,5) {$\frac{1}{2}$};
\draw [thin, densely dashed] (0,5) -- (7.5,5);

\filldraw[black!70] (2.25,5) circle (0.4);
\node[above right] at (2.26,5) {$\frac{9}{20}$};
\filldraw[black!70] (7.5,5) circle (0.4);
\node[below] at (4.75,0) {$\frac{19}{40}$};
\draw [thin, densely dashed] (4.75,0) -- (4.75,10);

\node[above right] at (7.6,5) {\large $\frac{1}{2}$};
\filldraw[black!70] (4.75,0) circle (0.2);
\node[above right] at (4.75,0) {$\frac{1}{40}$};
\filldraw[black!70] (4.75,10) circle (0.2);
\node[below right] at (4.75,10) {$\frac{1}{40}$};
\end{tikzpicture}
\end{center}
\caption{The distribution defined in \eqref{eq:intervals}, for $\varepsilon=1/10$. The ``heavy'' points are large, and the ``light'' points are small.\label{figure:intervals}}
\end{figure}

This distribution satisfies the martingale condition with prior $\half$; however, it is not $\half$-feasible. To see that, pick $A_1=\left\{\frac{1-\varepsilon}{4}, \frac{3}{4}\right\}$ and $A_2=\left\{\frac{1}{2}\right\}$, then $A_1\times \overline{A_2}$ has zero $\p$-measure and the condition \eqref{eq:aumann-feasible-quant-2} is violated:
$$0\geq \frac{1-\varepsilon}{4}\cdot\frac{1-\varepsilon}{2}+\frac{3}{4}\cdot\frac{1}{2}-\frac{1}{2}\left(\frac{1-\varepsilon}{2}+\frac{1}{2} \right)=\frac{\varepsilon^2}{8}.$$
%However, if we are restricted to intervals, all sets $A_1$ containing $\frac{1-\varepsilon}{4}$ and $\frac{3}{4}$ necessary contains the middle point $\frac{1}{2}-\frac{\varepsilon}{4}$ as well.

We now check that none of the inequalities \eqref{eq:aumann-feasible-quant-2} is violated for intervals $A_1, A_2$. Since $\p$ satisfies the martingale condition, it suffices to check the left inequality\footnote{Using the martingale condition $\int_{A_1} x \,\dd \p_1(x) + \int_{\overline{A_1}}x\,\dd \p_1(x)=p$ the right inequality in \eqref{eq:aumann-feasible-quant-2} follows from the left by a simple calculation.}
\begin{align}
    \label{eq:aumann-feasible-quant-2-left}
  \p(A_1 \times \overline{A_2}) \geq \int_{A_1} x\,\dd \p_1(x) - \int_{A_2} x\,\dd \p_2(x). 
\end{align}
Since $\p$ has finite support, different choices of $A_1,A_2$ yield the same inequality if each set contains the same points of the support of $\p_1$ and $\p_2$, respectively. Thus, we need to check that \eqref{eq:aumann-feasible-quant-2-left} is satisfied if $A_1$ is a subset of $\left\{\frac{1-\varepsilon}{4},\, \frac{1}{2}-\frac{\varepsilon}{4},\,\frac{3}{4}\right\}$ and $A_2$, a subset of $\left\{0,\,\frac{1}{2},\,1\right\}$, {\em except} the cases of $A_1=\left\{\frac{1-\varepsilon}{4},\,\frac{3}{4}\right\}$ and $A_2 = \left\{0,\,1\right\}$, which exclude the middle points and do not correspond to any interval.

We consider the following cases:
\begin{itemize}
\item Inequality \eqref{eq:aumann-feasible-quant-2-left} holds if one of the sets $A_i$ is empty or contains the whole support of $P_i$, as it then boils down to the martingale condition, which we already verified.
\item Consider the case when $A_1\times \overline{A_2}$ contains exactly one heavy point; by the interval constraint, if it contains two, then $A_1$ contains the support of $\p_1$, which is the case we already considered. In this case, $\p(A_1\times \overline{A_2})\geq \frac{1-\varepsilon}{2}$. On the other hand, the integral $\int_{A_1} x\,\dd\p_1(x)$ does not exceed $\frac{1}{2}-\frac{3}{4}\cdot\frac{1}{2}=\frac{1}{8}$ if $A_1$ excludes the rightmost heavy point and it does not exceed $\frac{1}{2}-\frac{1-\varepsilon}{4}\frac{1-\varepsilon}{2}$. We see that for $\varepsilon$ small enough (e.g., $\varepsilon=\frac{1}{10}$),  condition~\eqref{eq:aumann-feasible-quant-2-left} is satisfied regardless of the choice of $A_2$, since $\frac{1-\varepsilon}{2}\geq \max\left\{\frac{1}{8},\ \frac{1}{2}-\frac{1-\varepsilon}{4}\frac{1-\varepsilon}{2}\right\}$.
%, the inequality is satisfied for $\varepsilon<\frac{(1-\varepsilon)^2}{4}$, and in particular for all .
\item  Consider the remaining case, in which there are no heavy points in $A_1\times \overline{A_2}$, and both $A_1$ and $A_2$ are nonempty strict subsets of the supports. This can only be possible if $A_2$ contains $\half$ or  $A_1=\left\{\frac{1}{2}-\frac{\varepsilon}{4}\right\}$. In the former case, $\int_{A_2}x\,\dd\p_2(x)\geq \frac{1}{2}-\frac{\varepsilon}{4}$, which for small $\varepsilon$ ($\frac{1}{10}$ suffices) exceeds $\int_{A_1} x\,\dd \p_1(x)$ for any $A_1$ excluding at least one of the heavy points (see the bounds above). Hence, the right-hand side of \eqref{eq:aumann-feasible-quant-2-left} is negative and the inequality is satisfied. Consider the remaining case of $A_1=\left\{\frac{1}{2}-\frac{\varepsilon}{4}\right\}$ and $A_2$ either $\{0\}$ or $\{1\}$. The left-hand side of \eqref{eq:aumann-feasible-quant-2-left} equals $\frac{\varepsilon}{4}$ and $\int_{A_1} x\, \dd\p_1(x)=\left(\frac{1}{2}-\frac{\varepsilon}{4}\right)\cdot\frac{\varepsilon}{2}\leq \frac{\varepsilon}{4}$. Thus \eqref{eq:aumann-feasible-quant-2-left} is satisfied on all intervals.

\end{itemize}

This example also demonstrates that the condition of \cite{ziegler2020adversarial} (Theorem 1) is not sufficient for feasibility. 
Using our notation his condition can be written as follows:
\begin{align*}
  &\max\left\{
\int_0^a x\,d\p_1(x)+ \int_0^b x\,d\p_2(x)-p, \ \ \
\int_0^a (1-x)\,d\p_1(x)+ \int_0^b (1-x)\,d\p_2(x)-(1-p)           \right\} \leq  \\
&\leq \p([0,a]\times [0,b])\leq \\
&\leq\min\left\{
\int_0^a (1-x)\,d\p_1(x)+ \int_0^b x\,d\p_2(x), \ \ \
\int_0^a x\,d\p_1(x)+ \int_0^b (1-x)\,d\p_2(x)\right\}. \\
&\text{for every } a,b\in [0,1]
\end{align*}
Simple computations show that for every given $a,b\in [0,1]$ this condition is \emph{equivalent} to four of our conditions (see Equation \ref{eq:aumann-feasible-quant-2}) for all possible combinations of $A_1=[0,a]$ or $A_1=[a,1]$ and $A_2=[0,b]$ or $A_2=[b,1]$. 

The condition for feasibility in \cite{ziegler2020adversarial} is essentially the same condition as our condition \eqref{eq:aumann-feasible-quant-2}, but restricted to interval subsets $A_1,A_2 \subset [0,1]$, where each interval has a boundary point either at 0 or at 1. If the marginals $\p_1$ and $\p_2$ are supported on at most two points, such intervals exhaust all possible non-trivial sets in Equation \ref{eq:aumann-feasible-quant-2};  in this case, Ziegler's condition is necessary and sufficient for feasibility. However, the example above demonstrates that this condition becomes insufficient if the support contains at least $3$ points.

\section{Singularity of extreme measures with given marginals}\label{sec_Linden}

In this section we formulate an extension of a celebrated result of \citet{lindenstrauss1965remark} regarding extreme points of the set of measures with given marginals. The extension is two-fold: the classic result assumes $n=2$ and uniform marginals and we get rid of both assumptions. This extension can also be deduced from a more general statement by \cite{shortt1986singularity}, which allows for non-orthogonal multidimensional projections.

For the reader's convenience, we include a proof similar to Lindenstrauss's.
\begin{theorem}[$n=2$ \citep{lindenstrauss1965remark}; $n\geq 2$ \citep{shortt1986singularity}]\label{th_Linden}
Any extreme point $\p$ of the set of probability measures on $[0,1]^N$ with given one-dimensional marginals $\nu_i\in \Delta([0,1]),$ $i\in N$, is singular with respect to the Lebesgue measure $\lambda$ on the unit cube. In other words, for each extreme $\p$ there exists a measurable set $B$ such that $\p(B)=1$ and $\lambda(B)=0$.
\end{theorem}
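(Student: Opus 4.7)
I argue the contrapositive: if $\p$ is not singular with respect to $\lambda$, then $\p$ is not an extreme point. Extremality of $\p$ among measures with fixed marginals $\nu_i$ is equivalent to the non-existence of a non-zero signed measure $\mu$ on $[0,1]^N$ with $|\mu|\leq \p$ as measures and $\mu_i=0$ for every $i\in N$: given such a $\mu$, the measures $\p\pm\mu$ are distinct probability measures with the same marginals as $\p$, averaging to $\p$. The plan is therefore to construct such a $\mu$ from the absolutely continuous component of $\p$.

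Write the Lebesgue decomposition $\p=f\,\dd\lambda+\sigma$ with $\sigma\perp\lambda$ and $f\in L^1(\lambda)$ not identically zero. Pick $c>0$ so that $\{f\geq c\}$ has positive Lebesgue measure, choose a Lebesgue density point $x^*$ of this set, and fix a small $\beta>0$. By the Lebesgue density theorem, there is a cube $Q\ni x^*$ with $\lambda(Q\setminus\{f\geq c\})\leq \beta\,\lambda(Q)$. Partition each coordinate projection of $Q$ into two adjacent subintervals $I_i^0,I_i^1$ of equal length $\eta$; the $2^N$ product cells $R_\epsilon=\prod_i I_i^{\epsilon_i}$ for $\epsilon\in\{0,1\}^N$ then tile $Q$.

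Consider the ``ideal'' signed measure $\tilde\mu=c\sum_{\epsilon}(-1)^{|\epsilon|}\lambda|_{R_\epsilon}$. Using $\lambda(I_i^0)=\lambda(I_i^1)=\eta$ and the combinatorial identity $\sum_{\epsilon\in\{0,1\}^{N-1}}(-1)^{|\epsilon|}=(1-1)^{N-1}=0$, valid for $N\geq 2$, a direct calculation gives
\[
\tilde\mu_i(A)=c\,\eta^{N-1}\Bigl(\sum_{\epsilon_i\in\{0,1\}}(-1)^{\epsilon_i}\lambda(A\cap I_i^{\epsilon_i})\Bigr)\Bigl(\sum_{\epsilon_{-i}\in\{0,1\}^{N-1}}(-1)^{|\epsilon_{-i}|}\Bigr)=0,
\]
so $\tilde\mu$ has vanishing marginals. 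If every $R_\epsilon$ were contained in $\{f\geq c\}$, then $|\tilde\mu|=c\,\lambda|_{\bigcup_{\epsilon}R_\epsilon}\leq c\,\lambda|_{\{f\geq c\}}\leq \p$ and we would be done with $\mu=\tilde\mu$. In general, set $\mu^0=\tilde\mu\big|_{\{f\geq c\}}$, which is automatically dominated by $\p$; its marginals differ from those of $\tilde\mu$ by residuals of $L^1$-norm $O(\beta)$. These residuals are annihilated by adding a correction signed measure $\mu^1$ supported inside $\bigcup_\epsilon R_\epsilon\cap\{f\geq c\}$. For $\beta$ small enough, the marginal map restricted to signed measures of density at most $c$ is surjective onto low-norm marginal pairs, because $\{f\geq c\}\cap Q$ still has non-trivial one-dimensional conditional supports in every coordinate. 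The resulting $\mu=\mu^0+\mu^1$ is non-zero (the contribution to the cell $R_{(0,\ldots,0)}$ dominates for $\beta$ small), dominated by $\p$, and has zero marginals, completing the proof.

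The main technical obstacle is the correction step, which handles the fact that the level set $\{f\geq c\}$ can be highly irregular and need not contain any rectangular sub-grid of $Q$. The combinatorial cancellation in the marginal computation is elementary, and the dimension hypothesis $N\geq 2$ enters only through the identity $(1-1)^{N-1}=0$. For $N=2$ the correction step admits Lindenstrauss's elegant direct implementation in a single rectangle; Shortt's extension adapts the same idea to higher dimensions.
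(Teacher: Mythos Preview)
Your overall strategy---contrapositive, Lebesgue decomposition, Lebesgue density point, perturbation with zero marginals dominated by $\p$---matches the paper's. The checkerboard measure $\tilde\mu$ and the verification that its marginals vanish via $(1-1)^{N-1}=0$ are fine. The gap is the ``correction step,'' which you explicitly flag as the main obstacle but then dispatch with an unproved assertion: ``the marginal map restricted to signed measures of density at most $c$ is surjective onto low-norm marginal pairs, because $\{f\geq c\}\cap Q$ still has non-trivial one-dimensional conditional supports in every coordinate.'' This is precisely the hard part and it is not justified. Two concrete problems: (i) having non-trivial conditional supports in each coordinate says nothing about realizing an \emph{arbitrary} prescribed $N$-tuple of small marginals by a bounded-density signed measure on the irregular set $\{f\geq c\}\cap Q$; (ii) since $|\mu^0|$ already has density exactly $c$ on its support, adding any nonzero $\mu^1$ supported there can push $|\mu^0+\mu^1|$ above $c$, and you only know $f\geq c$, so the domination $|\mu|\leq \p$ is in jeopardy. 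Rescaling $\tilde\mu$ by $\tfrac12$ buys some headroom for (ii), but (i) remains open and is essentially equivalent in difficulty to the whole theorem.

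The paper sidesteps exactly this difficulty by dualizing. Rather than constructing the perturbation directly, it observes that a nonzero bounded density $\theta$ with zero marginals exists (by Hahn--Banach) if and only if the closed subspace of separable functions $\sum_i g_i(x_i)$ is proper in $L^1(\mu)$. All the technical work then goes into an explicit quantitative estimate showing that an orthant indicator $h$ is bounded away from this subspace, using the identity that the value of a separable function on the ``top'' sub-cube is determined by its values on the adjacent sub-cubes. Your remark that Lindenstrauss's argument is a ``direct implementation'' of the correction step is misleading: Lindenstrauss's proof (which the paper follows and extends to $n\geq 2$) is exactly this Hahn--Banach/non-density route, not an explicit marginal-correction construction.
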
	
\begin{proof}
Assume the converse: $\p$ is not singular. By the Lebesgue decomposition theorem the continuous part of $\p$ can be singled out: $\p=\mu+\lambda^{\perp}$, where $\mu\ne 0$ is absolutely continuous with respect to $\lambda$ ($d\mu=f\dd\lambda$ with a non-negative integrable density $f$), $\lambda^\perp$ is singular with respect to $\lambda$. 

%We denote by $\cL^p(\mu)$ the family of Lebesgue spaces with respect  $\mu$-integrable functions on the unit cube ()

Consider the Lebesgue space $\cL^1(\mu)$ of integrable functions with respect to $\mu$ (defined $\mu$-everywhere) and its closed subspace $\mathcal{S}$ generated by ``separable''\footnote{We refer here to separability in economic sense.} functions $g(x_1,\ldots,x_n)=\sum_{i\in N} g_i(x_i)$, where $g_i \in \cL^1(\mu)$, $i\in N$, depends on the variable $x_i$ only.

Let's assume that $\mathcal{S}\ne \cL^1(\mu)$, i.e., separable integrable functions are not dense in all integrable (we check this condition at the end of the proof). Now we show that under this assumption, $\mu$ can be represented as a convex combination of $\frac{1}{2}\mu'+\frac{1}{2}\mu''$, where $\mu'$ and $\mu''$ are distinct but have the same marginals. By the Hann-Banach theorem there exists a continuous functional $\theta$ of norm $1$ such that $\theta$ is identically zero on $\mathcal{S}$. Since the dual space to  $\cL^1(\mu)$ is the space $\cL^\infty(\mu)$ of essentially-bounded functions, the functional $\theta$ can be identified with a non-zero function $\theta(x_1,\ldots,x_n)$ bounded by $1$ in absolute value. The condition of vanishing on $\mathcal{S}$ reads as
\begin{equation}
\int  \sum_{i\in N} g_i(x_i) \cdot \theta (x_1,\ldots,x_n)\, \dd\mu=0,\qquad \forall g_i=g_i(x_i)\in \cL^1(\mu), \ \ i\in N.
\end{equation}
We see that the measure $\theta\,\dd\mu$ is non-zero but has zero marginals! Define $\mu'$ and $\mu''$ as  $d\mu'=\left(1-\theta\right)\dd\mu$ and $d\mu'=\left(1+\theta\right)\dd\mu$. By the construction, $\mu$ is the average of $\mu'$ and $\mu''$, $\mu'$ and $\mu''$ are distinct, and all measures $\mu$, $\mu'$, and $\mu''$ have the same marginals.  Thus $\p$ is also represented as the average of $\mu'+\lambda^\perp$ and $\mu''+\lambda^\perp$, i.e., $\p$ is not an extreme point. Contradiction. This contradiction completes the proof (under the assumption that $\mathcal{S}\ne \cL^1(\mu)$).

Now we check that $\mathcal{S}\ne \cL^1(\mu)$. Our goal is to construct a function $h\in \cL^1(\mu)$ such that for any $g(x_1,\ldots,x_n)=\sum_{i\in N} g_i(x_i)\in \cL^1(\mu)$ the $\cL^1$-distance $\int |h-g|\,\dd\mu$ is bounded below by some positive constant independent of $g$. 

Recall that $\dd\mu=f\dd\lambda$. Fix $\delta>0$ such that the set $A_\delta=\{x\in [0,1]^N \, : \, f(x)\geq \delta\}$ has non-zero Lebesgue measure. Fix another small constant $\varepsilon>0$; by the Lebesgue density theorem applied to $A_\delta$, there exist a point $x^0\in (0,1)^N$ and a number $a>0$ such that for the cube $C=\prod_{i\in N} [x_i^0-a,x_i^0+a)\subset [0,1]^N$ of size $2a$ centered at $x^0$ the following inequality holds $\frac{\lambda(C \setminus A_\delta)}{\lambda(C)}\leq \varepsilon$. 

Define $h(x)=1$ if $x_i\geq x_i^0$ for all $i\in N$, and $h(x)=0$, otherwise. 

Cut the cube $C$ into $2^n$ small cubes indexed by subsets of $N$: for $M\subset N$ the cube $C_M$ is given by  $\prod_{i\in N\setminus M} \big[x_i^0-a,\,x_i^0\big)\times \prod_{i\in M} \big[x_i^0,\,x_i^0+a\big)$. No function $g(x)=\sum_{i\in N} g_i(x_i)$ can approximate $h$ well on all the small cubes at the same time. Intuition is the following: since $h$ is zero in all the cubes except $C_N$, then $g$ must be close to zero on these cubes; however, values on these cubes determine values of $g$ on $C_N$ by
\begin{equation}\label{eq_g_identity}
g(x)=\frac{1}{n-1}\left(\left(\sum_{i\in N} g\big(x_1,\ldots,x_{i-1},\,x_i-a,\,x_{i+1},\ldots,x_n\big)\right)-g\big(x_1-a,\,x_2-a,\ldots,x_n-a\big)\right),
\end{equation}
therefore $g$ is close to zero on $C_N$ and cannot approximate $h$ well.

To formalize this intuition, we assume that $\int_{C_M}|h-g|\dd\mu$ is less then some $\alpha\cdot \lambda(C_M)$ for any $M\subset N$ and show that this constant $\alpha$ cannot be too small. For  $M\ne N$ we get
$\int_{C_M} |g|\cdot f\,\dd\lambda \leq \alpha\lambda(C_M)$. Applying the Markov inequality on the set $C_M\cap A_\delta$ and taking into account that this set is big enough (by the construction of the original cube, $\frac{\lambda(C_M \setminus A_\delta)}{\lambda(C_M)}\leq 2^n\varepsilon$), we obtain existence of a set $B_M\subset C_M$ such that $|g(x)|\leq \frac{\sqrt{\alpha}}{\delta}$ on $B_M$ and 
$\frac{\lambda( C_M \setminus B_M)}{\lambda(C_M)}\leq 2^n\varepsilon+\sqrt{\alpha}$.

Consider a subset $B^*$ of $C_N$ such that, whenever $x\in B^*$, all the arguments of the right-hand side in~\eqref{eq_g_identity} belong to respective subsets $B_M$, i.e., $B^*=\cap_{i\in N} \left(B_{N\setminus\{i\}}+a\cdot {e_i}\right)\cap \left(B_{\emptyset}+a\cdot\sum_{i\in N} {e_i}\right)$, where sets $B_M$ are translated by the elements of the standard basis $(e_i)_{i\in N}$. The union bound implies that the set $B^*$ is dense enough in $C_N$: $\frac{\lambda( C_N \setminus B^*)}{\lambda(C_N)}\leq (n+1)\left(2^n\varepsilon+\sqrt{\alpha}\right)$. By formula~\eqref{eq_g_identity}, the absolute value of $g$ is bounded by $\frac{n}{n-1}\frac{\sqrt{\alpha}}{\delta}$  on $B^*$. We get the following chain of inequalities: 
\begin{align} 
\label{eq_chain_lambda}
\alpha 
&\geq  \frac{1}{\lambda(C_N)}\int_{C_N}|h-g|\,\dd\mu \nonumber\\
&\geq \frac{1}{\lambda(C_N)}\int_{B^*\cap A_\delta} |h-g|\cdot f\,\dd\lambda \nonumber\\
&\geq \left(1-\frac{n}{n-1}\frac{\sqrt{\alpha}}{\delta}\right)\cdot \delta\cdot  \frac{\lambda(B^*\cap A_\delta)}{\lambda(C_N)} \nonumber\\ \noindent
&\geq\left(\delta-\frac{n}{n-1}{\sqrt{\alpha}}\right)\left(1-(n+1)\left(2^n\varepsilon+\sqrt{\alpha}\right)-2^n\varepsilon\right).
\end{align}
Denote by $\alpha^*=\alpha^*(\delta, n,\varepsilon)$ the minimal value of $\alpha\geq 0$ satisfying the inequality created by the head and the tail of~\eqref{eq_chain_lambda}. The constant $\varepsilon$ is a free parameter in the construction. Selecting it to be small enough  (namely $(n+2)2^n\varepsilon< 1$), we ensure that  $\alpha^*>0$. 

For any $g=\sum_{i\in N} g_i(x_i)$
$$\int_{[0,1]^N} |h-g|\,\dd\mu\geq \max_{M\subset N}\int_{C_M} |h-g|\,\dd\mu\geq \alpha^*(\delta,n,\varepsilon)\cdot 2^{-n}\lambda(C)>0.$$
Since the constant on the right-hand side is independent of $g$ and positive, we see that $h$ cannot be approximated by separable functions. Thus $h$ does not belong to $\mathcal{S}$ and $\mathcal{S}\ne\cL^1$.
\end{proof}

\section{Independent beliefs induced by Gaussian signals}\label{sect_Gaussian}
Let $\phi$ be the density of the standard Gaussian random variable: $\phi(t)=\frac{1}{\sqrt{2\pi}}e^{-\frac{t^2}{2}}$.

For prior $p=\half$, consider an agent who gets a signal $s\in\R$ distributed according to the Gaussian distribution with variance $1$ and mean equal to $d$ for the state $\omega=h$, and equal to $-d$ for the state $\omega=\ell$, so that the conditional distributions have the densities $\phi(s-d)$ and $\phi(s+d)$, respectively. The density $f(s)$ of the unconditional distribution is $$f(s)=\frac{1}{2}\phi(s-d)+\frac{1}{2}\phi(s+d)=\frac{1}{2\sqrt{2\pi}}e^{-\frac{d^2}{2}}e^{-\frac{s^2}{2}}\big(e^{ds}+e^{-ds}\big).$$ For the sake of definiteness we assume $d>0$.

Denote by $\nu$ the induced distribution of posteriors. By Bayes' Law, the posterior $x(s)$ upon receiving the signal $s$ is equal to 
\begin{align}
\label{eq:xs}
x(s)=\frac{\phi(s-d)}{\frac{1}{2}\phi(s-d)+\frac{1}{2}\phi(s+d)}\cdot \frac{1}{2}=\frac{e^{ds}}{e^{ds}+e^{-ds}}    
\end{align}
and 
$$
  \nu([0,t])=\int_{-\infty}^{x^{-1}(t)} f(s)\dd s.
$$

We are interested in the question of when $\nu\times\nu$ is $\half$-feasible. By Proposition~\ref{thm:product-feasible}, it is feasible if and only if it is a mean preserving contraction of the uniform distribution. The next lemma provides a simple sufficient condition for this property. 
\begin{lemma}\label{lm_simple_criterion_of_product_feasibility}
Let $\mu\in\Delta([0,1])$ be a non-atomic distribution, symmetric around $\half$, with cumulative distribution function $F(a) = \mu([0,a])$. Assume that there is at most one point $\frac{1}{2}<a<1$ such that $F(a)=a$ and that $F(x)-x>0$ for all $x$ close enough to $1$, but not equal to 1. Then the uniform distribution is a mean preserving spread of $\mu$  if and only if
\begin{equation}\label{eq_condition_for_feasibility_a_equals_1}
\frac{1}{8}\geq \int_{\frac{1}{2}}^1 \left(x-\frac{1}{2}\right)\, \dd \mu(x).
\end{equation}
\end{lemma}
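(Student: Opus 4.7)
The plan is to reduce the mean preserving spread condition to the single inequality $G(1/2) \geq 0$, where
\[
G(y) := \int_0^y \left(x - F(x)\right)\,dx.
\]
Recall that the uniform distribution is a mean preserving spread of $\mu$ iff the two share a mean (both are $\tfrac{1}{2}$ by symmetry) and $G(y) \geq 0$ for all $y \in [0,1]$. Note $G(0) = G(1) = 0$; since the symmetry of $\mu$ gives $F(x) = 1 - F(1-x)$, the function $F(x) - x$ is antisymmetric about $1/2$, and hence $G$ is symmetric about $1/2$.

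A short calculation identifies $G(1/2)$ with the quantity appearing in \eqref{eq_condition_for_feasibility_a_equals_1}: substituting $u = 1 - x$ in $G(1/2) = \int_0^{1/2}(x - F(x))\,dx$ gives $G(1/2) = \int_{1/2}^1 (F(x) - x)\,dx$, and integration by parts together with $\mu([1/2, 1]) = 1/2$ yields
\[
G(1/2) \;=\; \tfrac{3}{8} - \int_{1/2}^1 x\,d\mu(x) \;=\; \tfrac{1}{8} - \int_{1/2}^1 \left(x - \tfrac{1}{2}\right) d\mu(x).
\]
Thus \eqref{eq_condition_for_feasibility_a_equals_1} is exactly the inequality $G(1/2) \geq 0$.

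The heart of the argument is the reduction: $G \geq 0$ on $[0,1]$ iff $G(1/2) \geq 0$. Since $G'(y) = y - F(y)$ and $F$ is continuous (as $\mu$ is non-atomic), the interior critical points of $G$ are the fixed points of $F$; symmetry forces $1/2$ to be one, and by antisymmetry of $F - x$ any other fixed point $a \in (1/2, 1)$ is matched by $1 - a \in (0, 1/2)$. The hypotheses leave at most three interior critical points, $\{1-a,\ 1/2,\ a\}$, and force $F > x$ on a left-neighborhood of $1$ (hence, by antisymmetry, $F < x$ on a right-neighborhood of $0$). I would split on whether $F - x$ changes sign at $a$. If no such $a$ exists, or if $F - x$ does not change sign at $a$, then $F \geq x$ throughout $[1/2, 1]$ and $F \leq x$ throughout $[0, 1/2]$; hence $G$ is nondecreasing on $[0, 1/2]$ and nonincreasing on $[1/2, 1]$, so $G \geq 0$ automatically (and in particular $G(1/2) \geq 0$). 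If instead $F - x$ changes sign at $a$, then $F < x$ on $(1/2, a)$ and $F > x$ on $(a, 1)$, with mirror behavior on $(0, 1/2)$; $G$ then has a ``W'' profile with local maxima at $1-a$ and $a$ (both nonnegative, since $G$ rises from $G(0) = 0$ on $(0, 1-a)$ and $G(1-a) = G(a)$ by symmetry) and a unique interior local minimum at $1/2$, so $\min_{[0,1]} G = \min\{0, G(1/2)\}$. Either way, $G \geq 0$ iff $G(1/2) \geq 0$.

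The main obstacle is executing the case analysis carefully: the hypotheses limiting the fixed points of $F$ and pinning down the sign of $F - x$ near $1$ are precisely what rules out additional sign alternations that could create interior minima of $G$ at values other than $1/2$, which is exactly what makes the reduction to a single-point inequality possible.
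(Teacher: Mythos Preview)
Your proof is correct and takes essentially the same approach as the paper: both reduce the mean-preserving-spread condition to nonnegativity of the integrated difference function (your $G$, the paper's $H=-G$), use symmetry to restrict to $[\tfrac12,1]$, and exploit the hypotheses on fixed points of $F$ and the sign of $F(x)-x$ near $1$ to conclude that the global minimum of $G$ occurs at $\tfrac12$ or at an endpoint. The only difference is cosmetic: you carry out an explicit case analysis on the ``W'' shape of $G$, whereas the paper dispatches the same point with a short contradiction (an interior maximum of $H$ would force $H$ to increase to its right).
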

\begin{proof}
Necessity follows directly from the convex order definition of mean preserving spreads. Indeed, the condition for being a mean preserving contraction of the uniform distribution  is equivalent to  
\begin{equation}\label{eq_2nd_order}
 H(y)=\int_0^y (F(x)-x)\,\dd x\leq 0\quad\quad \forall y\in [0,1].   
\end{equation} 
Integration by parts implies that for $a=\half$ this condition becomes exactly~\eqref{eq_condition_for_feasibility_a_equals_1}.

To prove sufficiency, we must check the inequality~\eqref{eq_2nd_order} for all $y\in [0,1]$. Symmetry of $\mu$ implies $H(y)=H(1-y)$ and thus we can focus on the right sub-interval $y\in [\half,1]$. It is enough to check that $\max_{y\in [\half,1] } H(y)\leq 0$. We claim that the maximum is attained at one of the end-points. Indeed, if there is an internal maximum $a\in (\half,1)$, then  the derivative $H'(a)=F(a)-a=0$. On the whole interval $(a,1)$, the derivative $H'(y)$ has a constant sign since it is continuous and  $y=a$ is the unique zero in $\left(\frac{1}{2},1\right)$.
By assumption, $F(x)-x>0$ for $x$ close to $1$
and, hence, the derivative is positive on $(a,1)$ and $H$ is increasing on this interval. Therefore, the point $a$ cannot be a maximum of $H$ on $[\half,1]$.
Thus~\eqref{eq_2nd_order} holds if and only if it holds on the end-points $y=\half$ and $y=1$.
For  $y=\half$, this inequality coincides with~\eqref{eq_condition_for_feasibility_a_equals_1} and for $y=1$ it is trivial.
\end{proof}

Below we check that $\nu$, as induced by Gaussian signals, satisfies the assumptions of Lemma~\ref{lm_simple_criterion_of_product_feasibility} and, therefore, $\nu\times \nu$ is $\half$-feasible if and only if 
$\int_{\frac{1}{2}}^1 \left(x-\frac{1}{2}\right)\,\dd \nu \leq \frac{1}{8}$. This is equivalent to $\int_{\frac{1}{2}}^1  x\,\dd \nu \leq \frac{3}{8}$.

We can make this condition more explicit by rewriting its left-hand side as 
$$\int_{\frac{1}{2}}^1 x\,\dd \nu(x) = \int_0^\infty x(s)\cdot f(s) \,\dd s= \frac{1}{2}\int_0^\infty \phi(s-d)\,\dd s= \frac{1}{2}\int_{-\infty}^d \phi(t)\, \dd t,$$
where in the last equation we applied the change of variable $s-d=-t$.  This results in the following condition of feasibility:
$$\int_{-\infty}^{d} \phi(t)\, \dd t\leq  \frac{3}{4}, $$
i.e., $d$ must be  below the $\frac{3}{4}$-quantile of the standard normal distribution.\footnote{Note that in this computation, we do not use the explicit formula for $\phi$. In particular, one gets the same condition of feasibility for any absolutely-continuous distribution of signals on $\R$, if the induced distribution of posteriors $\nu$ satisfies the assumptions of Lemma~\ref{lm_simple_criterion_of_product_feasibility}.}

It remains to prove that $\nu$ indeed satisfies the assumptions of Lemma~\ref{lm_simple_criterion_of_product_feasibility}. We check that $\nu([0,x])-x$ is positive for $x$ close to $1$. Recall that we can write the belief $x$ as a function $x(s)$ of the signal $s$ using \eqref{eq:xs}. We denote the derivative of $x(s)$ by $x'(s)$. Substituting $x=x(s)$, we get
$$\nu([0,x(s)])-x(s)=1-x(s)-\nu([x(s),+\infty))= \int_{s}^\infty \left(x'(t)-f(t)\right)\dd t.$$
For large $t$, we have 
$$
  f(t)=\frac{1}{2\sqrt{2\pi}}e^{-\frac{d^2}{2}}\cdot e^{-\frac{t^2}{2}+d\cdot t}(1+o(1))\ \ \ \ \mbox{and} \ \ \ \ x'(t)=\frac{2d}{\big(e^{d\cdot t}+e^{-d \cdot t}\big)^2}=2d\cdot e^{-2d \cdot t}(1+o(1)).$$
Therefore,  the asymptotic behavior of the integrand is dictated by $x'(t)$:
$$x'(t)-f(t)=x'(t)(1+o(1)).$$
The integrand is positive for $t$ large enough. This implies the desired positivity of $\nu([0,x(s)])-x(s)$ for large values of $s$.

Now we check that there is at most one point $a\in\left(\frac{1}{2},1\right)$ such that $F(a)=a$ or, equivalently, there is at most one point $s\in(0,\infty)$ such that
$\int_{-\infty}^s f(t)\,\dd t- x(s)=0$. Denote $G(s)=\int_{-\infty}^s f(t)\,\dd t- x(s)$. The function $G$ is smooth and $G(0)=\lim_{s\to+\infty} G(s)=0$. If $G$ has $k$ zeros in $(0,\infty)$, then it also has at least $k+1$ extrema (minima or maxima) in this interval, hence, at least $k+1$ critical points (zeros of the derivative $G'$). We will show that there are no more than two critical points and thus $G$ has at most one zero. The equation for critical points takes the following form
$$G'(s)=0 \ \Longleftrightarrow \ \frac{1}{2\sqrt{2\pi}}e^{-\frac{d^2}{2}}e^{-\frac{s^2}{2}}\big(e^{d\cdot s}+e^{-d \cdot s}\big)-\frac{2d}{\big(e^{d\cdot s}+e^{-d\cdot s }\big)^2}=0$$
and can be rewritten as
$$e^{-\frac{s^2}{2}}\big(e^{d \cdot s}+e^{-d\cdot s}\big)^3=4d\sqrt{2\pi}e^{\frac{d^2}{2}}.$$
Denote the left-hand-side by $H(s)$. The graph of $H$ for $s\geq 0$ can intersect any given level at most twice since $H'$ has at most one zero in $(0,+\infty)$. Indeed, $H'(s)=\left(s-3d\cdot \mathrm{tanh}(dq) \right)\cdot H(q)$ and the equation $q-3d\cdot \mathrm{tanh}(dq)=0$ has at most one positive solution by concavity of the hyperbolic tangent on $[0,+\infty)$.
Therefore, $G$ has at most $2$ critical points and thus at most one zero in $(0,\infty)$, which completes the argument and justifies the application of Lemma~\ref{lm_simple_criterion_of_product_feasibility} to $\nu$.

\end{document}